\newlist{todolist}{itemize}{2}
\setlist[todolist]{label=\rlap{$\Box$}{\phantom{\cmark}}}
\newcommand{\cmark}{\ding{51}}%
\newcommand{\PTIME}{\textnormal{\textsf{PTIME}}}
\newcommand{\PSPACE}{\textnormal{\textsf{PSPACE}}}
\newcommand{\ALOGSPACE}{\textnormal{\textsf{ALOGSPACE}}}
\newcommand{\NLOGSPACE}{\textnormal{\textsf{NLOGSPACE}}}
\newcommand{\coNLOGSPACE}{\textnormal{\textsf{coNLOGSPACE}}}
\newcommand{\NP}{\textnormal{\textsf{NP}}}
\newcommand{\coNP}{\textnormal{\textsf{coNP}}}
\newcommand{\FPT}{\textnormal{\textsf{FPT}}}
\newcommand{\pitwop}{\Pi_2^p}
\newcommand{\ONETHREESAT}{\textnormal{\textsf{1-IN-3-SAT}}}
\newtheorem{theorem}{Theorem}[section]
\newtheorem{example}[theorem]{Example}
\newtheorem{proposition}[theorem]{Proposition}
\newtheorem{lemma}[theorem]{Lemma}
\newcommand{\cA}{\mathcal{A}}
\newcommand{\eol}{\rotatebox[origin=l]{180}{\ensuremath\Rsh}}
\newcommand{	\VV}{\mathcal{V}}
\newcommand{	\OO}{\mathcal{O}}
\newcommand{	\sub}{\text{span}}
\newcommand{	\var}{\text{var}}
\newcommand{	\ivar}{\text{ivar}}
\newcommand{	\dom}{\text{dom}}
\newcommand{\sem}[1]{\llbracket#1\rrbracket}
\newcommand{\semp}[1]{[#1]}
\newcommand{\semg}[1]{\llbracket #1 \rrbracket_{d}}
\newcommand{\semd}[1]{\sem{#1}_{d}}
\newcommand{\sempd}[1]{\semp{#1}_{d}}
\newcommand{\trans}[2][]{\raisebox{-1pt}[10pt][0pt]{$\overset{#2}{\underset{^{#1}}{\raisebox{0pt}[3pt][0pt]{$\relbar\mspace{-8mu}\rightarrow$}}}$}}
\newcommand{\VA}{\ensuremath{\mathrm{VA}}}
\newcommand{\seqVA}{\ensuremath{\mathrm{seqVA}}}
\newcommand{\Stk}{\ensuremath{\mathrm{stk}}}
\newcommand{\VASet}{\ensuremath{\mathrm{VA_{set}}}}
\newcommand{\VAStk}{\ensuremath{\mathrm{VA_{\Stk}}}}
\newcommand{\PUStk}{\ensuremath{\mathrm{PU_{stk}}}}
\DeclareMathOperator{\runs}{Runs}
\DeclareMathOperator{\aruns}{ARuns}
\newcommand{\RGX}{\ensuremath{\mathrm{RGX}}}
\newcommand{\fRGX}{\ensuremath{\mathrm{funcRGX}}}
\newcommand{\SRGX}{\ensuremath{\mathrm{spanRGX}}}
\newcommand{\seqRGX}{\ensuremath{\mathrm{seqRGX}}}
\newcommand{\Vars}{\mathcal{V}}
\newcommand{\Var}{\operatorname{var}}
\newcommand{\Open}[1]{#1\mkern-3mu\vdash}
\newcommand{\Close}[1]{\dashv\mkern-3mu#1}
\newcommand{\nemp}{\text{\sc NonEmp}}
\newcommand{\mdcheck}{\text{\sc ModelCheck}}
\newcommand{\EVAL}{\text{\sc Eval}}
\newcommand{\sat}{\text{\sc Sat}}
\newcommand{\containment}{\text{\sc Containment}}
\newcommand{\LL}{{\cal L}}
\newcommand{\spa}{\text{\textvisiblespace}}
\DeclareMathOperator{\Perm}{Perm}
\DeclareMathOperator{\Op}{Op}
\newcommand{\secref}[1]{\lowercase{\ref{#1}}}
\title{Document Spanners for Extracting Incomplete Information: Expressiveness and Complexity}
\author{
%
%
Francisco Maturana\\
       \affaddr{PUC Chile}\\
       \email{fjmaturana@uc.cl}
\and
Cristian Riveros\\
       \affaddr{PUC Chile}\\
       \email{cristian.riveros@uc.cl}
\and
Domagoj Vrgo\v{c} \\
       \affaddr{PUC Chile}\\
       \email{dvrgoc@ing.puc.cl}
}
\begin{document}
\maketitle

\sloppy

%
%

\begin{abstract}
	
Rule-based information extraction has lately received a fair amount of attention from the database community, with several languages appearing in the last few years. Although information extraction systems are intended to deal with semistructured data, all language proposals introduced so far are designed to output relations, thus making them incapable of handling incomplete information. To remedy the situation, we propose to extend information extraction languages with the ability to use mappings, thus allowing us to work with documents which have missing or optional parts. Using this approach, we simplify the semantics of regex formulas and extraction rules, two previously defined methods for extracting information, extend them with the ability to handle incomplete data, and study how they compare in terms of expressive power. We also study computational properties of these languages, focusing on the query enumeration problem, as well as satisfiability and containment.
	
\end{abstract}



\section{Introduction}
With the abundance of different formats arising in practice these days, there is a great need for methods extracting singular pieces of data from a variety of distinct files. This process, known as information extraction, or IE for short, is particularly prevalent in big corporations that manage systems of increasing complexity which need to incorporate data coming from different sources. As a result, a number of systems supporting the extraction of information from text-like data have appeared throughout the years \cite{KLRRVZ08,Cunningham02,SDNR07}, and the topic received a substantial coverage in research literature (see \cite{Kimelfeld14} for a good survey).

Historically, there have been two main approaches to information extraction: the {\em statistical} approach utilising machine-learning methods, and the {\em rule-based} approach utilising traditional finite-language methods. The latter approach has lately enjoyed a great amount of coverage in the database literature \cite{FKRV15,FKRV14,freydenberger2016document,AMRV16} showing interesting connections with logic, automata, or datalog-based languages. Furthermore, as argued by \cite{Kimelfeld14,ChiticariuLR13}, due to their simplicity and ease of use, rule-based systems also seem to be more prevalent in the industrial solutions.


Generally, most rule-based IE frameworks view documents as strings, which is a natural assumption for many formats in use today (e.g.\ plain text, CSV files, JSON documents). The information we want to extract is then represented by {\em spans}, which are simply intervals inside the string representing our document; that is, a span specifies a substring (i.e.\ the data) plus its starting and ending position inside the document.
The process of extracting information can then be captured by the notion of {\em document spanners}, which are simply operators that transforms an arbitrary string, i.e.\ a document, into a relation containing spans over this string.


In order to specify basic document spanners, most rule-based IE frameworks use some form of regular expressions with capture variables~\cite{CM99,FKRV15,AMRV16}. Perhaps the best example of this are the {\em regex formulas} of \cite{FKRV15}, which form the basis of IBM's commercial IE tool SystemT \cite{KLRRVZ08}. The main idea behind such expressions is quite natural: to use regular expressions in order to locate the span that is to be extracted, and then use variables to store this span. Once spans have been extracted using regular-like expressions, most IE frameworks allow combining them and controlling their structure through a variety of different methods. For instance, \cite{FKRV15} permits manipulating spans extracted by regex formulas using algebraic operations, while \cite{AMRV16} and \cite{SDNR07} deploy Datalog-like rules to define relations over spans.

And while several proposals for information extraction frameworks have appeared throughout the years \cite{CM99,SDNR07,FKRV15,AMRV16}, each of them offering significant advantages for the specific context they were designed to operate in, we believe that there are still some challenges not addressed by these languages, nor by the research literature as a whole. We next identify several such challenges which, once resolved, could lead to a better understanding of the information extraction process.

First, as already mentioned, the majority of methods for defining document spanners view information extraction as a process that defines a relation over spans. For example, in regex formulas of \cite{FKRV15}, all variables must be assigned in order to produce an output tuple, and a similar thing happens with extraction rules of \cite{AMRV16}. However, in practice we are often working with documents which have missing information or optional parts, and would therefore like to maximise the amount of information we extract. To illustrate this, consider a CSV file\footnote{CSV, or comma separated values, is a simple table-like format storing information separated by commas and new lines.} containing land registry records about buying and selling property. In Table \ref{tab-csv} we give a few rows of such a document, where $\spa$ represents space and $\eol$ the new line symbol. Some sellers in this file have an additional field which contains the amount of tax they paid when selling the property. If we are extracting information about sellers (for instance their names) from such a file, we would then like to also include the tax information when the latter is available. Unfortunately, most previous proposals (see e.g. \cite{FKRV15,AMRV16}) are not well suited for this task, as they require all the variables to be assigned in order to produce an output, thus causing us to miss some of the desired data.


\begin{table}
\begin{center}
{
\def\arraystretch{1.2}
\renewcommand{\familydefault}{\ttdefault}\normalfont
\begin{tabular}{l}
Seller:$\spa$John,$\spa$ID75\eol\\
{Buyer:}$\spa$Marcelo,$\spa$ID832,$\spa$P78\eol\\
Seller:$\spa$Mark,$\spa$ID7,$\spa$\$35,000\eol \vspace{-5pt}\\
\hspace{55pt}$\vdots$
\end{tabular}
}
\end{center}
\vspace*{-10pt}
\caption{Part of a CSV document containing information about buying and selling property.}
\label{tab-csv}
\end{table}

%
%
%

Another drawback of previous approaches to IE is that there is no agreement on the correct way to define the semantics of basic document spanners. For instance, up to date there is no fully declarative semantics for regex formulas of \cite{FKRV15}, and their meaning is usually given in a procedural manner: either through syntactic parse trees \cite{FKRV15}, or using automata \cite{F17}. Similarly, approaches such as \cite{AMRV16} allow assigning arbitrary spans to variables when these are not matched against the document, thus potentially extracting undesired, or even incorrect, information.

Finally, not much is known about how different information extraction frameworks compare in terms of expressive power, nor about their computational properties. For instance, although there is some work on evaluating specific IE languages \cite{freydenberger2016document,AMRV16,F17,FKP17}, we still do not have a good idea of which decision problems faithfully model the process of computing the (potentially exponential) output of the information extraction process, nor do we understand the complexity of the main static tasks associated with IE languages.

\smallskip
\noindent
\textbf{Contributions.} In order to alleviate some of the above issues, in this paper we propose to redefine the semantics of several previously introduced IE languages by making them output {\em mappings} in place of relations. This will not only allow us to capture incomplete information by making our spanners output partial mappings when some data is not available, but will also lend itself to defining a simple declarative semantics for multiple IE languages. This will then allow us to compare these languages in terms of expressive power, and make it easier to understand their computational properties such as query enumeration and query containment.

In particular, in what follows we will consider the regex formulas of \cite{FKRV15}, their automata analogue called variable-set automata \cite{FKRV15}, and extraction rules of \cite{AMRV16}. We first extend these formalism with the ability to output mappings, thus making them capable of handling incomplete information, and give a simple inductive definition of their semantics. As sanity check we then show that this new semantics indeed subsumes the previous proposals of \cite{FKRV15} and \cite{AMRV16}, while at the same time allowing for simple inductive proofs based on the expression syntax, and that the connections between regex formulas and variable-set automata established in \cite{FKRV15} are preserved when using mappings\footnote{Note that in this paper we do not consider the content operator of \cite{AMRV16}, nor the string selection of \cite{FKRV15}, since these do not directly extract information, but rather compare two pieces of existing data.}. Next, we compare the regex formulas of \cite{FKRV15} and extraction rules of \cite{AMRV16} in terms of expressive power. Here we show that, while the two approaches are generally incomparable, one can restrict and simplify extraction rules in a non trivial manner in order to obtain a class equivalent to regex. 

We also study the combined complexity of evaluating extraction expressions over documents. Here we isolate a decision problem which, once solved efficiently, would allow us to {\em enumerate all mappings} an expression outputs when matched to a document. Since the size of the answer is potentially exponential here, our objective is to obtain a polynomial delay algorithm \cite{johnson1988generating}; an enumeration algorithm that takes polynomial time between each output. As we show, this is generally not possible, but we do isolate well-behaved fragments of the three extraction languages we consider here, all of them based on the idea of sequentially extracting the data. We also analyse the evaluation problem parametrised by the number of variables and show that the problem is \emph{fixed parameter tractable}~\cite{FlumG06} for all expressions and automata models we consider.

Finally, we study static analysis of IE languages, focusing on satisfiability and containment. While satisfiability is $\NP$-hard for unrestricted languages, the sequentiality restriction introduced when studying evaluation allows us to solve the problem efficiently. On the other hand, containment is bound to be $\PSPACE$-hard, since all of our IE formalisms contain regular expressions, with a matching upper bound giving us completeness for the class. Since one way to lower this bound for regular languages is to consider deterministic models, we show how determinism can be introduced to IE languages and study how it affects the complexity.

\smallskip
\noindent{\bf Organisation.} We define documents, spans and mappings in Section~\ref{sec-prelim}. Expressions, automata and rules for extracting incomplete information are introduced in Section~\ref{sec-unifying}. Expressiveness of our languages is studied in Section~\ref{sec:comp}, and the complexity of their evaluation in Section~\ref{sec:evaluation}. We then tackle static analysis in Section~\ref{sec:complexity} and conclude in Section~\ref{sec-conc}. Due to space limitations most of the proofs are deferred to the appendix.

\section{Preliminaries}
\label{sec-prelim}

\noindent{\bf Documents and spans.} Let $\Sigma$ be a finite alphabet. A document $d$, from which we will extract information, is
a string over $\Sigma$. We define the length of $d$,
denoted by $|d|$, as the length of this string. 
As done in  previous approaches \cite{FKRV15,AMRV16}, we use the notion of a {\em span} to capture the part of a document $d$ that we wish to extract. Formally, a span $p$ of a document $d$ is a pair $(i,j)$ such that $1 \leq i \leq j \leq |d|+1$, where $|d|$ is the length of the string $d$. Intuitively, $p$ represents a continuous region of the document $d$, whose content is the infix of $d$ between positions $i$ and $j-1$. The set of all spans associated with a document $d$, denoted $\sub(d)$, is then defined as the set $\{ (i,j) \mid i,j \in \{1, \ldots, |d|+1\} \text{ and } i \leq j\}$. Every span $p = (i,j)$ of $d$ has an associated content, which is denoted by $d(p)$ or $d(i,j)$, and is defined as the substring of $d$ from position $i$ to position $j-1$.
Notice that if $i = j$, then $d(p) = d(i,j) = \varepsilon$. Given two spans $s_1 = (i_1, j_1)$ and $s_2 = (i_2, j_2)$, if $j_1 = i_2$ then their concatenation is equal to $(i_1, j_2)$ and it is denoted $s_1 \cdot s_2$.

As an example, consider the following document $d_0$, where the positions are enumerated and $\spa$ denotes the white space character:

\begin{center}
{
\setlength{\tabcolsep}{2pt}
  \renewcommand{\familydefault}{\ttdefault}\normalfont
  \def\arraystretch{0.8}
\begin{tabular}{cccccccccccccccccccccc}
\small{I} & \small{n} & \small{f} & \small{o} & \small{r} & \small{m} & \small{a} & \small{t} & \small{i} & \small{o} & \small{n} & $\spa$ & \small{e} & \small{x} & \small{t} & \small{r} & \small{a} & \small{c} & \small{t} & \small{i} & \small{o} & \small{n}\\
\tiny{1} & \tiny{2} & \tiny{3} & \tiny{4} & \tiny{5} & \tiny{6} & \tiny{7} & \tiny{8} &  \tiny{9} & \tiny{10} & \tiny{11} & \tiny{12} & \tiny{13} & \tiny{14} & \tiny{15} & \tiny{16} & \tiny{17} & \tiny{18} & \tiny{19} & \tiny{20} & \tiny{21} & \tiny{22} 
\end{tabular}
}
\end{center}
Here the length of $d_0$ is 22 and the span $p_0=(1,23)$ corresponds to the entire document. On the other hand, the span $p_1=(1,12)$ corresponds to the first word of our document and its content $d(p_1)=d(1,12)$ equals the string \texttt{Information}. Similarly, for the span $p_2=(13,23)$ we have that $d(p_2)=\texttt{extraction}$, i.e.\ it spans the second word of our document.

\smallskip

\noindent{\bf Mappings.} In the introduction we argued that the traditional approaches to information extraction that store spans into relations might be somewhat limited when we are processing documents which contain incomplete information. Therefore to overcome these issues, we define the process of extracting information from a document $d$ as if we were defining a partial function from a set of variables to the spans of $d$.
The use of partial functions for managing optional information has been considered before, for example, in the context of the Semantic Web~\cite{PAG09}.
Formally, let \(\VV\) be a set of variables disjoint from $\Sigma$. For a document $d$, a \textit{mapping} is a partial function from the set of variables \(\VV\) to \(\sub(d)\). The \textit{domain} \(\dom(\mu)\) of a mapping \(\mu\) is the set of variables for which \(\mu\) is defined. For instance, if we consider the document $d_0$ above, then the mapping $\mu_0$ which assigns the span $p_1$ to the variable $x$ and leaves all other variables undefined, extracts the first word from $d_0$.

Two mappings \(\mu_1\) and \(\mu_2\) are said to be \textit{compatible} (denoted
\(\mu_1 \sim \mu_2\)) if \(\mu_1(x) = \mu_2(x)\) for every \(x\) in
\(\dom(\mu_1) \cap \dom(\mu_2)\). If \(\mu_1 \sim \mu_2\), then \(\mu_1 \cup
\mu_2\) denotes the mapping that results from extending \(\mu_1\) with the
values from \(\mu_2\) on all the variables in \(\dom(\mu_2) \setminus
\dom(\mu_1)\). The \textit{empty mapping}, denoted by \(\emptyset\), is the
mapping such that \(\dom(\emptyset) = \emptyset\). Similarly, \([x \to s]\)
denotes the mapping that only defines the value of variable \(x\) and assigns it
to be the span~\(s\).
The \textit{join} of two set of mappings \(M_1\) and \(M_2\) is defined as
follows:
\begin{multline*}
  M_1 \Join M_2 = \{\mu_1 \cup \mu_2 \mid \mu_1 \in M_1 \text{ and }
  \mu_2 \in M_2 \\ \text{ such that } \mu_1 \sim \mu_2\}.
\end{multline*}

Finally, we say that a mapping \(\mu\) is \textit{hierarchical} if for every
\(x, y \in \dom(\mu)\), either: \(\mu(x)\) is contained in \(\mu(y)\),
\(\mu(y)\) is contained in \(\mu(x)\), or \(\mu(x)\) and \(\mu(y)\)
are disjoint. Similarly, a set of mappings is said to be hierarchical if it only contains hierarchical mappings.

\section{Extracting incomplete information}
\label{sec-unifying}

In this section we introduce three different mechanisms for extracting data: regex formulas \cite{FKRV15}, variable-set automata \cite{FKRV15}, and extraction rules \cite{AMRV16}; and redefine their semantics in order to support incomplete information. We do this by allowing them to output mappings in place of relations, which makes it possible to provide a simple uniform semantics for different IE approaches proposed in the literature. 


\subsection{Extracting information using RGX}\label{ss-regex}

Most previous approaches to IE \cite{FKRV15,SDNR07,Soderland99,AMRV16} use some form of regular expressions with capture variables in order to obtain the desired spans. Intuitively, in such expressions we use ordinary regular languages to move through our document, thus jumping to the start of a span that we want to capture. The variables are then used to store the desired span, with further subexpressions controlling the shape of the span. Borrowing the syntax from \cite{FKRV15}, we define our core class of extraction expressions, called variable regex, as follows.

Let \(\Sigma\) be a finite alphabet and \(\VV\) a set of variables disjoint with
\(\Sigma\). A \textit{variable regex} (\(\RGX\)) is defined by the following
grammar:
$$
  \gamma \coloneqq \varepsilon \mid
    a \mid
    x\{\gamma\} \mid
  \gamma \cdot \gamma \mid
  \gamma \vee \gamma \mid
  \gamma^{*}
$$
where $a \in \Sigma$ is a letter of the alphabet and $x \in \VV$ is a variable. For a \RGX\ $\gamma$ we define $\Var(\gamma)$ as the set of all  variables occurring in $\gamma$. In what follows we will often refer to variable regex ($\RGX$, resp.) as a regex formula ($\RGX$ formula, resp.).

Just as in the previously introduced IE languages, $\RGX$ use regular expressions to navigate the document, while a subexpression of the form $x\{\gamma\}$ stores a span starting at the current position and matching $\gamma$ into the variable $x$. For example, if we wanted to extract the name of each seller from the document in Table \ref{tab-csv}, we could use the following \RGX\:
$$\Sigma^*\cdot\texttt{Seller:}\spa\cdot x\{(\Sigma-\{\texttt{,}\})^*\}\cdot,\cdot\Sigma^*$$
where $\Sigma$ stands for the disjunction of all the letters of the alphabet, and where we do not use the concatenation $\cdot$ inside words (formally, the string \texttt{Seller:}$\spa$ should be written as the concatenation of each of its symbols). Here the subexpression $\Sigma^*\cdot\texttt{Seller:}\spa$ navigates to the position in our document, where the name of some seller starts. The variable $x$ then stores a string not containing a comma until it reaches the first comma; that is, it stores the full name of our seller. The remainder of the expression then simply matches the rest of the document.

Note that syntactically, our expressions are the same as the ones introduced in \cite{FKRV15}. 
The only explicit difference from \cite{FKRV15} (apart from the semantics -- see below) is that we do not allow the empty language $\emptyset$ in order to make some of the constructions more elegant. Adding this variant would not affect any of the results though.

\smallskip

\noindent{\bf Semantics.} In contrast to \cite{FKRV15}, our semantics views \RGX\ formulas as expressions defining mappings and not only relations. To illustrate how this works, consider again the document in Table \ref{tab-csv}, but now suppose that we want to extract the names of the sellers, and when available, also the amount of tax they paid (recall from the Introduction that not all rows have this information). For this, consider the following \RGX
$$\Sigma^*\cdot\texttt{Seller:}\spa\cdot x\{R'\}\cdot\texttt{,}\cdot R'\cdot (\ \texttt{,}\spa y\{(\Sigma-\{\eol\})^*\} \vee \varepsilon)\cdot \eol \cdot \Sigma^*,$$
where $R'=(\Sigma-\{\texttt{,},\eol\})^*$. Note that this expression extracts the information about the amount of tax paid into the variable $y$ only when this data is present in the document (otherwise it matches $\varepsilon$). This now defines two types of mappings: the first kind will contain only the names of sellers (stored in $x$), while the second kind will contain both the name and the amount of tax paid (stored in $y$) when the latter information is available.

\begin{table}
\begin{align*}
  \semd{\gamma} &= \{ \mu \mid ((1, |d| + 1), \mu) \in \sempd{\gamma} \}\\
  \sempd{\varepsilon} &= \{ (s, \emptyset) \mid s \in \sub(d) \text{ and } d(s)=\varepsilon\}\\
  \sempd{a} &= \{ (s, \emptyset) \mid s \in \sub(d) \text{ and } d(s) = a \}\\
  \sempd{x\{R\}} &= \{ (s, \mu) \mid \exists (s, \mu') \in \sempd{R}:\\
    &\phantom{{}={}} x \not\in \dom(\mu') \text{ and }
    \mu = [x \to s] \cup \mu' \}\\
  \sempd{R_1 \cdot R_2} &= \{ (s, \mu) \mid
    \exists (s_1, \mu_1) \in \sempd{R_1},\\
    &\phantom{{}={}} \exists (s_2, \mu_2) \in \sempd{R_2}: s = s_1 \cdot s_2, \\   
    &\phantom{{}={}} \dom(\mu_1) \cap \dom(\mu_2) = \emptyset, \text{ and }\\
    &\phantom{{}={}} \mu = \mu_1 \cup \mu_2 \}\\
  \sempd{R_1 \vee R_2} &= \sempd{R_1} \cup \sempd{R_2}\\
  \sempd{R^*} &= \sempd{\varepsilon} \cup \sempd{R} \cup \sempd{R^2} \cup
    \sempd{R^3} \cup \cdots\\
\end{align*}
\vspace*{-25pt}
\caption{The semantics \(\semd{\gamma}\) of a \(\RGX\) \(\gamma\) over a document \(d\). Here \(R^2\) is a shorthand for \(R \cdot R\), similarly \(R^3\) for \(R \cdot R \cdot R\), etc.}
\label{tab-semantics}
\end{table}


The full semantics of \RGX\ expressions is defined in Table \ref{tab-semantics}. As explained above, we view our expression $\gamma$ as a way of defining a partial mapping $\mu:\Var(\gamma)\rightharpoonup \sub(d)$. Our semantics has two layers, the first layer $\sempd{\gamma}$ defines which part of a document $d$ a subexpression of $\gamma$ parses, and what is the mapping defined thus far. For instance, the alphabet letter $a$ must match a part of the document equal to $a$ and it defines no mapping. On the other hand, a subexpression of the form $x\{R\}$ assigns to $x$ the span captured by $R$ (and preserves the previous variable assignments). Similarly, in the case of concatenation $R_1\cdot R_2$ we join the mapping defined on the left with the one defined on the right, while imposing that the same variable is not used in both parts (as this would lead to inconsistencies). The second layer of our semantics, $\semd{\gamma}$ then simply gives us the mappings that $\gamma$ defines when matching the entire document. 

Note that in the case of an ordinary regular expression we output the empty mapping (representing $\texttt{TRUE}$) when the expression matches the entire document and empty set (representing $\texttt{FALSE}$) when not, thus making \RGX\  a natural generalisation of ordinary regular expressions with the ability to extract spans. 

As the semantics of some operators might seem somewhat counter intuitive at first, we now explain how the recursive definition works by means of an example.
\begin{example}\label{ex-semantics}
To keep the presentation concise, we will consider the following document $d$:
\begin{center}
{
\setlength{\tabcolsep}{2pt}
  \renewcommand{\familydefault}{\ttdefault}\normalfont
  \def\arraystretch{0.8}
\begin{tabular}{cccccc}
\small{a} & \small{a} & \small{a} & \small{b} & \small{b} & \small{b}\\
\tiny{1} & \tiny{2} & \tiny{3} & \tiny{4} & \tiny{5} & \tiny{6}
\end{tabular}
}
\end{center}
If we consider the expression consisting of a single letter $a$, then the set  $\sempd{a}$ contains precisely three pairs: $((1,2),\emptyset), ((2,3),\emptyset),$ and  $((3,4),\emptyset)$, since the word spelled by each of these spans equals to the letter $a$.

On the other hand, if we consider the expression $x\{a\}$, then $\sempd{x\{a\}}$ contains the above spans, but it also assigns the span to the variable. Namely, $\sempd{x\{a\}}$ consists of the pairs $((i,i+1),\mu_i)$, where $\mu_i(x)=(i,i+1)$ and is undefined otherwise, and where $1\leq i \leq 3$. Notice, however, that $\semd{x\{a\}}$ is empty, since none of the pairs $((i,i+1),\mu_i)$ contains a span representing the entire document $d$.

To illustrate how concatenation works, consider now the expression $x\{a^*\}\cdot y\{b^*\}$. Here $\sempd{a^*}$ contains any span that spells zero or more $a$s, such as for example $((1,4),\emptyset)$, or $((5,5),\emptyset)$. Note that the latter matches $a^*$, as it spells the empty string. Similarly, $\sempd{b^*}$ will contain, amongst others, the pairs $((4,5),\emptyset)$, or $((4,7),\emptyset)$. Because of this we have that $\sempd{x\{a^*\}}$ contains the pair $((1,4),\mu_1)$, with $\mu_1(x)=(1,4)$, while $\sempd{y\{b^*\}}$ contains the pair $((4,7),\mu_2)$, where $\mu_2(y)=(4,7)$. The latter two allow us to ``concatenate" the two pairs in $\sempd{x\{a^*\}}$ and $\sempd{y\{b^*\}}$ to obtain a pair $((1,7),\mu)$, with $\mu(x)=(1,4)$, and $\mu(y)=(4,7)$. Note that this also implies that $\mu \in \semd{x\{a^*\}\cdot y\{b^*\}}$, since its corresponding span equals the entire document.

Notice that ``concatenating" $\mu_1$ and $\mu_2$ above is possible, since they share no variables. If we were dealing with an expression of the form $x\{a^*\}\cdot x\{b^*\}$, this would no longer be the case, and no mapping would be produced as the output of the expression. Some other ``pathological" cases such as $x\{x\{R\}\}$, which wants to bind $x$ inside itself, are also limited by our semantics, as this formula can never output any mappings.

On the other hand, some formulas that intuitively can make sense, but were not covered by the definition of functional regex in \cite{FKRV15}, have a clearly defined semantics in our setting. One such example would be the expression $e = (x\{(a\vee b)^*\} \vee y\{(a\vee b)^*\})^*$, which uses a Kleene star over a subexpression containing variables. If evaluated over the document $d$, this expression can output several mappings. For instance, we have that $((1,4),\mu_1)\in\sempd{y\{(a\vee b)^*\}}$, with $\mu_1(y)=(1,4)$ and that $((4,7),\mu_2)\in \sempd{x\{(a\vee b)^*\}}$, with $\mu_2(x)=(4,7)$. From this we can conclude that $\mu\in \semd{e}$, where $\mu = \mu_1 \cup \mu_2$.
\qed
\end{example}

It is worthwhile mentioning that the denotational semantics introduced here is much simpler than the semantics of variable regex defined in~\cite{FKRV15}. In Table~\ref{tab-semantics}, we give the semantics of our framework directly in terms of spans and mappings.
On the other hand, the semantics of variable regex in \cite{FKRV15} is given through the so-called \emph{parse trees}: syntactical structures that represent the evaluation of an expression over a document, while in \cite{F17}, the semantics uses reference words and projection functions.
We believe that one important contribution of our work is the simplification of the semantics by using mappings, which could help in the future to better understand variable regex and other IE languages.

Of course, there are ways to allow adding partial information in regex formulas without using mappings. For instance, one could simply map each variable that does not get assigned to the empty span $\varepsilon$. That is, an expression of the form $x\{R\}$ could be replaced with $x\{R\vee \varepsilon\}$, with $\varepsilon$ signifying that the variable is not assigned. One problem with this approach is that the term $\varepsilon$ already has a meaning in regex that is reserved for the empty word, which one would sometimes like to assign (e.g. to specify a landmark, or in the Kleene closure). Similarly, one could introduce a special \texttt{NULL} value to denote a variable that is not assigned, and add a \texttt{NULL} expression into regex formulas to signify that a subexpression was not matched to any span. The main drawback of this approach is that it would change the syntactic structure of regex, making them somewhat more cumbersome and less intuitive. On the other hand, none of these problems are present when we use mappings, as they both preserve the syntax of regex formulas, do not overwrite the previously defined semantics in border cases, and offer an elegant general definition encompassing other approaches and simplifying the definitions of \cite{FKRV15,F17}, while at the same time being fully declarative.

\subsection{Automata that extract information}\label{ss-automata}

In this subsection, we define automata models that support incomplete information extraction. Just as with \RGX, the definitions of the automata models come from \cite{FKRV15}, however, we need to redefine the semantics to support mappings. 

A \emph{variable-set automata} (\(\VA\)) is an automata model extended with captures variables in a way analogous to \RGX; that is, it behaves as a usual finite state automaton, except that it can also open and close variables. 
Formally, a $\VA$ automaton $A$ is a tuple \((Q, q_0, q_f, \delta)\),
where \(Q\) is a finite set of \textit{states}, $q_0$ and $q_f$ are the initial and the final state, respectively, and $\delta$
is a \textit{transition relation} consisting of letter transitions $(q, a, q')$, and variable transitions $(q, \Open{x}, q')$ or $(q,
\Close{x}\,,q')$, where $q, q' \in Q$, $a \in \Sigma$ and $x \in \VV$. 
The \(\vdash\) and \(\dashv\) are special symbols to denote the opening or closing of a variable $x$.
We define the set $\Var(A)$ as the set of all variables $x$ such that $\Open{x}$ appears in some transition of $A$.

\smallskip

\noindent{\bf Semantics.} A configuration of a $\VA$ automaton over a document $d$ is a tuple $(q, i)$ where $q \in Q$ is the current state and \(i \in [1, |d| + 1]\) is the \textit{current position} in $d$.
A run $\rho$ over a document $d = a_1 a_2 \cdots a_n$ is a sequence of the form:
$$
\rho \ = \ (q_0, i_0) \ \trans{o_1} \ (q_1, i_1) \ \trans{o_2} \ \cdots \ \trans{o_m} \ (q_m, i_m)
$$
where $o_j \in \Sigma \cup \{\Open{x}, \Close{x} \mid x \in \VV\}$, $(q_j, o_{j+1}, q_{j+1}) \in \delta$ and $i_0, \ldots, i_n$ is an increasing sequence such that $i_0 = 1$, $i_m = |d|+1$, and $i_{j+1} = i_j +1$ if $o_{j+1} \in \Sigma$ (i.e.\ the automata moves one position in the word only when reading a letter) and $i_{j+1} = i_j$ otherwise. Furthermore, $\rho$ must satisfy that variables are opened and closed in a correct manner, that is, each $x$ is opened or closed at most once and, if $x$ is closed at some position, then there must exists a previous position in $\rho$ where $x$ was opened. Note that we allowed $A$ to open $x$ without closing it, assuming that $x$ was never used in this case.
We say that $\rho$ is \emph{accepting} if $q_m = q_f$ in which case we define the mapping $\mu^{\rho}$ that maps $x$ into $(i_j, i_k) \in \sub(d)$ if, and only if, $o_{i_j} = \Open{x}$ and $o_{i_k} = \Close{x}$ in~$\rho$.
Finally, the semantics of $A$ over $d$, denoted by \(\semd{A}\)
is defined as the set of all $\mu^{\rho}$ where $\rho$ is an accepting run of $A$ over $D$.

Following \cite{FKRV15} we also redefine the semantics of the so-called  {\em variable-stack automata} (\(\VAStk\)), a restricted class of $\VA$ which only allow defining mappings that are hierarchical as in the case of \RGX. The new version of variable-stack automata is almost identical to the one of $\VA$ automata above, but we now restrict to runs $\rho$ where variables are open and closed following a stack policy. To avoid repeating the same definition we refer the reader to either \cite{FKRV15} or the appendix of this paper for more details. 
%
Lastly, we say that a {\em \VA\ is hierarchical} if every mapping it produces is hierarchical.

\subsection{Extracting information using rules}\label{ss-rules}
In \cite{AMRV16} a simplified version of RGX, called span regular expressions, was introduced. Formally, {\em span regular expressions}, or \SRGX\ for short, are \RGX\ formulas where all subexpressions of the form \(x\{\gamma\}\)  have $\gamma=\Sigma^*$. That is, in $\SRGX$, we have no control over the shape of the span we are capturing, and we cannot nest variables. For simplicity, we will often omit ${\Sigma^*}$ after variables when showing these formulas and simply write e.g. $a\cdot x \cdot a^*$ to denote the expression $a\cdot x\{\Sigma^*\} \cdot a^*$.

In order to allow specifying the shape of a span captured by some variable, \cite{AMRV16} allows joining \SRGX\ formulas using a rule-like syntax similar to Datalog. For instance to specify that the span captured by the variable $x$ in the expression above must conform to a regular expression $R$, we would write $a\cdot x \cdot a^* \wedge x.R$.

To define such rules formally, in our language 
we will allow two types of formulas: $R$ and $x.R$, where
$R$ is a \SRGX\ formula and $x$ a variable. The former is meant to be evaluated over the entire document,
while the latter applies to the span captured by the variable $x$. The semantics of the
extraction formula $R$  over a document $d$ is defined as in Table \ref{tab-semantics} above, and for $x.R$ as follows:
$$
  \semd{x.R} \ = \ \{ \mu \mid \exists s. (s,\mu)\in \sempd{x\{R\}} \}.
$$
We can now define rules for extracting information from a document as conjunctions
of extraction formulas. Formally,
 an \emph{extraction rule} is an expression of the form:
\begin{equation*} \label{eq-ann-rule}
\varphi = \varphi_0 \wedge x_1.\varphi_1 \wedge \cdots \wedge x_m.\varphi_m \tag{\dag}
\end{equation*}
where $m \geq 0$, all $\varphi_i$ are \SRGX\ formulas, and $x_i$ are variables\footnote{For simplicity we assume that there is only one formula applying to the entire document; namely $\varphi_0$. It is straightforward to extend the definitions below to include multiple formulas of this form.}. Extraction rules typically have an implication symbol and a \emph{head predicate}, which we will omit because it does not affect the analysis performed in this paper.

\smallskip

\noindent{\bf Semantics.} While \cite{AMRV16} has a simple definition of the semantics of extraction rules, lifting this definition to the domain of mappings requires us to account for nondeterminism of our expressions. What we mean by this is perhaps best captured by the rule $(x \vee y) \wedge x.(ab^*) \wedge y.(ba^*)$, where we first choose which variable is going to be mapped to the entire document, and then we need to satisfy its respective constraint. For instance, if $x$ is matched to the document, we want it to conform to the regular expression $ab^*$; however, in this case we do not really care about the content of $y$, so we should leave our mapping undefined on this variable.


Formally, we define when a rule of the form (\ref{eq-ann-rule}) is satisfied by a tuple of mappings $\overline{\mu}=(\mu_0,\mu_1,\ldots ,\mu_m)$. To avoid the problem mentioned above, we need the concept of {\em instantiated variables} in our tuple of mappings. 
For a rule $\varphi$ of the form (\ref{eq-ann-rule}) and a tuple of mappings $\overline{\mu}=(\mu_0,\mu_1,\ldots ,\mu_m)$ we define the set of \emph{instantiated variables}, denoted by $\ivar(\varphi,\overline{\mu})$ as the minimum set such that $dom(\mu_0)\subseteq \ivar(\varphi,\overline{\mu})$ and if $x_i\in \ivar(\varphi,\overline{\mu})$, then 
$dom(\mu_i)\subseteq \ivar(\varphi,\overline{\mu})$.
Intuitively, we want to put in $\ivar(\varphi,\overline{\mu})$ only the variables which are used in nondeterministic choices made by $\varphi$ and $\overline{\mu}$. For instance, in the rule $(x \vee y) \wedge x.(ab^*) \wedge y.(ba^*)$, if we decide that $x$ should be matched to our document, then we will not assign a value to the variable $y$ and vice versa. We now define that a tuple of mappings $\overline{\mu}=(\mu_0,\mu_1,\ldots ,\mu_m)$ satisfies $\varphi$ over a document $d$, denoted by $\overline{\mu}\models_d \varphi$, if the following three conditions hold: (1) $\mu_0\in \semd{\varphi_0}$; (2) $\mu_i \in \semd{x_i.\varphi_i}$  whenever $x_i\in \ivar(\varphi,\overline{\mu})$ and $\mu_i = \emptyset$ otherwise; and (3) $\mu_i\sim \mu_j$ for all $i,j$.
Here the last condition will allow us to ``join'' all the mappings capturing each subformula $\varphi_i$ into one. The problem with nondeterminism is handled by condition (2), since we force all instantiated variables to take a value, and the non-instantiated ones to be undefined. Finally, condition (1) starts from $\varphi_0$ which refers to the entire document and serves as a ``root'' for our mappings.

We can now define the semantics of an extraction rule $\varphi$ over a document
$d$ as follows:
$$
  \semd{\varphi} \ = \ \{ \mu \mid \exists \overline{\mu} \text{ such that } \overline{\mu}\models_d \varphi \text{ and } \mu = \bigcup_i \mu_i \},
$$
where $\bigcup_i \mu_i$ denotes the mapping defined as the union of all $\mu_i$.

\section{Expressiveness of IE languages}
\label{sec:comp}

In this section we compare how different IE approaches compare in terms of expressive power. We first show how the new semantics based on mappings subsumes the relation based semantics of $\RGX$ from \cite{FKRV15} and $\SRGX$ from \cite{AMRV16}. Next, we show that the results of \cite{FKRV15} comparing automata models from Section \ref{ss-automata} and regex formulas can be lifted to support incomplete information. We finish with a comparison of the rule-based language introduced in Section \ref{ss-rules} with $\RGX$.

\subsection{Mapping-based semantics and relation-based semantics}\label{ss-comparison}

Having the general definition of formulas which define mappings, we can now show how this framework subsumes regex formulas as defined in \cite{FKRV15} and span regular expressions from \cite{AMRV16}.

We start with regex formulas of \cite{FKRV15}. Although the expressions from \cite{FKRV15} use the same syntax as our \RGX\ formulas, the setting of \cite{FKRV15} dictates that document spanners always define relations. This automatically excludes expressions such as $R_1\vee R_2$ from Section~\ref{ss-regex} which allows mappings with different domains. What \cite{FKRV15} proposes instead is that each mapping defined by an expression assigns precisely the same variables every time (and also all of them); that is, we want our expressions to act as functions. As shown in \cite{FKRV15} there is a very easy syntactic criteria for this, resulting in functional \RGX\ formulas.

A \(\RGX\) \(\gamma\) is called \textit{functional with respect to the set of variables} $X$ (abbreviated as functional wrt $X$) if one of the following syntactic restrictions holds:
\begin{itemize}\itemsep=0pt
  \item \(\gamma \in \Sigma \cup \{\varepsilon\}\) and $X=\emptyset$.
  \item \(\gamma = \varphi_1 \vee \varphi_2\), where \(\varphi_1, \varphi_2\) are
    \emph{functional} wrt $X$. 
  \item \(\gamma = \varphi_1 \cdot \varphi_2\), where \(\varphi_1\) is functional wrt $X'\subseteq X$ and \(\varphi_2\)
    is functional wrt $X\backslash X'$. 
  \item \(\gamma = {(\varphi)}^{*}\), where \(\Var(\varphi) = \emptyset\) and $X=\emptyset$.
  \item \(\gamma = x\{\gamma'\}\) where $x\in X$ and $\gamma'$ is functional with respect to $X\backslash\{x\}$.  
\end{itemize}
A \(\RGX\) \(\gamma\) is called \textit{functional} if it is functional with respect to $\Var(\gamma)$.

This condition ensures that each variable mentioned in \(\gamma\) will appear
exactly once in every word that can be derived from \(\gamma\), when we treat
\(\gamma\) as a classical regular expression with variables as part of the
alphabet. We refer to the class of \emph{functional} \(\RGX\)s as \(\fRGX\).
Note that this corresponds to the original definition of regex formulas given by
\cite{FKRV15}, even when we consider the new semantics. Thus, we have:

\begin{theorem}\label{th-frgx}
Regex formulas of \cite{FKRV15} are equivalent to the class \(\fRGX\) defined above.
\end{theorem}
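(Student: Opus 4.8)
The plan is to split the statement into a syntactic part and a semantic part. Syntactically there is nothing to do beyond matching definitions: the five clauses defining \emph{functional wrt $X$} are precisely the inductive definition of a regex formula in \cite{FKRV15}, the only difference being that we omit the constant $\emptyset$, which (as already remarked) is inessential. Hence the syntactic class $\fRGX$ coincides with the class of regex formulas of \cite{FKRV15}, and the content of the theorem reduces to showing that on this common class the mapping semantics $\semd{\cdot}$ of Table~\ref{tab-semantics} computes the same objects as the original relation semantics.

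First I would record the structural invariant that makes the two kinds of output comparable. By induction on a derivation witnessing that $\gamma$ is functional wrt $X$ one shows that $\Var(\gamma)=X$ and that every pair $(s,\mu)\in\sempd{\gamma}$ satisfies $\dom(\mu)=X$. The base case $\gamma\in\Sigma\cup\{\varepsilon\}$ gives $\mu=\emptyset$ and $X=\emptyset$; the clause $x\{\gamma'\}$ adds exactly $x$ to both $\Var$ and every domain; concatenation uses the disjoint splitting $X=X'\uplus(X\setminus X')$, so that the side condition $\dom(\mu_1)\cap\dom(\mu_2)=\emptyset$ is automatic and the union is total on $X$; disjunction keeps both branches over the same $X$; and the star clause forces $\Var(\varphi)=\emptyset$, so $\sempd{\varphi^{*}}$ contributes only the empty mapping. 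It follows that every $\mu\in\semd{\gamma}$ is total on $\Var(\gamma)$, hence may be read as a tuple over $\Var(\gamma)$; thus $\semd{\gamma}$ is literally a relation over $\Var(\gamma)$ and it becomes meaningful to compare it with the spanner of \cite{FKRV15}.

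The heart of the proof is then a structural induction matching $\sempd{\cdot}$ against the parse-tree (equivalently, ref-word) semantics of \cite{FKRV15}. I would establish the invariant that $(s,\mu)\in\sempd{\gamma}$ holds if and only if $\gamma$ admits a parse tree over the infix $d(s)$ positioned at the span $s$ whose induced variable assignment is $\mu$; equivalently, there is a ref-word of $\gamma$ over $d(s)$ in which each $\Open{x}$ and each $\Close{x}$ sits at the boundaries prescribed by $\mu$. The base cases are immediate; the clause $x\{\gamma'\}$ corresponds to surrounding a ref-word of $\gamma'$ by $\Open{x}$ and $\Close{x}$, i.e.\ to setting $\mu=[x\to s]\cup\mu'$; concatenation corresponds to splitting the ref-word as $s=s_1\cdot s_2$, with variable sets disjoint by functionality so that the assignments merge unambiguously; and disjunction and star translate directly. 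Instantiating the invariant at the whole document, $s=(1,|d|+1)$, yields $\semd{\gamma}$ equal to the relation defined by $\gamma$ in \cite{FKRV15}, which is the claim.

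The step I expect to be the main obstacle is the faithful reconciliation of the two semantic formalisms, rather than any single hard idea: the parse-tree semantics of \cite{FKRV15} is procedural, so the induction requires carefully restating it (or routing through the ref-word characterisation of \cite{F17}) and checking that its decomposition along $\cdot$, $\vee$ and ${}^{*}$ lines up clause by clause with Table~\ref{tab-semantics}. The delicate case is the Kleene star: for an arbitrary $\RGX$, $R^{*}$ may bind variables in ways that yield mappings of varying domain (as in Example~\ref{ex-semantics}), but the functionality restriction forbids variables under a star, so $R^{*}$ only ever contributes the empty mapping and cannot break totality---exactly the fact that keeps the mapping semantics inside the relational world of \cite{FKRV15}.
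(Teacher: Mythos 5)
Your proposal is correct and follows essentially the same route as the paper: the paper's proof likewise observes that the syntax is identical and then reduces the claim to a direct correspondence between the parse trees of \cite{FKRV15} (the subtree rooted at each node) and the spans appearing as the first component of the tuples in $\sempd{\gamma}$, which is exactly the structural induction you spell out. Your explicit totality lemma ($\dom(\mu)=X$ for every $(s,\mu)\in\sempd{\gamma}$ when $\gamma$ is functional wrt $X$) and the careful treatment of the star case are welcome elaborations of what the paper leaves as ``straightforward.''
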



Next, we show how \RGX\ formulas subsume span regular expressions of \cite{AMRV16}. For this, observe that span regular expressions of \cite{AMRV16}
have the same syntax as $\SRGX$; that is, they can be seen as \RGX\ formulas where all subexpressions of the form \(x\{\gamma\}\) have $\gamma=\Sigma^*$. 

To compare \SRGX\ with span regular expressions, we also need to take note of the semantics proposed in \cite{AMRV16}. One problem with that semantics is that when a variable is not matched by the expression, the resulting mapping is assigned an arbitrary span, which can be rather misleading (e.g.\ in the sales example above we could not determine if the tax data is real or assigned arbitrarily). Of course, this type of behaviour can easily be simulated by ``joining'' the results obtained by a \SRGX\ with the set of all total mappings. Another, more subtle problem, is that the formalism of \cite{AMRV16} allows expressions of the form $x\{\Sigma^*\}\cdot x\{\Sigma^*\}$ (forcing $x$ to be assigned the empty string at the same position multiple times), while this \RGX\ is not satisfiable. We call span regular expressions which prohibit such behaviour \emph{proper}. We now obtain the following.

\begin{theorem}\label{theo:equiv-SRE}
  Let \(d\) be a document, \(\gamma\) be a \(\RGX\), \(M\) be the set of all
  total functions from \(\Var(\gamma)\) to \(\sub(d)\), and let \(\semd{\gamma}' = M
  \Join \semd{\gamma}\). Under these semantics, \(\SRGX\) and proper span regular expressions of \cite{AMRV16} are
  equivalent.
\end{theorem}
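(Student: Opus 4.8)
The plan is to exploit the fact that $\SRGX$ and the span regular expressions of \cite{AMRV16} are syntactically identical — both are $\RGX$ formulas in which every capture $x\{\gamma'\}$ has $\gamma' = \Sigma^*$ — so that proving equivalence reduces to showing that the two semantics assign the same relation to each expression. Concretely, I would fix a document $d$ and a span regular expression $\gamma$, and show that the relation $\gamma$ defines under the semantics of \cite{AMRV16} coincides with $\semd{\gamma}' = M \Join \semd{\gamma}$, where $\gamma$ is read as an $\SRGX$. Since the join with $M$ turns every mapping into a total function on $\Var(\gamma)$, both sides are relations of the same arity, so the statement is genuinely an identity of sets of tuples rather than a more delicate expressiveness comparison.

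The core of the argument is a structural induction on $\gamma$, carried out on the inner, two-layer semantics $\sempd{\cdot}$ of Table~\ref{tab-semantics} rather than on $\semd{\cdot}$ directly, so that the span being matched is available as part of the inductive invariant. The base cases $\varepsilon$ and $a$ are immediate, since both produce the empty mapping on exactly the spans whose content is $\varepsilon$ or $a$. The case $x\{\Sigma^*\}$ is the crucial capture step: here $\sempd{x\{\Sigma^*\}} = \{(s,[x\to s]) \mid s \in \sub(d)\}$, which is precisely the set of spans that \cite{AMRV16} may assign to $x$, so after joining with $M$ on the remaining variables the two accounts agree. Disjunction is handled by noting that $\sempd{R_1 \vee R_2} = \sempd{R_1} \cup \sempd{R_2}$ and that $M \Join (\cdot)$ distributes over union, matching the relational union of \cite{AMRV16}; the two branches may bind different variables, but this is reconciled precisely because $M \Join (\cdot)$ fills every unbound variable with all possible spans, which is exactly the arbitrary-span convention of \cite{AMRV16}. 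The Kleene star reduces to the concatenation and union cases through its defining expansion $\sempd{R^*} = \sempd{\varepsilon} \cup \sempd{R} \cup \cdots$.

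The step I expect to be the main obstacle is concatenation, since it is here that the properness restriction does all the work. In the mapping semantics $\sempd{R_1 \cdot R_2}$ carries the side-condition $\dom(\mu_1)\cap\dom(\mu_2)=\emptyset$, which is exactly what renders $x\{\Sigma^*\}\cdot x\{\Sigma^*\}$ unsatisfiable; in \cite{AMRV16} the same expression is legal and collapses $x$ onto a single empty span, so without properness the two semantics genuinely diverge. I would therefore argue that in a proper expression no variable can occur on both sides of a concatenation, so the disjointness condition is always met and the inductive hypotheses on $R_1$ and $R_2$ combine through the join exactly as \cite{AMRV16} combines the two sub-relations; this yields the inclusion of proper span regular expressions into $\SRGX$. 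For the converse inclusion I would show that every $\SRGX$ is equivalent under $\semd{\cdot}'$ to a proper one: the only non-proper patterns are those forcing a repeated variable across a concatenation, and by the disjointness condition these contribute the empty relation under $\semd{\cdot}'$, so they can be discarded from any disjunction (or rewritten) without affecting $\semd{\cdot}'$. Combining the two inclusions gives the claimed equivalence.
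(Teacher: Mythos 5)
Your proposal is correct and follows essentially the same route as the paper: the paper's (much terser) proof identifies exactly the three discrepancies you isolate --- totality, the arbitrary-span convention for unassigned variables, and repeated variables across a concatenation --- and observes that the join with \(M\) neutralizes the first two while properness neutralizes the third. Your structural induction on \(\sempd{\cdot}\) is just the explicit version of that comparison, and your additional step rewriting non-proper \(\SRGX\)s into proper ones is a small point the paper leaves implicit.
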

We can therefore conclude that using mappings is indeed a natural extension of the previous semantics of $\RGX$ and $\SRGX$.


\subsection{Comparing expressions to automata}\label{ss-previous}

One of the main problems studied in \cite{FKRV15} was to determine the relationship between the automata models from Section \ref{ss-automata} (restricted to always output relations) with the class of functional \RGX\ formulas. As our framework is an extension (in terms of expressiveness) and a simplification (in terms of semantics) of \cite{FKRV15} that allows mappings instead of simple relations, here we show how the main results on $\VA$ and $\fRGX$ from \cite{FKRV15} can be generalised to our setting. 
We start by showing that the class of \RGX\ formulas is also captured by $\VAStk$ automata in our new setting.

\begin{theorem}[\cite{FKRV15}]\label{th-stack_rgx}
Every \VAStk\ automaton has an equivalent \RGX\ formula and vice versa. That is $\VAStk \equiv \RGX$.
\end{theorem}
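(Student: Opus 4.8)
The two models are \emph{equivalent} in the sense that for every document $d$ they define the same set of mappings, so the plan is to exhibit, for each $\RGX$ formula a $\VAStk$ computing the same $\semd{\cdot}$, and conversely for each $\VAStk$ a $\RGX$ formula. The direction $\RGX \Rightarrow \VAStk$ is the routine one, and I would prove it by structural induction on $\gamma$, building an automaton $A_\gamma$ with $\semd{A_\gamma}=\semd{\gamma}$ via a Thompson-style construction. The base cases $\varepsilon$ and $a$ are immediate; concatenation, disjunction and Kleene star use the standard gluing of automata (auxiliary $\varepsilon$-moves, if any, are harmless here and can be removed by $\varepsilon$-closure since every move still carries a single label); and the only genuinely new case is $x\{\gamma'\}$, which I handle by prepending a transition $(q,\Open{x},q_0')$ and appending a transition $(q_f',\Close{x},q)$ around the automaton $A_{\gamma'}$. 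Because $A_{\gamma'}$ touches $x$ nowhere, these two transitions bracket a complete sub-run, so the result never violates the stack policy and is a genuine $\VAStk$; a short check against the clauses of Table~\ref{tab-semantics} shows the produced mappings agree.

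The converse $\VAStk \Rightarrow \RGX$ is the hard direction, and its crux is turning the $\Open{x}/\Close{x}$ transitions, which the automaton treats as ordinary non-consuming moves, into the \emph{nested} operator $x\{\cdot\}$ of $\RGX$. The enabling observation is that in a $\VAStk$ each variable is opened and closed at most once along any run, so the stack of currently-open variables has depth bounded by $|\Var(A)|$ and ranges over only finitely many sequences of distinct variables. Hence I can track the stack inside the finite control, obtaining an ordinary NFA over the extended alphabet $\Sigma \cup \{\Open{x},\Close{x}\}$ whose accepted words are exactly the well-nested ``reference words'' of accepting runs of $A$, with $\Close{x}$ enabled only when $x$ sits on top of the stack.

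From this NFA I would produce an expression by a \emph{hierarchical} state-elimination that respects the nesting: processing the automaton by decreasing stack depth, each balanced segment $\Open{x}\,R\,\Close{x}$, where $R$ has already been converted into an $\RGX$ over the deeper levels, is replaced by the macro-letter $x\{R\}$, after which ordinary Kleene state-elimination is applied at the current level. Recursing from the deepest nesting upward then yields a single $\RGX$ for runs from $q_0$ to $q_f$ at stack depth zero, and one verifies that the mapping $\mu^{\rho}$ read off from an accepting run coincides with the mapping assigned by the resulting $x\{R\}$ subexpressions under the semantics of Table~\ref{tab-semantics}.

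The main obstacle is precisely this re-nesting step: a priori, state-elimination over the extended alphabet could produce expressions in which an $\Open{x}$ and its matching $\Close{x}$ end up scattered across different stars or disjuncts, which would correspond to no $x\{\cdot\}$ subexpression. What rescues the construction is the stack discipline of $\VAStk$ together with the at-most-once restriction: every occurrence of $\Open{x}$ in an accepted reference word is balanced by a unique $\Close{x}$ enclosing a contiguous balanced factor, and the depth bound guarantees the recursion terminates. Making this bracket-matching argument precise, and keeping the book-keeping of open variables aligned between the automaton runs and the syntactic nesting of the expression, is where the real work lies; the underlying combinatorics is exactly the generalisation of Kleene's theorem from \cite{FKRV15} to the present mapping semantics.
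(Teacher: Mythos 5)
Your first direction coincides with the paper's: both adapt Thompson's construction and handle $x\{\gamma'\}$ by bracketing the sub-automaton with an $\Open{x}$ and a closing transition. For the hard direction, however, you take a genuinely different route. The paper first performs state elimination down to a ``vstk-graph automaton'' in which every transition carries one regular expression and (except at the final state) one variable operation, then \emph{enumerates all paths} of length at most $2\cdot|\Var(A)|+1$ from the initial to the final state; each such path is well-nested by the stack policy and reads off directly as a $\RGX$ by replacing $\Open{x}$ with ``$x\{$'' and the pop with ``$\}$'', and the answer is the (exponential) disjunction of these path expressions. You instead determinise the stack into the finite control and run a \emph{hierarchical} state elimination stratified by stack depth, collapsing each balanced factor $\Open{x}\,R\,\Close{x}$ into the macro-letter $x\{R\}$ before eliminating at the next level up. Both constructions are exponential and both are correct in outline; the paper's path-union decomposition has the advantage of yielding, as an explicit corollary used elsewhere (e.g.\ in Proposition~\ref{prop:nonfunc}), that every $\RGX$ is a union of functional $\RGX$ formulas, whereas your layered elimination avoids flattening the expression into paths and can produce a more compact nested result, at the cost of a more delicate bracket-matching argument.

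One concrete inaccuracy you should repair: you assert that in an accepted reference word every $\Open{x}$ is balanced by a unique $\Close{x}$. Under the semantics of this paper that is false --- an accepting run of a $\VAStk$ may open $x$ and never pop it (the variable is then simply treated as unused, and may remain on the stack at acceptance). So your NFA over the extended alphabet accepts words that are not well-nested, and the recursive replacement $\Open{x}\,R\,\Close{x}\mapsto x\{R\}$ has no target for such occurrences. The fix is the same one the paper applies to its paths: delete (or turn into $\varepsilon$) every $\Open{x}$ that is never matched by a pop before acceptance, which does not change the induced mapping $\mu^{\rho}$. With that preprocessing your balanced-factor argument goes through.
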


Just as in the proof for the relational case \cite{FKRV15}, the main step is to show that $\VAStk$ automata can be simplified by decomposing them into an (exponential) union of disjoint paths known as $\PUStk$ (path union $\VAStk$). In $\PUStk$ automata each path is essentially a functional \RGX\ formula, thus making the transformation straightforward. The only difference to the proof of \cite{FKRV15} is that when transforming $\VAStk$ automaton into a union of paths, we need to consider all paths of length at most $2\cdot k+1$ in order to accommodate partial mappings, where $k$ is the number of variables. The notion of a consistent path also changes, since we are allowed to open a variable, but never close it. As a corollary we get that every \RGX\ is equivalent to a (potentially exponential) union of functional \RGX\ formulas (with this union being empty when the \RGX\ is not satisfiable).

Similarly as in the functional case, it is also straightforward to prove that the mappings defined by $\VAStk$ and \RGX\ are hierarchical. Furthermore, just as in \cite{FKRV15}, one can show that the class of $\VA$ automata which produce only hierarchical mappings is equivalent to \RGX\ in the general case.

\begin{theorem}[\cite{FKRV15}]\label{th-rgx_equiv_hierarchical_va}
Every \VA\ automaton that is hierarchical has an equivalent \RGX\ formula and vice versa.
\end{theorem}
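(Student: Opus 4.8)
The plan is to prove both directions by reducing to Theorem~\ref{th-stack_rgx}, which already establishes $\VAStk \equiv \RGX$. The two directions are asymmetric: one is essentially immediate, while the other requires a genuine normalization argument.

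For the direction ``every \RGX\ has an equivalent hierarchical \VA'' (the ``vice versa''), I would observe that it follows directly from Theorem~\ref{th-stack_rgx} together with the fact, noted just above it, that every \VAStk\ produces only hierarchical mappings. Given a \RGX\ $\gamma$, Theorem~\ref{th-stack_rgx} yields an equivalent \VAStk\ automaton $A$. Since a \VAStk\ opens and closes variables following a stack discipline, the spans it assigns to any two variables are either nested or disjoint, so every mapping in $\semd{A}$ is hierarchical. As a \VAStk\ is in particular a \VA, the automaton $A$ is the desired hierarchical \VA\ equivalent to $\gamma$, so this direction needs no new work.

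The interesting direction is ``every hierarchical \VA\ has an equivalent \RGX''. Here the key step is to show that every hierarchical \VA\ $A$ is equivalent to some \VAStk\ automaton $B$; once this is done, Theorem~\ref{th-stack_rgx} converts $B$ into an equivalent \RGX\ and we are finished. To build $B$ from $A$, I would exploit the defining property of hierarchical mappings: for any position $p$ of the document, the spans $\mu(x)$ that strictly contain $p$ pairwise overlap, hence are pairwise nested, so they form a chain under containment. Consequently the variables that are ``open'' at any point of an accepting run of $A$ can always be linearly ordered consistently with span containment, which is exactly the structure a stack records. The construction of $B$ then keeps in its state a pair $(q, w)$, where $q$ is a state of $A$ and $w$ is the sequence, of length at most $|\Var(A)|$, of currently open variables in stack order. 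Letter transitions of $A$ are simulated verbatim; a variable-open operation pushes onto $w$; and closings are permitted, in stack order, only for the variable currently on top of $w$.

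The main obstacle, and the place where hierarchy is essential, is that a given accepting run of $A$ may perform its variable operations in an order that does not respect a stack discipline, even though the resulting mapping is hierarchical. The fix is to reorder the zero-width variable operations (opens and closes do not consume input) so that, at each position, outer variables are opened before inner ones and inner variables are closed before outer ones; hierarchy guarantees that such a consistent reordering always exists, since the spans incident to a position are totally ordered by containment. I would then argue that $B$ reproduces exactly the mappings of $A$: every accepting run of $A$ can be rewritten into a stack-respecting run yielding the same mapping, giving $\semd{A} \subseteq \semd{B}$, and conversely every run of $B$ corresponds to a run of $A$ with the operations un-reordered, giving $\semd{B} \subseteq \semd{A}$. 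Care must be taken with variables that open and close at the very same position, handled by fixing a canonical tie-breaking order on simultaneous operations, and with variables that $A$ opens but never closes, which our semantics permits and which contribute nothing to the mapping and can simply be dropped from the stack bookkeeping. This mirrors the relational argument of \cite{FKRV15}, the only new ingredient being that partial mappings force us to allow opens without matching closes, exactly as already accounted for in the proof of Theorem~\ref{th-stack_rgx}.
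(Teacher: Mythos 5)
Your easy direction is fine and matches the paper: the \RGX-to-\VAStk\ construction of Theorem~\ref{th-stack_rgx} yields an automaton whose mappings are hierarchical by the stack discipline, and it converts trivially into a \VA. The gap is in the hard direction. Your automaton $B$ annotates the states of $A$ with a stack $w$ and then \emph{restricts} closings to the variable on top of $w$; as stated, this only filters out those runs of $A$ that already happen to respect a stack discipline, it does not create any new ones. But a hierarchical \VA\ can have accepting runs none of whose transition sequences are stack-respecting. Take $A$ to be the single path with transitions labelled $\Open{x}, \Open{y}, \Close{x}, \Close{y}$ over the document $d=\varepsilon$: the unique mapping sends both $x$ and $y$ to $(1,1)$, so $A$ is hierarchical, yet $B$ gets stuck trying to close $x$ while $y$ sits on top of the stack, so $\semd{B}=\emptyset\neq\semd{A}$. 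Your reordering argument rewrites the \emph{run} into a correctly nested order, but the rewritten sequence $\Open{x}, \Open{y}, \Close{y}, \Close{x}$ is not a path in $A$, hence not a path in $B$ either; so the claim ``every accepting run of $A$ can be rewritten into a stack-respecting run'' does not produce a run of the automaton you actually constructed. To make this work you must first close $A$ under permutations of consecutive (same-position) variable operations, or buffer, per document position, the whole set of pending operations and re-emit it in a guessed nesting order; that permutation-closure step is the real content here and it is missing from your write-up.

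The paper takes a different decomposition that sidesteps exactly this difficulty: it first applies the state-elimination/path-union construction from the proof of Theorem~\ref{th-stack_rgx} to turn the hierarchical \VA\ into an equivalent union of path automata, observes that hierarchy is preserved, and then reorders the variable operations \emph{syntactically along each path}, which is harmless because permuting consecutive zero-width operations along a fixed path does not change the induced mapping; each correctly nested path is then read off directly as a \RGX. If you want to keep your direct $(q,w)$ product construction, you should either prove and apply the permutation-closure lemma first (in the spirit of the ``lexicographic'' \VA\ used in the proof of Theorem~\ref{th-va_algebra_equiv}) or route through the path-union normal form as the paper does.
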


Both $\VA$ and \VAStk\ automata, as well as \RGX, provide a simple way of extracting information. To permit a more complex way of defining extracted relations, \cite{FKRV15} allows combining them using basic algebraic operations of union, projection and join. While defining a union or projection of two automata or \RGX\ is straightforward, in the case of join we now use joins of mappings instead of the natural join (as used in \cite{FKRV15}). Formally, for two \VA\ automatons $A_1$ and $A_2$, we define the ``join automaton" $A_1\Join A_2$ using the following semantics: for a document $d$, we have $\semd{A_1\Join A_2} = \semd{A_1}\Join \semd{A_2}$. We denote the class of extraction expressions obtained by closing \VA\ under union, projection and join with $\VA^{\{\cup,\pi,\Join\}}$, and similarly for \VAStk\ and \RGX.

To establish a relationship between algebras based on $\VAStk$ and $\VA$ automata, \cite{FKRV15} shows that $\VA$ is closed under union, projection and join. We can show that the same holds true when dealing with mappings, but now the proofs change quite a bit. That is, while closure under  projection is much  easier to prove in our setting, closure under join now requires an exponential blowup, since to join mappings, we need to keep track of variables opened by each mapping in our automaton. Similarly, \cite{FKRV15} shows that each \VA\ automaton can be expressed using the expressions in the algebra $\VAStk^{\{\cup,\pi,\Join\}}$; as this proof holds verbatim in the case of mappings we obtain the following.

\begin{theorem}[\cite{FKRV15}]\label{th-va_algebra_equiv}
$\VA^{\{\cup,\pi,\Join\}}\equiv\VA\equiv\VAStk^{\{\cup,\pi,\Join\}}.$
\end{theorem}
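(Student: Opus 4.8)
The plan is to establish the two nontrivial equalities $\VA \equiv \VA^{\{\cup,\pi,\Join\}}$ and $\VA \equiv \VAStk^{\{\cup,\pi,\Join\}}$ by exhibiting a cycle of inclusions
$$\VAStk^{\{\cup,\pi,\Join\}} \ \subseteq\ \VA^{\{\cup,\pi,\Join\}} \ \subseteq\ \VA \ \subseteq\ \VAStk^{\{\cup,\pi,\Join\}},$$
together with the trivial inclusion $\VA \subseteq \VA^{\{\cup,\pi,\Join\}}$ (a single automaton is an algebra expression with no operators). The first inclusion is immediate: by Theorems \ref{th-stack_rgx} and \ref{th-rgx_equiv_hierarchical_va} we have $\VAStk \equiv \RGX \equiv$ (hierarchical) $\VA$, so every $\VAStk$ leaf of an algebra expression can be replaced by an equivalent $\VA$ while leaving the operators untouched. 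The last inclusion, that every single $\VA$ is expressible in the stack algebra, transfers verbatim from \cite{FKRV15}, since that argument only manipulates runs and never relies on the mappings being total. Thus the crux is the middle inclusion, namely that $\VA$ is \emph{closed} under union, projection and join; by structural induction on the algebra expression this collapses any $\VA^{\{\cup,\pi,\Join\}}$ into a single $\VA$.

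For closure I would treat the three operators separately. Union is routine: take the disjoint union of the two automata and wire fresh initial and final states to the components, so that an accepting run picks exactly one of the two, leaving the produced mapping unchanged and giving $\semd{A_1 \cup A_2} = \semd{A_1} \cup \semd{A_2}$. Projection onto a variable set $Y$ is where the mapping semantics pays off: since mappings are \emph{partial}, projecting away the variables outside $Y$ amounts to simply forgetting the transitions $(q,\Open{x},q')$ and $(q,\Close{x},q')$ with $x \notin Y$, replacing them by silent moves between the same states. No totality or functionality constraint has to be re-established, which is exactly why this step is far simpler here than in the relational setting of \cite{FKRV15}.

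The join $A_1 \Join A_2$ is the main obstacle. The goal is a product automaton running $A_1$ and $A_2$ over the \emph{same} document, so letter transitions must fire synchronously on a common symbol while variable transitions of the two components interleave freely. The difficulty is enforcing compatibility $\mu_1 \sim \mu_2$: whenever a variable $x$ lies in both domains we must guarantee $\mu_1(x)=\mu_2(x)$, i.e.\ both components open and close $x$ at exactly the same positions, whereas a variable claimed by only one component should pass through unconstrained. Since the automaton cannot know in advance which variables will ultimately be shared, the product state must record, for each component, the set of variables currently opened, so that a shared $\Open{x}$ (resp.\ $\Close{x}$) in one component can be matched against the other and consistency verified at closing time. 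Tracking these subsets of $\Var(A_1)\cup\Var(A_2)$ is precisely what forces the $2^{O(k)}$ blowup in the number of states, with $k$ the number of variables.

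Once the construction is in place, the correctness argument is a bookkeeping induction on runs: every accepting run of the product projects to a pair of accepting runs of $A_1$ and $A_2$ whose mappings are compatible and whose union is the product's mapping, and conversely any compatible pair of accepting runs can be shuffled into a single product run respecting the synchronization discipline. I expect this shuffling-and-synchronization argument, together with the verification that the open-variable subsets correctly certify compatibility, to be the technically delicate part; the remaining inclusions are essentially inherited from \cite{FKRV15}.
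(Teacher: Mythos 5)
Your overall architecture coincides with the paper's: the same cycle of inclusions, union via fresh initial states and silent branching, join via a product automaton that synchronizes on letters and records the sets of currently opened variables in each component (with the resulting exponential blowup), and the containment $\VA \subseteq \VAStk^{\{\cup,\pi,\Join\}}$ inherited essentially verbatim from \cite{FKRV15}. The one step that does not survive scrutiny is projection. Erasing the transitions $(q,\Open{x},q')$ and $(q,\Close{x},q')$ with $x\notin Y$ \emph{directly in the given automaton} does more than forget $x$: it also legitimizes paths that were not valid runs precisely because they misused $x$ (opening it twice, or closing it without having opened it). Such a path contributes nothing to $\semd{A}$, but after the replacement it becomes a valid run and may output a mapping on the surviving variables that is not the restriction of any mapping in $\semd{A}$, so the construction can strictly enlarge the projected semantics. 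Concretely, take the single path with transitions $(q_0,\Open{y},q_1)$, $(q_1,\Open{x},q_2)$, $(q_2,a,q_3)$, $(q_3,\Open{x},q_4)$, $(q_4,\Close{y},q_f)$: it admits no valid run (the variable $x$ is opened twice), so $\semd{A}=\emptyset$ for every $d$, yet after replacing the $x$-operations by silent moves the document $a$ yields the mapping $[y \to (1,2)]$.

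The repair is exactly the normalization the paper performs before erasing: first convert the automaton into a path union (the $\PUStk$-style construction already used for Theorem~\ref{th-stack_rgx}), in which every path from the initial to the final state uses each variable consistently; only then is deleting the operations on projected-away variables harmless, since no previously invalid path can be resurrected. With that extra step your projection argument, and hence the whole proof, matches the paper's. The remaining ingredients you describe (the compatibility bookkeeping for shared variables in the join, the observation that a variable used by only one component passes through unconstrained, and the deferral of $\VA \subseteq \VAStk^{\{\cup,\pi,\Join\}}$ to the auxiliary-variable construction of Lemma~4.13 in \cite{FKRV15}) are the same as in the paper; the paper additionally routes the join through a lexicographic normal form guaranteeing a canonical ordering of simultaneous variable operations, which is the precise mechanism behind your informal ``matched against the other component'' step.
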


As we showed here, the main results from \cite{FKRV15} can be lifted to hold in the more general setting of mappings, thus suggesting that the added generality does not impact the intuition behind the extraction process.

\subsection{Comparing $\RGX$ with rules}\label{sec:expr}

In this subsection we will compare the expressive power of two different frameworks for extracting information: $\RGX$ formulas of \cite{FKRV15} and extraction rules of \cite{AMRV16}. We do this under the new semantics allowing incomplete information and show that, while in general the two languages are not comparable, by simplifying extraction rules we can capture $\RGX$.

Extraction rules allow us to define complex conditions about the spans we wish to extract. For instance, if we wanted to extract all spans whose content is a word belonging to (ordinary) regular expressions $R_1$ and $R_2$ at the same time, we could use the rule $\Sigma^*\cdot x \cdot \Sigma^* \wedge x.R_1 \wedge x.R_2.$ More importantly, using extraction rules, we can now define valuations which cannot be defined using \RGX, since they can define mappings which are not hierarchical. For instance, the rule $x\wedge x.ayaa \wedge x.aaza$ is one such rule, since it makes $y$ and $z$ overlap on the document $aaaaa$.
In some sense, the ability of rules to use conjunctions of variables makes them
more powerful than \RGX\ formulas. 
On the other hand, the ability of \RGX\ formulas to use
disjunction of variables poses similar problems for \SRGX. 

\begin{theorem} \label{prop-rgx-rules}
Extraction rules and \RGX\ are incomparable in terms of the expressive power.
\end{theorem}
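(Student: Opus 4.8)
The plan is to establish incomparability by exhibiting, for each direction, a single extraction expression in one formalism that provably admits no equivalent in the other. For the direction ``rules are not subsumed by \RGX'' I would reuse the rule $\varphi = x \wedge x.ayaa \wedge x.aaza$ highlighted just before the statement: over the document $aaaaa$ its unique satisfying tuple instantiates $x,y,z$ with $x\mapsto(1,6)$, $y\mapsto(2,4)$ and $z\mapsto(3,5)$, so the spans of $y$ and $z$ properly overlap and the resulting mapping is \emph{not} hierarchical. Since every mapping produced by an \RGX\ is hierarchical (as observed in Section~\ref{ss-previous} alongside Theorem~\ref{th-stack_rgx}), no \RGX\ can output this mapping, so none can be equivalent to $\varphi$. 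This half is short: I would only verify the span arithmetic and invoke hierarchicalness of \RGX.

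The interesting direction is ``\RGX\ is not subsumed by rules''. Here I would take the \RGX
$$\gamma \ = \ x\{a\}\cdot y\{b\} \ \vee \ x\{c\}\cdot y\{d\},$$
whose semantics is $\sem{\gamma}_{d}=\{[x\to(1,2)]\cup[y\to(2,3)]\}$ exactly when $d\in\{ab,cd\}$, and $\sem{\gamma}_{d}=\emptyset$ otherwise. The point is that $\gamma$ ties the \emph{content} of $x$ and $y$ together through the top-level disjunction of variables: the admissible content pairs are $\{(a,b),(c,d)\}$, which is \emph{not} a product set. Extraction rules, on the other hand, can only constrain each captured variable \emph{independently}: all conjuncts with head $x$ intersect to a language $L_x$, all conjuncts with head $y$ to a language $L_y$, and the skeleton $\varphi_0$ is an \SRGX\ which, on the positions it captures into a variable, reads them through $\Sigma^*$ and therefore cannot inspect their content.

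I would turn this intuition into a proof by contradiction. Assume a rule $\varphi$ with $\sem{\varphi}_{d}=\sem{\gamma}_{d}$ for every $d$. Over $ab$ the output is the single mapping $[x\to(1,2)]\cup[y\to(2,3)]$, whose domain is exactly $\{x,y\}$; consequently no auxiliary variable is ever instantiated, so $\ivar(\varphi,\overline{\mu})=\{x,y\}$ for the witnessing tuple $\overline{\mu}$. Because the spans $(1,2)$ and $(2,3)$ are disjoint, neither variable can be captured \emph{inside} the other, so by the closure definition of $\ivar$ both $x$ and $y$ must already lie in $\dom(\mu_0)$, i.e.\ both are captured by the skeleton. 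The key step is then to observe that, in the skeleton run witnessing $x\mapsto(1,2)$ and $y\mapsto(2,3)$ over $ab$, positions $1$ and $2$ are consumed \emph{inside} the two $\Sigma^*$-captures, so no literal of $\varphi_0$ ever tests them; the very same run is therefore a valid skeleton run over $ad$ producing the same captures. Running the argument on $cd$ as well forces $a,c\in L_x$ and $b,d\in L_y$, so over $ad$ the content $a$ of $x$ and $d$ of $y$ satisfy every conjunct. Hence $\varphi$ outputs $[x\to(1,2)]\cup[y\to(2,3)]$ over $ad$, contradicting $\sem{\gamma}_{ad}=\emptyset$.

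The main obstacle I anticipate is ruling out \emph{every} rule rather than the naive one, in particular rules that try to recover the correlation by means of an auxiliary variable $w$ covering the whole document, on which a joint constraint such as $w.((a\cdot b)\vee(c\cdot d))$ could be imposed. The domain bookkeeping above is exactly what closes this loophole: any such $w$ would be instantiated together with $x,y$ and would therefore appear in the output domain, whereas every mapping of $\gamma$ has domain precisely $\{x,y\}$. Making the ``content-agnosticism of \SRGX\ skeletons on captured spans'' claim fully rigorous — that a capturing run over $ab$ transfers verbatim to $ad$ — is the one place requiring care, and I would prove it by a direct induction on the structure of $\varphi_0$ (equivalently, on its accepting run), noting that the only tokens reading a position are $\Sigma$-letters inside a capture or explicit literals, and the latter cannot fall on a captured position.
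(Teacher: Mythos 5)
Your proof is correct and follows the paper's two-witness strategy, but the second half is executed along a genuinely different (and more careful) route. The first direction coincides with the paper's: the appendix uses the rule $x \wedge x.\Sigma^{*}\cdot y\cdot\Sigma^{*} \wedge x.\Sigma^{*}\cdot z\cdot\Sigma^{*}$ rather than $x\wedge x.ayaa\wedge x.aaza$, but the argument is the same appeal to hierarchicality of \RGX-definable mappings. For the second direction the paper picks $\gamma=(a\cdot x\{b\})\vee(b\cdot x\{a\})$, which correlates the content of a single captured variable with the uncaptured prefix, and argues rather informally that any equivalent rule would have to contain a conjunct $x.\varphi_x$ equivalent to $a\vee b$ and a skeleton equivalent to $ax\vee bx$, whence $d=aa$ becomes a spurious model. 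Your witness $x\{a\}\cdot y\{b\}\vee x\{c\}\cdot y\{d\}$ instead correlates the contents of two captured variables, and your argument replaces the paper's ``necessary form'' claim with two precise ingredients: the $\ivar$/$\dom(\mu_0)$ bookkeeping, which genuinely closes the auxiliary-variable loophole that the paper's sketch does not address, and the run-transfer lemma stating that a skeleton parse whose captures cover the whole document is oblivious to the letters inside the captured spans. The latter is the only step you defer, and it is routine: in the witnessing parse every subexpression of $\varphi_0$ outside the two $\Sigma^{*}$-captures is matched against an empty span (any length-one span would overlap $(1,2)$ or $(2,3)$), so no literal is ever tested and the same parse is valid over $ad$. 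Both proofs are sound; yours trades the paper's brevity for an argument that withstands scrutiny of rules using extra variables or nontrivial skeletons.
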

In light of this result, we study how the class of extraction rules can be pruned in order to capture \RGX.

\medskip

\noindent{\bf Simplifying extraction rules.}  As discussed above, the capability of an extraction rule to use conjunctions of the same variable multiple times already takes them outside of the reach of \RGX. Therefore, the most general class of rules we will consider disallows that type of behaviour. We call such rules {\em simple rules}. Formally, an extraction rule $\varphi$ of the form (\ref{eq-ann-rule}) is {\em simple}, if all $x_i$ are pairwise distinct. 
From now on, we assume that all classes of rules considered in this section are simple. 

Another feature that makes rules different
from \RGX\ is their ability to enforce cyclic behaviour through expressions of the form $x.y \wedge y.ax$.
A natural way to circumvent this shortcoming is to force the rules to have an acyclic structure. 
In fact, this kind of restriction was already considered in \cite{AMRV16}, as it allows faster evaluation than general rules.
Therefore, a natural question at this point is if the capability of rules to define cycles is really useful, or if they can be removed.
We answer now the question whether cycles can be eliminated from rules, and somewhat surprisingly show that, while generally not possible, in the case of rules defined by functional \SRGX\ this is indeed true.

In order to study the cyclic behaviour of rules, we first need to explain how each rule can be viewed as a graph.
To each extraction rule $\varphi=\varphi_0 \wedge x_1.\varphi_1 \wedge \cdots \wedge x_m.\varphi_m$ we associate a graph $G_\varphi$ defined as follows. The set of nodes of $G_{\varphi}$ contains all the variables $x_1,\ldots ,x_m$ plus one special node labelled $\texttt{doc}$ corresponding
to the formula $\varphi_0$.
There exists an edge $(x, y)$ between two variables in $G_{\varphi}$ if, and only if, there is an
\emph{extraction formula} $x.R$ in $\varphi$ such that $y$ occurs in
$R$. Furthermore, if the variable $x$ occurs in the formula $\varphi_0$, we add an edge $(\texttt{doc},x)$ to $G_\varphi$.
Then we say that a simple rule $\varphi$ is {\em dag-like}, if the graph $G_\varphi$ contains no cycles, and \emph{tree-like} if $G_{\varphi}$ is a tree rooted at $\texttt{doc}$. 

To answer the question whether cycles can be eliminated from rules, let us consider most general case; namely, simple rules over full \SRGX. It is straightforward to see that in a rule of the form $(x \vee y) \wedge x.(y \vee \Sigma^*) \wedge y.(x \vee \Sigma^*)$, the cycle formed by $x$ and $y$ cannot be broken and the rule cannot be rewritten as a single dag-like rule. The main obstacle here is the fact that in each part of the rule we make a nondeterministic choice which can then affect the value of all the variables.
However, there is one important class of expressions, which would prohibit our rules to define properties such as the one above; that is, functional \SRGX (we call a $\SRGX$ functional if the underlying $\RGX$ is functional). 
In the next result, we show that in the case of functional rules (i.e.\ rules defined by functional \SRGX) cycles can always be removed, and in fact, converting a simple functional rule into a dag-like rule takes only polynomial time.

\begin{theorem}\label{thm:simple_to_dag}
	For every simple rule that is functional there is an equivalent (functional) dag-like rule. Moreover, we can obtain the equivalent rule in polynomial time.
\end{theorem}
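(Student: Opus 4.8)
The plan is to exploit the single feature that distinguishes functional rules from general ones: because every $\SRGX$ $\varphi_i$ is functional, the set $\ivar(\varphi,\overline{\mu})$ does not depend on $\overline{\mu}$. Indeed, a functional $\SRGX$ assigns \emph{all} of its variables in every accepting run, so $\dom(\mu_0)=\Var(\varphi_0)$ and $\dom(\mu_i)=\Var(\varphi_i)$ whenever $x_i$ is instantiated; hence $\ivar(\varphi,\overline{\mu})$ is exactly the set of variables reachable from $\texttt{doc}$ in $G_\varphi$. As a first step I would delete every conjunct $x_i.\varphi_i$ whose head $x_i$ is unreachable from $\texttt{doc}$: by condition (2) in the definition of $\models_d$ such an $x_i$ always receives $\mu_i=\emptyset$, so dropping the conjunct changes neither the semantics nor the reachability of the remaining nodes. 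After this cleanup we may assume every variable of $G_\varphi$ is reachable.

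The core observation is a \emph{rigidity lemma} for functional $\SRGX$. If there is an edge $x\to y$, coming from a conjunct $x.\varphi_x$ with $y\in\Var(\varphi_x)$, then in every satisfying tuple the span $\mu(y)$ is captured as a sub-span of $\mu(x)$ while parsing the content of $\mu(x)$; hence $\mu(y)$ is contained in $\mu(x)$. Consequently, a cycle $x_1\to\cdots\to x_k\to x_1$ forces $\mu(x_1)\subseteq\mu(x_k)\subseteq\cdots\subseteq\mu(x_1)$, so all spans in the cycle coincide with a single span $s$. Because variables of an $\SRGX$ are non-nested siblings (the grammar only allows $x\{\Sigma^*\}$), a variable that captures all of $s$ forces every other variable of the same formula to capture an empty span; I would use this to show that each nonempty-span strongly connected component of $G_\varphi$ is, on its internal edges, a single directed cycle, that every other variable occurring in a cycle formula is pinned to an empty span (and so, recursively, is its whole subtree), and that the content language such a component can impose is trivial --- it is $\Sigma^*$ when the component is satisfiable and $\emptyset$ otherwise, so no regular-language intersection is ever needed.

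With the rigidity lemma in hand, I would break each cycle $x_1\to\cdots\to x_k\to x_1$ by replacing its conjuncts with the forward chain $x_1.(x_2)\wedge\cdots\wedge x_{k-1}.(x_k)$ together with a terminal conjunct for $x_k$ that no longer mentions $x_1$. Each $x_i.(x_{i+1})$ now \emph{forces} $\mu(x_{i+1})=\mu(x_i)$ rather than mere containment, so the chain reproduces the equalities $\mu(x_1)=\cdots=\mu(x_k)$ previously enforced only through the cycle, and it imposes no content constraint --- which is exactly right, since a satisfiable component constrains $s$ only by $\Sigma^*$. The variables that the original cycle formulas pinned to empty spans are reattached as a dag hanging off the component's representative, each forced to its boundary-determined empty span by $\varepsilon$-restricted constraints propagated down its subtree. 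Iterating over the components in the topological order of the condensation of $G_\varphi$ turns the whole graph acyclic; since every rewrite is local and linear in the sizes of the formulas, and no language operations beyond the trivial $\Sigma^*/\emptyset$ case arise, the procedure runs in polynomial time.

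The step I expect to be the main obstacle is the cycle-breaking itself, and specifically the bookkeeping around the variables pinned to empty spans. Re-imposing the equalities is clean because each forward edge can be strengthened to an equality once we know the component's language is $\Sigma^*$; the delicate point is that $\SRGX$ cannot directly force a capture to be the empty span, yet the semantics requires these downstream variables to appear in the output at a \emph{specific} boundary position of $s$. Getting this right --- reintroducing each such variable at the correct endpoint, forcing it and all of its descendants empty through $\varepsilon$-constraints, and doing so without ever using a variable as the head of two conjuncts (which a simple rule forbids) --- is where the argument needs the most care, and is also where one must verify that the rewritten rule is still functional and genuinely dag-like.
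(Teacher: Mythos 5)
Your proposal is correct and follows essentially the same route as the paper's proof: containment along edges of $G_\varphi$, equality of spans around a cycle, emptiness of chorded components and of everything reachable from a cycle, detection of the unsatisfiable case, and cycle-breaking by a linear chain processed over the SCCs in topological order. The bookkeeping you flag as delicate is exactly what the paper handles with its black/red/green colouring and the rewriting $\nu$ that deletes letters and starred subexpressions while preserving the relative order of variables, so that each empty-pinned variable keeps its boundary position and receives a separate $y.\varepsilon$ conjunct without violating simplicity.
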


It is remarkable that the algorithm for removing cycles runs in polynomial time and, furthermore, it produces a single rule. 
We think that this result is interesting in its own right and potentially useful in other contexts regarding the use of rules in information extraction.

\medskip

\noindent{\bf Unions of simple rules capture RGX.}
We now know that cycles can be eliminated from functional rules, but is there any way to removes cycles from rules that are non-functional? Moreover, can we go even further from dag-like rules, and convert each rule into a tree-like rule? 
Unfortunately, one can easy show that all these questions have a negative answer since non-functional cyclic rules, and even functional dag-like rules, have the ability to express some sort of disjunction.
For this reason, we introduce here the class of \emph{unions of simple rules} and compare its expressive power with \RGX.  
Formally, {\em union of simple rules} is a set of simple rules $A$. The semantics $\semd{A}$ over a 
document $d$ is defined as all mapping $\mu$ over $d$ such that $\mu \in \semd{\varphi}$ for some $\varphi\in A$.

We start by extending our results for removing cycles from functional to non-functional rules. 
As it turns out, although functional and non-functional rules are not equivalent, every non-functional simple rule can in fact be expressed as a union of functional rules.
Then, by combining this fact with Theorem~\ref{thm:simple_to_dag} one can show that each non-functional rule can be made acyclic by transforming it to a  \emph{union of dag-like rules}. 

\begin{proposition}\label{prop:nonfunc}
	Every simple rule is equivalent to a union of functional dag-like rules. 
\end{proposition}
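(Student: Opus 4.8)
The plan is to reduce to the functional case already handled by Theorem~\ref{thm:simple_to_dag}, in two steps. First I would show that every simple rule is equivalent to a finite \emph{union of functional simple rules}. Then, applying Theorem~\ref{thm:simple_to_dag} to each such rule turns it into a functional dag-like rule, and since unions of simple rules are trivially closed under taking unions, the resulting union of functional dag-like rules is equivalent to the original rule, which is exactly what the statement asks.

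For the first step, recall the corollary of Theorem~\ref{th-stack_rgx} that every \RGX\ is equivalent to a (possibly exponential) union of functional \RGX; applied to \SRGX, the disjuncts can be kept functional \SRGX, since functionality for \SRGX\ constrains only the variable occurrences and not the shape of the captured spans (which stay of the form $x\{\Sigma^*\}$). Writing $\varphi_i \equiv \bigvee_{j \in J_i} \psi_{i,j}$ for each formula of the rule $\varphi = \varphi_0 \wedge x_1.\varphi_1 \wedge \cdots \wedge x_m.\varphi_m$, I would associate to each choice function $f$ (picking $f(i) \in J_i$ for $i \in \{0,\ldots,m\}$) the rule $\varphi_f = \psi_{0,f(0)} \wedge x_1.\psi_{1,f(1)} \wedge \cdots \wedge x_m.\psi_{m,f(m)}$. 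Each $\varphi_f$ is a functional simple rule (the $x_i$ remain pairwise distinct), and the claim to prove is $\semd{\varphi} = \bigcup_f \semd{\varphi_f}$.

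The key observation making this equivalence work is that $\ivar(\varphi, \overline{\mu})$ depends only on the domains $\dom(\mu_0), \ldots, \dom(\mu_m)$ and the indexing $x_1, \ldots, x_m$, and \emph{not} on the \SRGX\ formulas $\varphi_i$; hence $\ivar(\varphi, \overline{\mu}) = \ivar(\varphi_f, \overline{\mu})$ for every $f$ and every tuple $\overline{\mu}$. Since each $\psi_{i,f(i)}$ is a disjunct of $\varphi_i$ and disjunction distributes over the $x_i.(\cdot)$ construct at the span level $\sempd{\cdot}$ (the decomposition holds over all documents, hence over all infixes), we get $\semd{x_i.\psi_{i,f(i)}} \subseteq \semd{x_i.\varphi_i}$ and $\semd{x_i.\varphi_i} = \bigcup_j \semd{x_i.\psi_{i,j}}$. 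The inclusion $\supseteq$ is then immediate: a tuple $\overline{\mu}$ satisfying $\varphi_f$ satisfies the three conditions for $\varphi$ verbatim. For $\subseteq$, given $\mu \in \semd{\varphi}$ witnessed by $\overline{\mu}$, I would use condition~(1) to select $f(0)$ with $\mu_0 \in \semd{\psi_{0,f(0)}}$ and, for each instantiated $x_i$, condition~(2) to select $f(i)$ with $\mu_i \in \semd{x_i.\psi_{i,f(i)}}$ (choosing $f(i)$ arbitrarily when $x_i \notin \ivar$, where $\mu_i = \emptyset$); because $\ivar$ is unchanged, $\overline{\mu}$ satisfies all three conditions for this $\varphi_f$.

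Finally, applying Theorem~\ref{thm:simple_to_dag} to each functional $\varphi_f$ produces an equivalent functional dag-like rule $\varphi_f'$, and $\bigcup_f \semd{\varphi_f'} = \bigcup_f \semd{\varphi_f} = \semd{\varphi}$ as required. The main obstacle I anticipate is the bookkeeping in the first step: one must ensure that resolving the internal disjunctions of each $\varphi_i$ does not alter which variables are forced to be instantiated, and the clean way to guarantee this is precisely the formula-independence of $\ivar$, which decouples the nondeterministic-choice semantics of a rule from the internal structure of its \SRGX\ constituents.
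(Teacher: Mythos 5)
Your proposal is correct and follows essentially the same route as the paper: decompose each \(\SRGX\) constituent into a union of functional \(\SRGX\) via the path-union construction behind Theorem~\ref{th-stack_rgx}, form one rule per choice of disjuncts, and then invoke Theorem~\ref{thm:simple_to_dag} on each resulting functional simple rule. Your explicit observation that \(\ivar(\varphi,\overline{\mu})\) depends only on the domains of the mappings and the indexing of the \(x_i\), not on the formulas themselves, is a cleaner justification of the equivalence than the paper's brief remark that it is ``fairly easy to observe.''
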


Now that the connection with union of acyclic rules is settled, our next step is to understand when dag-like rules can be defined by \RGX\ formulas and, moreover, when can they be converted into tree-like rules.
First, observe that a functional \RGX\ formula is always satisfiable; namely, there is always a document on which there is an assignment satisfying this formula. Similarly, every functional tree-like rule is also satisfiable. On the other hand, the functional simple rule $x\wedge x.y \wedge y.ax$ is clearly not satisfiable, since it forces $x$ and $y$ to be equal and different at the same time. Therefore, to link rules with \RGX, we should consider only the satisfiable ones.

\begin{proposition}\label{thm:dag_to_tree}
	Every dag-like rule that is satisfiable is equivalent to a union of functional tree-like rules.
\end{proposition}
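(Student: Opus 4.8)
The plan is to reduce the statement to the functional case and then to enumerate the finitely many hierarchical ``shapes'' that a satisfying mapping can take, producing one functional tree-like rule per shape.

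First I would reduce. Since every dag-like rule is by definition simple, Proposition~\ref{prop:nonfunc} rewrites the given rule as a finite union of functional dag-like rules. Discarding the unsatisfiable disjuncts (which contribute no mappings) and using that the original rule is satisfiable, it suffices to prove that every satisfiable functional dag-like rule $\varphi$ is equivalent to a union of functional tree-like rules; a union of such unions is again a union, so the result follows. Recall also that every functional tree-like rule is satisfiable, so the rules I eventually build are genuine, satisfiable rules.

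The conceptual core is a hierarchicity lemma that I would establish first: every mapping in $\semd{\varphi}$ is hierarchical. The intuition is that a functional $\SRGX$ body $\varphi_i$ evaluated over the span of $x_i$ captures the variables of $\Var(\varphi_i)$ as pairwise disjoint subspans of that span, so each local mapping is hierarchical, while every edge $(x,y)$ of $G_\varphi$ forces the containment $\mu(y)\subseteq\mu(x)$. Arguing by induction on a topological order of the dag $G_\varphi$, any two instantiated variables are related either through siblings captured by a common body (hence forced disjoint) or along an ancestor chain (hence forced nested), which rules out proper overlap. This lemma is what makes the target attainable at all, since by the same reasoning functional tree-like rules only ever output hierarchical mappings.

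The main construction then guesses a \emph{shape}: a forest on $\Var(\varphi)\cup\{\texttt{doc}\}$ rooted at $\texttt{doc}$ together with a left-to-right sibling order, recording how the captured spans nest and are ordered. There are finitely many shapes, independent of the document. For each shape I build a single tree-like rule whose graph is exactly that forest, the body attached to a variable capturing its children in the guessed order; the containment edges of $G_\varphi$ are subsumed by the nesting, and each original body $\varphi_i$ is compiled into sibling-order constraints together with the variable-free gap languages it imposes between consecutive captures, distributed along the corresponding branch. A shape whose ordering or containment demands contradict some $\varphi_i$ or $\varphi_0$ yields an unsatisfiable rule and is dropped. Equivalence is then checked in both directions: by the lemma every mapping of $\varphi$ realises a unique hierarchical shape and satisfies that shape's rule, and faithfulness of the compilation gives the converse.

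The main obstacle is exactly this compilation step: for a fixed shape, all gap and content constraints coming from the different $\varphi_i$ must be pushed down the tree and merged, so that intersecting a variable-free gap language with the body of a nested descendant, and combining the order constraints that a variable inherits from its several former parents, can always be re-expressed as one functional $\SRGX$ body. This is where the $\SRGX$ normal form is essential, since nested captures are always $\Sigma^*$ and hence only ``flat'' gap languages need to be intersected, and where proving faithfulness relies on the hierarchicity lemma to align the edges of $G_\varphi$ with the nesting of the shape.
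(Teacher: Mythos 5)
Your overall architecture matches the paper's: reduce to functional bodies via Proposition~\ref{prop:nonfunc}, observe that satisfying mappings of functional dag-like rules are forced to be hierarchical, enumerate finitely many global structures (your ``shapes''; the paper instead enumerates all combinations of disjunct-free path-\SRGX{} bodies and then dissolves the undirected cycles one by one), and emit one functional tree-like rule per structure, yielding a double-exponential union. The reduction step is sound and non-circular, and the hierarchicity lemma is true and pointing in the right direction.

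The gap is in the compilation step, which you correctly flag as the main obstacle but then resolve with the wrong mechanism. When a variable $z$ has two parents $x$ and $y$ in $G_\varphi$ and your shape nests $z$ under $x$, the constraint that $\varphi_y$ imposes on the region of $\mu(y)$ surrounding $\mu(z)$ lives in a different span from the constraints of $\varphi_x$, so ``intersecting'' or ``distributing'' the gap languages along the branch is not a well-defined operation: a regular gap language such as $(ab)^*$ cannot be split at an arbitrary point into a concatenation of two independent regular constraints. What actually makes the merge possible --- and is the heart of the paper's proof --- is a forced collapse coming from functionality: if two distinct directed paths lead from a common node $w$ to $z$, their first edges go to distinct siblings captured by $\varphi_w$, whose spans are disjoint yet both contain $\mu(z)$; hence $\mu(z)$ has empty content, and the same argument forces empty content on everything strictly ``between'' the two paths. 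Consequently the redundant parent's constraint degenerates to checking that certain subexpressions accept $\varepsilon$ (the shape is dropped if not), and the redundant edge can simply be deleted; no genuine intersection of gap languages ever occurs. Your stated reason (``nested captures are always $\Sigma^*$, so only flat gap languages need intersecting'') does not supply this, so as written the compilation would fail on shapes where a multi-parent variable inherits nontrivial gap constraints from both parents. Adding the forced-$\varepsilon$ observation closes the gap and essentially recovers the paper's argument.
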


The idea of the proof here is similar to the cycle elimination procedure of Theorem \ref{thm:simple_to_dag}, but this time considering undirected cycles. One can show that eliminating undirected cycles results in a double exponential number of tree-like rules. In case that the rule was not satisfiable, our algorithm will simply abort.


With this at hand, we can now describe the relationship between unions of simple rules and \RGX. Indeed, a union of simple rules is equivalent to a  union of dag-like rules by Proposition~\ref{prop:nonfunc} and this union is equivalent to a union of functional tree-like rules by Proposition~\ref{thm:dag_to_tree} (if some dag-like rule is not satisfiable, we just output an unsatisfiable non-fuctional RGX formula in our algorithm from Proposition \ref{thm:dag_to_tree}). 
Then one can easily see that any functional tree-like rule $\varphi$ is equivalent to a $\RGX$ formula given that each (singleton) formula $x.R$ in $\varphi$ can be removed by composing the tree structure recursively with formulas of the form $x\{R\}$.
Conversely, one can show that each $\RGX$ formula can be defined as a union of simple rules.

\begin{theorem}\label{thm:rgx:simple}
	\RGX\ formulas and unions of simple rules are equivalent. Moreover, every \RGX\
  formula is equivalent to a union of tree-like rules.
\end{theorem}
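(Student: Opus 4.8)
The plan is to prove that the three classes coincide by chaining the structural results already established and adding one genuinely new ingredient: a semantics-preserving translation between \emph{functional tree-like rules} and functional \RGX\ formulas. Concretely, I would establish the inclusion of \RGX\ in the unions of tree-like rules (this is the ``moreover'' part, and it immediately gives that \RGX\ is contained in the unions of simple rules, since every tree-like rule is by definition simple), together with the reverse inclusion of the unions of simple rules in \RGX. Since tree-like rules are simple, these two inclusions force all three classes---\RGX, unions of simple rules, and unions of tree-like rules---to collapse into one, proving both sentences of the theorem at once.

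For the direction from \RGX\ to unions of tree-like rules, I would first invoke the corollary of Theorem~\ref{th-stack_rgx}, which rewrites an arbitrary \RGX\ $\gamma$ as a finite disjunction $\gamma_1 \vee \cdots \vee \gamma_n$ of \emph{functional} \RGX\ formulas (the empty union when $\gamma$ is unsatisfiable). It then suffices to turn a single functional \RGX\ into a functional tree-like rule. I would do this by a ``flattening'' map: replace every outermost capture $x\{\gamma'\}$ inside $\gamma_i$ by $x\{\Sigma^*\}$ to obtain an \SRGX\ root formula $\varphi_0$, emit one extraction formula $x.\sigma(\gamma')$ for that capture (where $\sigma(\gamma')$ is $\gamma'$ flattened the same way), and recurse into $\gamma'$. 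The resulting rule $\varphi_i$ has $\varphi_0$ as its root together with one formula $x.\sigma(\gamma')$ for every capture occurring in $\gamma_i$. It is simple because functionality guarantees each variable occurs exactly once, and $G_{\varphi_i}$ is a tree because the nesting of captures in $\gamma_i$ is a forest whose roots are precisely the children of $\texttt{doc}$.

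For the converse, from unions of simple rules to \RGX, I would compose the two propositions already proved: Proposition~\ref{prop:nonfunc} turns the union into a union of functional dag-like rules, and Proposition~\ref{thm:dag_to_tree} turns each satisfiable dag-like rule into a union of functional tree-like rules (an unsatisfiable dag-like rule is replaced by an unsatisfiable \RGX, which contributes nothing). Each functional tree-like rule is then mapped back to a functional \RGX\ by the inverse ``composition'' operation: starting from $\varphi_0$ and walking the tree from $\texttt{doc}$, replace each occurrence $x\{\Sigma^*\}$ by $x\{\text{body}(x)\}$, where $\text{body}(x)$ is $\varphi_i$ with the same substitution applied recursively to the children of $x$. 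Because $G_\varphi$ is a tree this recursion terminates and each variable is substituted exactly once. Finally, since \RGX\ is closed under $\vee$, the finite disjunction of the resulting formulas is again a single \RGX\ formula, matching the union semantics of rules.

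The routine steps are the two inclusions-by-propositions and the closure of \RGX\ under disjunction; the main obstacle is verifying that flattening and composition preserve semantics, i.e.\ that $\semd{\gamma_i} = \semd{\varphi_i}$ for every document $d$. I would prove this by induction on the nesting/tree depth. The crux is to reconcile the rule satisfaction relation $\overline{\mu}\models_d\varphi$---with its tuple of mappings, the compatibility requirement $\mu_i\sim\mu_j$, and the span $\mu_0(x)$ being pinned down by the root formula---with the nested span-and-mapping semantics $\sempd{x\{\text{body}(x)\}}$, in which the captured span is forced to match $\text{body}(x)$. Functionality is what makes this tractable: in a functional rule every variable is always instantiated, so $\ivar(\varphi,\overline{\mu}) = \Var(\varphi)$ and the nondeterministic bookkeeping of condition~(2) disappears, whence the union $\bigcup_i\mu_i$ over a satisfying tuple corresponds exactly to the single mapping built by the nested captures, and conversely. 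Once this key lemma is in place, both directions follow by assembling the pieces above.
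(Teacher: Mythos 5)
Your overall architecture is the same as the paper's: the converse direction chains Proposition~\ref{prop:nonfunc} and Proposition~\ref{thm:dag_to_tree} and then ``composes'' each functional tree-like rule back into a nested \RGX\ (the paper's Lemma stating that every tree-like rule can be transformed into an equivalent \RGX\ does exactly your recursive substitution), and the forward direction is a decompose-then-flatten argument. However, there is a genuine gap in the forward direction as you state it. You claim that ``it suffices to turn a single functional \RGX\ into a functional tree-like rule'' and justify simplicity of the resulting rule by ``functionality guarantees each variable occurs exactly once.'' Both claims are false. Functionality only guarantees that each variable is assigned exactly once along every derivation; it does not prevent a variable from occurring syntactically in several disjuncts with different bodies. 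The paper's own Theorem~\ref{prop-rgx-rules} exhibits the functional \RGX\ $(a \cdot x\{b\}) \vee (b \cdot x\{a\})$ and proves it has \emph{no} equivalent extraction rule at all. Your flattening map applied to it would emit the two extraction formulas $x.b$ and $x.a$, which either violates simplicity (the $x_i$ must be pairwise distinct) or, if merged, forces the content of $x$ to match both $a$ and $b$ --- the wrong semantics.

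The fix is precisely the extra decomposition step the paper performs: do not stop at functional \RGX, but use the path-union ($\PUStk$) construction from Theorem~\ref{th-stack_rgx} to decompose $\gamma$ into a union of \emph{path} \RGX\ formulas, i.e.\ disjuncts in which disjunction never occurs at the variable level, so that each variable has at most one capture site with a single body. On such formulas your flattening map is well defined, produces a simple tree-like rule, and your semantic induction (including the observation that $\ivar(\varphi,\overline{\mu}) = \Var(\varphi)$ for functional tree-like rules, which is correct) goes through. With that replacement your proof matches the paper's.
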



\section{Evaluation of languages for extracting incomplete data}\label{sec:evaluation}

In this section, we study the computational complexity of evaluating an extraction expression $\gamma$ over a document $d$, namely, the complexity of enumerating all mappings $\mu \in \semd{\gamma}$.
Given that we are dealing with an enumeration problem, our objective is to obtain a \emph{polynomial delay} algorithm~\cite{johnson1988generating}, i.e., an algorithm that enumerates all the mappings in $\semd{\gamma}$ by taking  time polynomial in the size of $\gamma$ and $d$ between outputting two consecutive results. For this analysis, our objective is to determine which decision problems can be used to faithfully model the process of enumerating all the outputs of an IE expression, and then study their complexity. We formally define our decision problems in Subsection \ref{ss:enum} and show that in full generality, none of the languages we consider can be enumerated efficiently. In Subsection \ref{ss:tractable} we then identify several fragments that can be evaluated with a polynomial delay.

\subsection{Decision problems for enumeration}\label{ss:enum}
In order to formally define the decision problems modelling query enumeration we need to introduce some notation first. 
Let $\bot$ be a new symbol. 
An \emph{extended mapping} $\mu$ over $d$ is a partial function from $\VV$ to $\sub(d) \cup \{\bot\}$.
Intuitively, in our decision problem $\mu(x) = \bot$ will represent that the variable $x$ will not be mapped to any span.
Furthermore, we usually treat $\mu$ as a normal mapping by assuming that $x$ is not in $\dom(\mu)$ for all variables $x$ that are mapped to $\bot$.
Given two extended mappings $\mu$ and $\mu'$, we say that $\mu \subseteq \mu'$ if, and only if, $\mu(x) = \mu'(x)$ for every $x \in \dom(\mu)$. 
Then for any language $\LL$ for information extraction we define the main decision problem for evaluating expressions from $\LL$, called $\EVAL[\LL]$, as follows:
\begin{center}
	\framebox{
		\begin{tabular}{rl}
			\textbf{Problem:} & $\EVAL[\LL]$\\
			\textbf{Input:} &  An expression $\gamma \in \LL$, a document $d$,  \\
			& and an extended mapping $\mu$.  \\
			\textbf{Question:} & Does there exist $\mu'$ such that \\
			&  $\mu \subseteq \mu'$ and  $\mu' \in \semd{\gamma}$?
		\end{tabular}
	}
\end{center}
In other words, in $\EVAL[\LL]$ we want to check whether $\mu$ can be extended to a mapping $\mu'$ that satisfies $\gamma$ in $d$.
Note that in our analysis we will consider the combined complexity of $\EVAL[\LL]$.

We claim that $\EVAL[\LL]$ correctly models the problem of enumerating all mappings in $\semd{\gamma}$. 
Indeed, if we can find a polynomial time algorithm for deciding $\EVAL[\LL]$, one can have a polynomial delay algorithm for enumerating the mappings in $\semd{\gamma}$ as given in Algorithm \ref{alg:enum1}.

\begin{algorithm}
  \caption{Enumerate all spans in \(\semd{\gamma}\)}
  \label{alg:enum1}
  \begin{algorithmic}[1]
    \Procedure{Enumerate}{$\gamma, d, \mu, V$}
      \If{\(V = \emptyset\)}
        \State{\textbf{output} \(\mu\) and \Return}
      \EndIf
      \State{Let \(x\) be some element from \(V\)}
      \For{\(s \in \sub(d) \cup \{\bot\}\)}
        \If{\(\EVAL[\LL](\gamma, d, \mu[x \to s])\)}
          \State{
            \Call{Enumerate}{$\gamma, d, \mu[x \to s], V\,\setminus\,\{x\}$}}
        \EndIf
      \EndFor
    \EndProcedure
  \end{algorithmic}
\end{algorithm}

The procedure starts with the empty mapping $\mu = \emptyset$ and the set $V$ of variables yet to be assigned equal to $\Var(\gamma)$. 
For a variable $x \notin \dom(\mu)$ we iterate over all $s \in \sub(d)$ (or the symbol $\bot$ signalling that $x$ is not assigned) and check if $\EVAL[\LL](\gamma, d, \mu[x \rightarrow s])$ is true where $\mu[x \rightarrow s]$ is an extended mapping where $x$ is assigned to $s$ (lines 4 through 6).
If the answer is positive, then in line 7 we recursively continue with the mapping $\mu[x \rightarrow s]$ (i.e. we know that the set of answers is non-empty).
Finally, we print the mapping $\mu$ when all variables in $\Var(\gamma)$ are assigned a span or the symbol $\bot$ (i.e. $V=\emptyset$ in line 2). 

We can therefore obtain the following.
\begin{theorem}\label{prop:evaluation}
	If $\EVAL[\LL]$ is in $\PTIME$, then enumerating all mappings in $\semd{\gamma}$ can be done with polynomial delay.
\end{theorem}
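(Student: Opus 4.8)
The statement is really an analysis of Algorithm~\ref{alg:enum1}, so the plan is to prove two things about it under the hypothesis that $\EVAL[\LL]$ is solvable in polynomial time: (I) the procedure \textsc{Enumerate}$(\gamma,d,\emptyset,\Var(\gamma))$ outputs exactly the mappings of $\semd{\gamma}$, each exactly once; and (II) the time elapsed between two consecutive outputs (and before the first, and after the last) is polynomial in $|\gamma|$ and $|d|$. The whole argument hinges on the fact that $\EVAL[\LL]$ tests \emph{extendability} rather than mere membership, so that the guard on line~6 lets the search descend only into branches that are guaranteed to contain an output.

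For correctness I would set up the recursion-tree view of the computation and prove the invariant that \textsc{Enumerate} is only ever invoked on a \emph{live} partial extended mapping $\mu$, i.e.\ one for which there exists $\mu'\in\semd{\gamma}$ with $\mu\subseteq\mu'$. This holds at every non-root call by construction, since the recursive call on line~7 is guarded by $\EVAL[\LL](\gamma,d,\mu[x\to s])$ on line~6, and then propagates by induction down the tree; at the root it either holds or, if $\semd{\gamma}=\emptyset$, all guards fail and nothing is produced (one may also prefix a single $\EVAL[\LL](\gamma,d,\emptyset)$ test to handle the degenerate case $\Var(\gamma)=\emptyset$). Soundness then follows because a leaf is reached only when $V=\emptyset$, so $\mu$ assigns every variable of $\Var(\gamma)$ a span or $\bot$; liveness of this fully specified $\mu$ forces the witnessing $\mu'$ to agree with $\mu$ on $\Var(\gamma)$, and since mappings in $\semd{\gamma}$ have domain contained in $\Var(\gamma)$ we get $\mu=\mu'\in\semd{\gamma}$. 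Completeness follows by tracing, for any target $\mu'\in\semd{\gamma}$, the unique root-to-leaf path that assigns each chosen variable its value under $\mu'$ (with $\bot$ for variables outside $\dom(\mu')$); every guard on this path succeeds because each prefix extends to $\mu'$. Finally, since the variable chosen on line~4 depends only on the current $V$, distinct leaves fix the variables differently and hence output distinct mappings, giving the no-repetition property.

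For the delay I would exploit the liveness invariant in its strong form: in the tree of calls actually made, \emph{every} node has an output in its subtree. Hence, starting from any live node, a descent that repeatedly picks a working value of $s$ reaches a leaf in at most $k:=|\Var(\gamma)|$ steps, trying at most $|\sub(d)|+1$ values per node before a successful guard is found. Between two consecutive outputs the computation backtracks up at most $k$ levels, resuming each ancestor's \textbf{for} loop (at most $|\sub(d)|+1$ iterations per level), and then descends again through at most $k$ levels; every iteration performs one $\EVAL[\LL]$ call plus $O(\mathrm{poly})$ bookkeeping. This bounds the number of $\EVAL[\LL]$ calls between two successive outputs (and likewise before the first and after the last) by $O\!\left(k\cdot(|\sub(d)|+1)\right)=O(k\,|d|^2)$. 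Since each such call runs in polynomial time by hypothesis and $|\sub(d)|=O(|d|^2)$, the total delay is polynomial in $|\gamma|$ and $|d|$.

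The routine parts are the soundness and completeness bookkeeping; the step that carries the real content is the delay bound, and specifically the observation that replacing a membership test by the extendability test $\EVAL[\LL]$ is exactly what guarantees that no branch of the search is fruitless. Without that, backtracking could explore exponentially many dead subtrees between two outputs and the polynomial-delay bound would fail; with it, the cost charged between consecutive outputs is confined to the $O(k)$ levels traversed on the backtrack-and-descend path, each contributing only $O(|d|^2)$ guarded iterations.
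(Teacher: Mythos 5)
Your proposal is correct and follows essentially the same route as the paper: it analyzes the same guarded backtracking algorithm, uses the same key observation that the $\EVAL[\LL]$ guard tests extendability so the search never enters a fruitless subtree, and derives the same $O(|\Var(\gamma)|\cdot(|\sub(d)|+1))$ bound on the number of $\EVAL[\LL]$ calls between consecutive outputs. Your write-up is somewhat more explicit than the paper's about the liveness invariant and the backtrack-and-descend accounting, but the argument is the same.
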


Notice that Theorem \ref{prop:evaluation} is a general result allowing us to reason about efficient enumeration of IE languages. That is, when we want to show that {\em any} IE language $\LL$ can be enumerated efficiently, we simply need to show that $\EVAL[\LL]$ is in $\PTIME$. This is in contrast with approaches such as \cite{FKP17}, which, while providing a faster algorithm than the ones we derive below, are applicable to a single fixed language $\LL$. 

Before continuing we would like to stress the importance of selecting the correct decision problem to model query enumeration. Indeed, while $\EVAL[\LL]$ might seem somewhat counter intuitive at a first glance, as Theorem \ref{prop:evaluation} shows, efficiently solving $\EVAL[\LL]$ gives an efficient enumeration procedure. A more common variation of the evaluation problem, would ask if, given a mapping $\mu$, an expression $\gamma\in \LL$, and a document $d$, it holds that $\mu \in \semd{\gamma}$. We call this version of evaluation model checking and denote it with $\mdcheck[\LL]$. Model checking problem for subclasses of variable set automata that output relations was studied in \cite{F17} (under the name evaluation), where a \PTIME\ algorithm is given for a subclass of VA automata. Unfortunately, solving model checking efficiently does not help us with the enumeration problem, since we would have to check each mapping one by one -- a task that can produce an exponential gap between two consecutive outputs. On the other hand, it is straightforward to see that model checking is a special case of $\EVAL$.

%
%

Notice, however, that showing $\EVAL[\LL]$ to be hard does not necessarily rule out the existence of a polynomial delay enumeration procedure for $\LL$. For this, we need to consider a related problem of checking non-emptiness.
Formally, the {\em non-emptiness problem}, denoted $\nemp[\LL]$, asks, given a document $d$ and an expression $\gamma$, whether $\semd{\gamma}= \emptyset$. One can easily see that non-emptiness is actually a restricted instance of $\EVAL[\LL]$, namely: $\nemp[\LL](\gamma, d) =  \EVAL[\LL](\gamma, d, \emptyset)$.
This implies that if we find an efficient algorithm for $\EVAL[\LL]$ then the same holds for $\nemp[\LL]$, and that showing $\nemp[\LL]$ to be $\NP$-hard implies the same for $\EVAL[\LL]$.
More importantly, if we can show that $\nemp[\LL]$ is difficult, then no polynomial delay algorithm for $\LL$ can exist (under standard complexity assumptions), as we could simply run the enumeration procedure until the first output is produced. Note on the other hand that showing e.g. $\NP$-hardness of $\mdcheck[\LL]$ does also not necessarily imply that efficient enumeration is not possible. 
As we are interested in query enumeration, we will therefore not consider the model checking problem in the remainder of this paper. 

We would like to note that \cite{AMRV16} and \cite{F17} already considered the non-emptiness problem and the model checking problem. In the following results we will point out when a (weaker) version of our result was proved in one of the two works. Generally, we can use \cite{AMRV16,F17} to derive some lower bounds, while we need to show the matching upper bound (when possible) separately. It is important to stress that what \cite{F17} calls evaluation is our model checking problem, and, as discussed above, can not be used to obtain an efficient algorithm for enumeration or $\EVAL$, nor tell us when enumeration with polynomial delay is not possible.


Now that we identified the appropriate decision problem, we start by understanding the complexity of $\EVAL[\LL]$ in the most general case.
It is easy to see that checking $\EVAL[\LL]$ is in $\NP$ for all languages and computational models considered in this paper. Indeed, given a mapping $\mu'$ such that $\mu \subseteq \mu'$ one can check in $\PTIME$ if $\mu' \in \semd{\gamma}$ by using finite automata evaluation techniques~\cite{HopcroftU79}. As the following result shows, this is the best that one can do if $\RGX$ or variable-set automata contain the language of $\SRGX$, as non emptiness is already hard for this fragment.
\begin{theorem}\label{theo:eval-hardness-general}
$\nemp[\SRGX]$ is $\NP$-compl.
\end{theorem}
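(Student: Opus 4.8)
This direction is immediate from the discussion preceding the theorem: since $\nemp[\SRGX](\gamma, d) = \EVAL[\SRGX](\gamma, d, \emptyset)$ and $\EVAL[\LL]$ is in $\NP$ for every language $\LL$ considered here, so is $\nemp[\SRGX]$. Concretely, one guesses a candidate mapping $\mu'$ and verifies $\mu' \in \semd{\gamma}$ in polynomial time by treating $\gamma$ as an ordinary regular expression over $\Sigma$ together with the variable markers and running standard automata evaluation.

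\textbf{Hardness: reduction from $\ONETHREESAT$.} For the lower bound I would reduce from $\ONETHREESAT$: given Boolean variables $v_1, \dots, v_n$ and clauses $C_1, \dots, C_m$, each with three literals, decide whether there is an assignment making exactly one literal true per clause. The plan is to build, in polynomial time, a document $d$ and an $\SRGX$ formula $\gamma$ such that $\semd{\gamma} \neq \emptyset$ if and only if the instance is $1$-in-$3$ satisfiable. The construction exploits the following structural feature of the $\SRGX$ semantics of Table~\ref{tab-semantics}: along any single parse witnessing a pair $(s, \mu) \in \sempd{\gamma}$, the concatenation rule imposes $\dom(\mu_1) \cap \dom(\mu_2) = \emptyset$, and the rule for $x\{\Sigma^*\}$ fires only when $x \notin \dom(\mu')$. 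Hence every variable behaves as a \emph{use-once token}: it can be bound at most once over the whole parse, and an attempt to bind it succeeds only if it has not been bound before. Beyond the current position reached in $d$, the only state a parse carries forward is precisely this set of already-used variables. I would use disjunctions to let the parse nondeterministically guess the truth assignment (a two-way choice per $v_i$, recorded by committing a dedicated token), and then use the use-once discipline inside per-clause gadgets to test the clause constraints; the three-way disjunction of each clause gadget fires exactly once, which encodes the ``exactly one true literal'' requirement of $\ONETHREESAT$.

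\textbf{Main obstacle.} The delicate point is enforcing \emph{global consistency} of the guessed assignment, namely that a Boolean variable mentioned in several clauses receives a single truth value everywhere. An $\SRGX$ parses the document strictly left to right and cannot re-read a span captured earlier, so the only coupling available between distant gadgets is the token mechanism, which can test that a variable is still unbound but cannot directly read back a previous choice. The hard part will be designing the consistency gadgets so that a pair of conflicting choices (a variable forced both true and false) is exactly the situation that would require binding some shared variable twice and is therefore ruled out, while genuinely consistent choices --- including a variable that satisfies several clauses at once --- remain realizable. Once the gadgets are in place, correctness follows by showing that accepting parses of $d$ through $\gamma$ are in bijection with $1$-in-$3$ satisfying assignments, and that $d$ and $\gamma$ have size polynomial in $n + m$, which completes the reduction.
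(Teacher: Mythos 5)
Your membership argument and your choice of reduction source are both the paper's: the proof of Theorem~\ref{theo:eval-hardness-general} also reduces from \ONETHREESAT\ and also rests on exactly the ``use-once token'' reading of the semantics that you extract from Table~\ref{tab-semantics} (disjoint domains under concatenation, $x \notin \dom(\mu')$ under capture). However, the step you defer as ``the main obstacle'' --- enforcing global consistency of the guessed assignment --- is precisely the content of the reduction, and your sketch of it does not go through as stated. The difficulty you anticipate is real: with a separate assignment-guessing phase that commits a dedicated token per Boolean variable, a clause gadget can only \emph{test} a token by \emph{binding} it, which consumes it; a Boolean variable that is the designated true literal of two different clauses would then need its token twice, so genuinely consistent assignments would be wrongly rejected.

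The paper avoids a global guessing phase altogether. It takes $d=\varepsilon$ (so every capture is an empty span and no document engineering is needed) and lets the three-way disjunction of the gadget $\gamma_i$ for clause $C_i$ itself play the role of the assignment: choosing the $j$-th disjunct declares $p_{i,j}$ to be the unique true literal of $C_i$. Consistency is then enforced \emph{pairwise}: for every ordered pair of literal occurrences $p_{i,j}$ and $p_{k,\ell}$ with $i<k$ that are ``in conflict'' --- meaning that $p_{i,j}$ being the true literal of $C_i$ forces $p_{k,\ell}$ to be a false choice for $C_k$ --- the construction introduces a fresh variable $y_{i,j,k,\ell}$ that is concatenated into \emph{both} the $j$-th disjunct of $\gamma_i$ and the $\ell$-th disjunct of $\gamma_k$, and appears nowhere else. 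A pair of incompatible choices is then exactly a double binding of some $y_{i,j,k,\ell}$, which the concatenation semantics forbids, while any consistent set of choices binds each conflict variable at most once. Since each conflict variable is shared by exactly two disjuncts, the formula has polynomial size and the correspondence with $1$-in-$3$ satisfying assignments goes through. To complete your write-up, replace your assignment-phase tokens by these pairwise conflict variables; the rest of your plan (membership, the use-once mechanism, the per-clause three-way choice) matches the paper's proof.
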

We would like to remark that this result was proved in \cite{AMRV16} and here we strengthen it to allow using partial mappings.

\subsection{Tractable fragments}\label{ss:tractable}
Since Theorem \ref{theo:eval-hardness-general} implies that efficiently enumerating answers of $\RGX$ or variable-set automata is not possible unless $\PTIME=\NP$,
we now examine several syntactic restrictions that make their evaluation problem tractable.
Note that the previous negative results are considering a more general setting than the one presented in~\cite{FKRV15}, where $\RGX$ and variable-set automata are restricted to be \emph{functional} which forces them to only generate relations of spans. 
Interestingly, the functional restriction decreases the complexity of the evaluation problem for $\RGX$ as the following result shows.
\begin{proposition}\label{theo:ptime-functional-RGX}
	$\EVAL[\fRGX]$ is in $\PTIME$.
\end{proposition}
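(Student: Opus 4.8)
The plan is to reduce $\EVAL[\fRGX]$ to a reachability question in a polynomially-sized product graph. First I would convert the input functional $\RGX$ formula $\gamma$ into an equivalent variable-set automaton $A = (Q, q_0, q_f, \delta)$ via a standard Thompson-style construction; this takes polynomial time and gives $|Q| = O(|\gamma|)$ while preserving functionality. The key consequence of functionality is that, reading variable operations as alphabet symbols, every word accepted by $A$ contains each $\Open{x}$ and $\Close{x}$ with $x \in \Var(\gamma)$ exactly once and correctly nested; hence every accepting run of $A$ over any document induces a mapping that is \emph{total} on $\Var(\gamma)$, with each variable opened and closed exactly once along the run. This is precisely the point that separates the functional case from the $\NP$-hard general case of Theorem~\ref{theo:eval-hardness-general}: since each variable appears exactly once, there is no need to track the (exponentially many) possible sets of currently-open variables, and a plain product of automaton states and document positions suffices.

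Next I would dispatch the trivial rejections. We seek $\mu' \in \semd{\gamma}$ with $\mu \subseteq \mu'$, and since $\mu'$ is total on $\Var(\gamma)$, I reject immediately if the extended mapping $\mu$ sends some $x \in \Var(\gamma)$ to $\bot$, or if $\mu$ assigns a genuine span to a variable outside $\Var(\gamma)$ (no mapping of $\gamma$ can then agree with $\mu$). After this step $\mu$ is just a partial assignment of variables of $\Var(\gamma)$ to spans of $d$.

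The core step is the reachability construction. I would build a directed graph $G$ whose vertices are the configurations $(q, i)$ with $q \in Q$ and $1 \le i \le |d|+1$, and whose edges mirror $\delta$: a letter transition $(q, a, q')$ yields $(q, i) \to (q', i+1)$ whenever $d(i, i+1) = a$, an open transition $(q, \Open{x}, q')$ yields $(q, i) \to (q', i)$, and a close transition $(q, \Close{x}, q')$ yields $(q, i) \to (q', i)$. The only twist, which is how agreement with $\mu$ is enforced, is a filtering of the variable edges: I keep an open-$x$ edge at position $i$ only when $x \notin \dom(\mu)$ or $\mu(x) = (i, j)$ for some $j$, and symmetrically keep a close-$x$ edge at position $i$ only when $x \notin \dom(\mu)$ or $\mu(x) = (j, i)$ for some $j$. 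I then test whether $(q_f, |d|+1)$ is reachable from $(q_0, 1)$ in $G$, which is linear in $|G| = O(|Q|\cdot|d| + |\delta|\cdot|d|)$, hence polynomial overall.

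Correctness is where functionality pays off, and is the part I expect to require the most care. For soundness, any accepting run witnessing some $\mu' \supseteq \mu$ opens and closes each constrained $x$ exactly at the endpoints of $\mu(x)$, so all its transitions survive the filtering and trace a path to $(q_f, |d|+1)$. For completeness, a path from $(q_0,1)$ to $(q_f,|d|+1)$ in $G$ is an accepting run $\rho$ of $A$ over $d$; functionality guarantees $\mu^{\rho}$ is total and that each $x$ is opened and closed exactly once, while the edge filtering forces, for each $x \in \dom(\mu)$, this unique open (resp.\ close) to occur at the position prescribed by $\mu(x)$, whence $\mu^{\rho}(x) = \mu(x)$ and $\mu \subseteq \mu^{\rho}$; we may thus take $\mu' = \mu^{\rho}$. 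Since every step is polynomial, $\EVAL[\fRGX]$ is in $\PTIME$.
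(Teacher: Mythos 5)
Your proof is correct, but it takes a genuinely different route from the paper. The paper disposes of this proposition in one line: $\fRGX$ is a special case of sequential $\RGX$, so the statement follows from Theorem~\ref{theo:ptime-sequential-sVA}; all the real work is done there, via embedding the variable operations of $\mu$ into the document as a ``label'', coalescing coincident operations into group symbols, and reducing to NFA membership. You instead give a direct, self-contained argument: a product of automaton states with document positions, with the variable edges filtered so that each constrained variable can only be opened (resp.\ closed) at the left (resp.\ right) endpoint of its prescribed span, followed by plain reachability. The step that makes this sound is exactly the one you flag: for a functional $\gamma$, \emph{every} $q_0$-to-$q_f$ path in the Thompson automaton spells a word of $\Language{\gamma}$ over $\Sigma\cup\{\Open{x},\Close{x}\}$ and hence performs each variable operation exactly once and correctly nested, so no bookkeeping of open/available variables is needed and nondeterministic interleaving of coincident operations is handled for free by the path search. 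What each approach buys: the paper's reduction is shorter and covers the strictly larger sequential class, whereas your argument is more elementary and avoids the label/coalescing machinery entirely --- but it does not extend to $\seqRGX$, since both your $\bot$-rejection step and the completeness direction rely essentially on totality of the output mappings (a sequential run may simply never touch a variable that $\mu$ constrains). Two cosmetic points: the Thompson construction introduces $\epsilon$-transitions, which you should either eliminate or add as position-preserving edges of $G$; and you should say explicitly that variables of $\dom(\mu)$ outside $\Var(\gamma)$ that are mapped to $\bot$ are harmless rather than grounds for rejection.
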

This result proves that the functional restriction for $\RGX$ introduced in~\cite{FKRV15} is crucial for getting tractability. 
The question that now remains is what the necessary restrictions are that make the evaluation of $\RGX$ tractable when outputting mappings and how to extend these restrictions to other classes like variable-set automata.
One possible approach is to consider variable-set automata that produce only relations. 
Formally, we say that a variable-set automaton $\cA$ is \emph{relational} if for all documents $d$, the set $\semd{\cA}$ forms a relation. 
As the next result shows, this semantic restriction is not enough to ensure the tractability of query enumeration.
\begin{proposition}\label{theo:eval-hardness-relational-VA}
	$\nemp$ of relational VA automata is $\NP$-complete\footnote{Here $\NP$-hardness of $\nemp$ already follows from the results of \cite{AMRV16,F17}.}.
\end{proposition}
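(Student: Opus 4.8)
The plan is to prove both membership in $\NP$ and $\NP$-hardness, with essentially all the work in the latter. Membership is immediate from the bound already recorded for $\EVAL$: a relational automaton is in particular a \VA, and since $\nemp[\LL](\cA,d) = \EVAL[\LL](\cA,d,\emptyset)$, the fact that $\EVAL$ lies in $\NP$ for every model considered here gives $\nemp$ of relational \VA\ in $\NP$ as well. Concretely, one guesses a candidate mapping $\mu'$ with domain $\Var(\cA)$ — of polynomial size, since it assigns to each variable a span, i.e.\ a pair of positions of $d$ — and verifies in $\PTIME$, by a reachability computation on the configuration graph of $\cA$ (the open/close positions being fixed by $\mu'$), that some accepting run induces $\mu'$.

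For $\NP$-hardness the subtle point is that the reduction must output an automaton that is relational, i.e.\ one whose output is a relation over \emph{every} document, not merely over the instance we care about. A hardness reduction for unrestricted spanners is already implicit in \cite{AMRV16,F17}, but those constructions need not be relational, so I would instead give a direct reduction from directed Hamiltonian Path, tailored so that relationality holds by construction. Given a directed graph $G=(V,E)$ with $|V|=n$, introduce one variable $x_v$ for each $v\in V$, set the target document to $d = a^{n}$, and build a \VA\ $\cA_G$ whose accepting runs are exactly the length-$n$ simple paths of $G$. The automaton proceeds in $n$ \emph{visit} phases: a visit to node $v$ opens $x_v$, reads a single letter $a$, and closes $x_v$, and from the state recording a visit to $u$ the automaton may begin a visit to $w$ only when $(u,w)\in E$; the close of the final visit leads to the accepting state. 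Reading one letter per visit ties the number of visits to $|d|$, and the \VA\ requirement that each variable be opened at most once forces the $n$ opened variables to be pairwise distinct; hence an accepting run over $a^{n}$ visits every node exactly once along edges of $G$, i.e.\ traces a Hamiltonian path, and conversely every Hamiltonian path yields an accepting run. Thus $\sem{\cA_G}_{a^{n}}\neq\emptyset$ iff $G$ has a Hamiltonian path, and the reduction is clearly polynomial.

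The heart of the argument — and the step I expect to be the main obstacle — is checking that $\cA_G$ is genuinely relational, since this is exactly what separates the present statement from the easier hardness for general \VA. Here the construction pays off. For a document $d'$ with $|d'|\neq n$, or with $|d'|=n$ but $d'\neq a^{n}$, no run can both consume all of $d'$ and execute exactly $n$ visits, so $\sem{\cA_G}_{d'}=\emptyset$, which is trivially a relation; and for $d'=a^{n}$ every accepting run performs $n$ visits and therefore opens and closes all $n$ variables, so every output mapping has domain exactly $\Var(\cA_G)$. Consequently $\sem{\cA_G}_{d'}$ is a relation for all $d'$, the automaton $\cA_G$ is relational, and together with the $\NP$ upper bound this yields $\NP$-completeness. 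A natural alternative would be to reduce from $\nemp[\SRGX]$ (Theorem~\ref{theo:eval-hardness-general}), but \SRGX\ instances are not relational in general, so one would still have to force all variables to be assigned; the length-forcing trick above is precisely what achieves this without the exponential blow-up that a naive ``assign all remaining variables at the end'' strategy would incur, since tracking which variables have already been opened requires remembering a subset of $\Var(\cA_G)$.
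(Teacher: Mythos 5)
Your proposal is correct and follows essentially the same route as the paper: an $\NP$ upper bound by guess-and-verify, and a reduction from directed Hamiltonian Path via a layered automaton whose accepting runs must perform exactly $|V|$ pairwise-distinct variable operations, which is precisely what guarantees relationality. The only cosmetic difference is that the paper works over the empty document (opening all variables via self-loops at the initial state and closing one per layer, so every variable is assigned the span $(1,1)$), whereas you read one letter per visit over $d=a^{n}$; both rely on the same counting argument.
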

By taking a close look at the proof of the previous result, one can note that a necessary property for getting intractability is that, during a run, the automaton can see the same variable on potential transitions many times but not use it if it has closed the same variable in the past.
Intuitively, this cannot happen in functional $\RGX$ formulas where for every subformula of the form $\varphi_1 \cdot \varphi_2$ it holds that $\Var(\varphi_1) \cap \Var(\varphi_2) =
\emptyset$.
Actually, we claim that this is the restriction that implies tractability for evaluating $\RGX$ formulas. 
Formally, we say that a $\RGX$ formula $\gamma$ is \emph{sequential} if for every subformula of the form $\varphi_1 \cdot \varphi_2$ or $\varphi^*$ it holds that $\Var(\varphi_1) \cap \Var(\varphi_2) =
\emptyset$ and $\Var(\varphi) = \emptyset$, respectively.
We can also extend these ideas of sequentiality from $\RGX$ formulas to variable-set automata as follows.
A path $\pi$ of a variable-set automaton $\cA = (Q, q_0, q_f, \delta)$ is a finite sequence of transitions 
$\pi: (q_1, s_2, q_2), (q_2, s_3, q_3) \ldots, (q_{m-1}, s_m, q_m)$ such that $(q_i, s_{i+1}, q_{i+1}) \in
\delta$ for all $i \in [1, m - 1]$.
We say that a path $\pi$ of $\cA$ is sequential if for every variable \(x \in
\VV\) it holds that: (1) there is at most one \(i \in [1, m]\) such that \(s_i = \Open{x}\); (2) if such an $i$ exists, then there is precisely one \(j \in [1, m]\) such that \(s_j = \Close{x}\); and (3) $i<j$.
We say that variable-set automaton $\cA$ is sequential if every path in $\cA$ is sequential.
Finally, we denote the class of sequential RGX and sequential variable-set automata by $\seqRGX$ and $\seqVA$, respectively.

The first natural question about sequentiality is whether this property can be checked efficiently. As the next proposition shows, this is indeed the case.
\begin{proposition}\label{theo:ptime-sequential-check}
	Deciding if an VA automaton is sequential can be done in $\NLOGSPACE$.
\end{proposition}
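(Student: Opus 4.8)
The plan is to give a direct complementation argument: an automaton $\cA = (Q, q_0, q_f, \delta)$ is sequential exactly when no path witnesses a violation of the three conditions, so I would instead check for the existence of a \emph{bad} path and then complement (since $\NLOGSPACE = \coNLOGSPACE$ by Immerman--Szelepcs\'enyi, nondeterministically guessing a violating path suffices). A path is non-sequential precisely when along it some variable $x$ is opened twice, or is closed without being opened before that point, or is opened but never closed, or is closed twice. I would formalize the negations of conditions (1)--(3) into a finite list of local "defect patterns" that can each be detected by guessing a path while tracking only a constant amount of information about a single variable $x$.

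The key steps, in order. First I would observe that because each condition in the definition quantifies over a single fixed variable $x$, it suffices to check, for each $x \in \Var(\cA)$ separately, whether some path violates the conditions for that particular $x$; the variable $x$ can itself be guessed nondeterministically at the start, so we never store more than one variable name. Second, for a fixed $x$, I would design a nondeterministic search that walks the transition graph of $\cA$ (guessing transitions one at a time, storing only the current state $q$ and a bounded flag) and accepts if it detects any of the following: (a) two distinct transitions labelled $\Open{x}$ reachable in sequence along a path from $q_0$; (b) a transition labelled $\Close{x}$ reached from $q_0$ without having passed through an $\Open{x}$ earlier on that path; (c) an $\Open{x}$ transition reachable from $q_0$ from whose target no $\Close{x}$ is reachable while continuing to a path extension; or two $\Close{x}$ transitions in sequence. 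Each such check is a reachability query in the transition graph, decorated with a constant-size flag recording whether $x$ is currently "open," and hence lies in $\NLOGSPACE$. Third, since the class of problems solvable in $\NLOGSPACE$ is closed under the finite disjunction over the defect patterns and over the (guessed) choice of $x$, the existence of a bad path is in $\NLOGSPACE$; complementing gives sequentiality in $\coNLOGSPACE = \NLOGSPACE$.

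The main subtlety I expect is condition (2) together with (3), i.e.\ matching each $\Open{x}$ to \emph{exactly one} $\Close{x}$ occurring strictly later on the \emph{same} path. Detecting "opened but never closed" requires a reachability check that is conditioned on first having reached an $\Open{x}$, which is naturally handled by the single open/closed flag; but one must be careful that the run we consider is a genuine path of $\cA$ (an actual accepting run could reuse states), so the defect we search for is along a walk in the transition graph rather than a simple path. The good news is that a violation, if it exists, is witnessed by a walk of bounded structure, and the flag-tracking search does not need to remember which states were previously visited, so logarithmic space for the current state plus the constant flag and the single guessed variable name is enough. I would finish by noting that the total space used is $O(\log|Q| + \log|\Var(\cA)|)$, which is logarithmic in the size of $\cA$, yielding the claimed $\NLOGSPACE$ bound.
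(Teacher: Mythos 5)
Your proposal is correct and matches the paper's own argument: the paper likewise nondeterministically searches for a non-sequential path while tracking only the current state and the status (available/open/closed) of a single variable, accepts on any incompatible transition or on a variable left open at the end, and then invokes $\coNLOGSPACE = \NLOGSPACE$ to conclude. The only cosmetic difference is that you guess the offending variable $x$ while the paper iterates over all variables, and the paper uses a small counter of open variables to catch the ``opened but never closed'' defect where you use a flag; both fit in logarithmic space.
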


Sequentiality is a mild  restriction over extraction expressions since it still allows many $\RGX$ formulas that are useful in practice.
For example, all extraction expressions discussed in Section~\ref{sec-unifying} are sequential.
Furthermore, as we now show, no expressive power is lost when restricting to sequential \RGX\ or automata. 
\begin{proposition}\label{prop:rgx-equiv-rgx-seq}
For every \RGX\ (\VA\ automaton), there exists a sequential \RGX\ (sequential \VA, respectively) that defines the same extraction function.
\end{proposition}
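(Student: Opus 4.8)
The plan is to prove the two halves separately, exploiting that functionality already entails sequentiality, and that any $\VA$ can be made sequential by tracking variable status in its states. For the $\RGX$ case I would first observe that every functional $\RGX$ is already sequential. This is immediate from comparing the two definitions: in the clause for $\varphi_1 \cdot \varphi_2$, functionality forces $\varphi_1$ to be functional wrt some $X'$ and $\varphi_2$ functional wrt $X \setminus X'$, so that $\Var(\varphi_1) = X'$ and $\Var(\varphi_2) = X \setminus X'$ are disjoint; and in the clause for $(\varphi)^{*}$ functionality demands $\Var(\varphi) = \emptyset$. These are exactly the two conditions required of a sequential formula. By the corollary to Theorem~\ref{th-stack_rgx}, an arbitrary $\RGX$ $\gamma$ is equivalent to a finite (possibly exponential) union $\gamma_1 \vee \cdots \vee \gamma_k$ of functional $\RGX$ formulas. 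Since the sequentiality condition only constrains concatenation and Kleene-star subformulas, and every such subformula of $\gamma_1 \vee \cdots \vee \gamma_k$ lies entirely inside some already-sequential $\gamma_i$ (all the $\vee$ nodes sit on the top spine), the whole disjunction is itself a sequential $\RGX$. As no size bound is claimed, this settles the $\RGX$ case.

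For the $\VA$ case I would build the equivalent automaton by a product construction that records, in each state, the status of every variable. Concretely, the states are pairs $(q, f)$ where $q$ is a state of $\cA$ and $f \colon \Var(\cA) \to \{\mathtt{u}, \mathtt{o}, \mathtt{c}\}$ assigns to each variable one of the statuses unseen, open, or closed; the initial state pairs $q_0$ with the all-unseen assignment. Each letter transition is copied verbatim on the first coordinate and leaves $f$ unchanged. An $\Open{x}$ transition is kept only from states with $f(x) = \mathtt{u}$, updating $f(x)$ to $\mathtt{o}$; a $\Close{x}$ transition only from states with $f(x) = \mathtt{o}$, updating $f(x)$ to $\mathtt{c}$. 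Because $f$ never returns a variable to an earlier status, on every path of the new automaton each variable is opened at most once, closed at most once, and only after being opened, which is the sequentiality requirement. Since the valid runs of $\cA$ already obey precisely these constraints, the lift of a run of $\cA$ to the product is unique and the projection of any product run is a valid run of $\cA$ defining the same mapping, so $\semd{\cA}$ is preserved.

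The step I expect to be delicate is matching the ``opened implies closed'' part of the sequential-path condition against the runs of $\cA$ that open a variable without closing it, which the semantics permits and interprets as leaving the variable undefined. I would handle this by declaring a product state $(q_f, f)$ final only when $f$ leaves no variable in status $\mathtt{o}$, and by observing that a run opening $x$ without closing it defines the same mapping as a run that never opens $x$ at all; the latter option is available because the guard on $\Open{x}$ makes opening optional rather than forced. Checking that this rerouting neither adds nor deletes mappings, and that every path of the resulting automaton is sequential under the exact definition, is the main bookkeeping obstacle, but it becomes routine once the status component is in place.
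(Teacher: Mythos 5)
Your \RGX\ half is correct and takes essentially the paper's route: the paper also passes through the path-union decomposition behind Theorem~\ref{th-stack_rgx} and reassembles a disjunction of per-path formulas, each of which is functional and hence sequential, so quoting the corollary directly and observing that functionality implies sequentiality is the same argument in different clothing. (Both you and the paper gloss over the degenerate case of an empty union when the input is unsatisfiable; any unsatisfiable sequential formula, e.g.\ $x\{x\{a\}\}$, covers it.)

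The \VA\ half, which you argue by a genuinely different and in principle more direct construction (the paper's own write-up only treats the \RGX\ case explicitly), has a real gap at exactly the step you flag as delicate. Declaring $(q_f,f)$ non-final when some variable has status $\mtt{o}$ deletes the accepting runs that open a variable without closing it, and your justification---that ``the guard on $\Open{x}$ makes opening optional rather than forced''---is wrong: the guard $f(x)=\mtt{u}$ merely \emph{permits} the transition, it supplies no alternative edge that bypasses it. Concretely, take the \VA\ $\cA$ with transitions $q_0 \xrightarrow{\Open{x}} q_1 \xrightarrow{a} q_f$ and the document $d=a$: its unique accepting run opens $x$ and never closes it, so $\semd{\cA}=\{\emptyset\}$, while your product reaches only the non-final state $(q_f,f)$ with $f(x)=\mtt{o}$ and accepts nothing. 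The construction must be augmented with an explicit skip, e.g.\ for every transition $(q,\Open{x},q')$ of $\cA$ also add $(q,f)\to(q',f[x\mapsto\mtt{c}])$ as an $\epsilon$-move (eliminated afterwards), which simulates ``$x$ opened but never closed'' as ``$x$ never touched'' while still blocking any later $\Open{x}$ or $\Close{x}$; one must then verify that such skipping runs contribute exactly the mappings of the open-without-close runs of $\cA$ and nothing more. The paper's path-union construction avoids this issue because an unmatched open operation can simply be erased from each path before the paths are reassembled. (A side remark: under the paper's literal definition of sequentiality, which quantifies over \emph{all} paths rather than $q_0$-to-$q_f$ paths, any prefix ending in an unmatched $\Open{x}$ would already be non-sequential; the checking algorithm of Proposition~\ref{theo:ptime-sequential-check} shows the intended reading, and that is what your status component delivers, so you need not worry about the letter of the definition here.)
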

We believe that sequentiality is a natural syntactical restriction\footnote{Note that \cite{F17} already considers a less general version of sequentiality  called functional VA automata, that open and close {\em all} the variables exactly once. There a version of Proposition \ref{theo:ptime-sequential-check} is given with a $\PTIME$ bound, as well as a version of Proposition \ref{prop:rgx-equiv-rgx-seq} for functional automata.} of how to use variables in extraction expressions. Namely, one should not reuse variables by concatenation since this can easily make the formula unsatisfiable.
Furthermore, the more important advantage for users is that $\RGX$ and VA automata that are sequential can be evaluated efficiently. 
\begin{theorem}\label{theo:ptime-sequential-sVA}
	$\EVAL[\seqRGX]$ and $\EVAL[\seqVA]$ is in $\PTIME$.
\end{theorem}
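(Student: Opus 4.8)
The plan is to handle the two models by different but related means: $\seqRGX$ by a bottom-up dynamic program over the structure of the expression, and $\seqVA$ by a reachability computation on a configuration graph that sequentiality forces to be acyclic. I would start with $\seqRGX$, since there the two defining conditions are exactly what is needed to make a naive dynamic program compositional.

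For $\seqRGX$ I would compute, bottom-up over the subexpressions $R$ of $\gamma$ and over every pair of positions $1\le i\le j\le |d|+1$, a single bit $T[R,i,j]$ that is true iff some pair $((i,j),\mu')\in\sempd{R}$ has a mapping $\mu'$ agreeing with the input $\mu$ on $\Var(R)$; that is, $\mu'(x)=\mu(x)$ for every $x\in\dom(\mu)\cap\Var(R)$ with $\mu(x)\neq\bot$, and $x\notin\dom(\mu')$ for every such $x$ with $\mu(x)=\bot$. The cases $\varepsilon$ and $a$ just read off the document, and $R_1\vee R_2$ is a disjunction of the two sub-bits. The crucial point is that sequentiality makes the two remaining cases \emph{local}. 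For $R_1\cdot R_2$ we have $\Var(R_1)\cap\Var(R_2)=\emptyset$, so the constraints that $\mu$ imposes split cleanly between the two factors and $T[R_1\cdot R_2,i,j]$ is simply $\bigvee_{i\le k\le j}\big(T[R_1,i,k]\wedge T[R_2,k,j]\big)$, with no need to remember which variables have already been assigned. For $R^{*}$ we have $\Var(R)=\emptyset$, so $R^{*}$ defines no mapping at all and $T[R^{*},i,j]$ reduces to ordinary regular-expression membership of $d(i,j)$, computed by the standard interval recurrence. Finally $x\{R\}$ records the span $(i,j)$ for $x$ and checks it against $\mu(x)$. The table has $O(|\gamma|\cdot|d|^{2})$ entries, each filled in polynomial time, and the answer to $\EVAL$ is $T[\gamma,1,|d|+1]$; hence $\EVAL[\seqRGX]$ is in $\PTIME$.

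For $\seqVA$ the expression recurrence is not available, because a sequential automaton may define non-hierarchical mappings and so need not correspond to any $\RGX$. Here I would build the configuration graph $G$ whose nodes are the pairs $(q,i)$ with $q\in Q$ and $i\in[1,|d|+1]$: a letter transition $(q,a,q')$ yields edges $(q,i)\to(q',i+1)$ whenever $d$ has letter $a$ at position $i$, and a variable transition yields an edge $(q,i)\to(q',i)$ at the same position. The first use of sequentiality is that $G$ is acyclic: a cycle could not traverse a letter edge, since those strictly increase the position, so it would consist only of variable transitions at a fixed position, and going around it twice would repeat an $\Open{x}$ on a single path, contradicting sequentiality. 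I would then fold the input $\mu$ into $G$: for each $x$ with $\mu(x)=\bot$ delete every $\Open{x}$ and $\Close{x}$ edge (by sequentiality a variable that is opened is also closed, so forbidding its opening is exactly forbidding its assignment), and for each $x$ with $\mu(x)=(a,b)$ retain $\Open{x}$ edges only at position $a$ and $\Close{x}$ edges only at position $b$. After this surgery, $\EVAL$ asks whether $G$ admits a path from $(q_0,1)$ to $(q_f,|d|+1)$ that is a valid run and that actually opens and closes every constrained variable.

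The main obstacle, which I expect to be the technical heart of the proof, is precisely this last condition: forcing each constrained variable to be opened and closed at its pinned position, while respecting run-validity (a variable may be closed only after it has been opened), and doing so without tracking the exponentially many possible sets of currently-open variables. Sequentiality is exactly what should rule out the source of intractability: in the relational case of Proposition~\ref{theo:eval-hardness-relational-VA}, hardness comes from an automaton that meets the same variable on many alternative transitions along one run and may choose to use it or not, whereas here each variable is opened at most once and closed at most once on any path. I would exploit this as follows. Because the open and close positions of every constrained variable are pinned, the required events are linearly ordered by position, and the only freedom is the relative order of events sharing a position, which lives inside a single acyclic slice of $G$; since within such a slice each variable can occur at most once, I expect that whether a prescribed collection of opens and closes can be threaded along one path, together with the matching of any free-variable closes to earlier opens, can be decided by a polynomial sweep of $G$ in topological (i.e.\ position) order rather than by enumerating variable subsets. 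This would yield a polynomial-time decision procedure and hence $\EVAL[\seqVA]\in\PTIME$. Finally, combining the two bounds with Theorem~\ref{prop:evaluation} gives polynomial-delay enumeration for both $\seqRGX$ and $\seqVA$.
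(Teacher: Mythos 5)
Your high-level instincts are right, but both halves of the argument have a gap, and in each case the gap sits exactly where the paper has to do real work. For the $\seqRGX$ dynamic program, the invariant you state for $T[R,i,j]$ --- that the witness $\mu'$ must assign \emph{every} pinned variable of $\Var(R)$ --- is not preserved by disjunction, and the recurrence $T[R_1\vee R_2,i,j]=T[R_1,i,j]\vee T[R_2,i,j]$ is simply false under it. Take $\gamma=x\{a\}\vee y\{a\}$, $d=a$, and $\mu$ with $\mu(x)=\mu(y)=(1,2)$: the correct answer to $\EVAL$ is ``no'' (no single output of $\gamma$ assigns both variables), yet $T[x\{a\},1,2]$ and $T[y\{a\},1,2]$ are both true, so your table reports ``yes''. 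Sequentiality constrains $\cdot$ and $^*$ but says nothing about $\vee$, which is precisely the operator that lets a derivation silently drop a pinned variable. The DP can be repaired (e.g.\ by weakening the invariant to consistency with $\mu$ and adding, at each $\vee$ and each concatenation split, a side condition on which pinned variables the chosen branch is still able to cover), but as written it is wrong. The paper sidesteps this entirely: it converts a sequential $\RGX$ into a $\VA$ via the Thompson construction, checks by induction on the syntax that sequentiality is preserved, and then only ever argues about automata.

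For $\seqVA$, your reduction to a pruned configuration graph is sound as far as it goes, but you defer the decisive step --- ``I expect that \ldots can be decided by a polynomial sweep'' --- and that step is the theorem. The paper's mechanism is concrete: it fixes a label for $(d,\mu)$, coalesces the variable operations pinned to each document position into a block $T_i$ treated as a single alphabet symbol, and adds a transition $(p,T_i,q)$ whenever some path from $p$ to $q$ uses only $\varepsilon$-transitions and performs each operation of $T_i$ exactly once. The point where sequentiality earns its keep is in deciding this last condition without enumerating subsets or orderings: along any path of a sequential automaton no variable operation can occur twice, so it suffices to nondeterministically walk from $p$ to $q$ and \emph{count} the $T_i$-operations seen, accepting iff the count equals $|T_i|$ --- an $\NLOGSPACE$ test. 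After this, operations of unconstrained variables become $\varepsilon$-transitions, operations of $\bot$-variables are deleted (your observation that sequentiality makes ``never opened'' equivalent to ``never assigned'' is correct and is used the same way there), and the whole problem collapses to NFA membership. Your ``sweep in topological order'' is pointing at the same phenomenon, but without the no-repetition-implies-counting-suffices lemma you have not ruled out the subset-tracking blowup, which is exactly what makes the non-sequential case $\NP$-hard in Proposition~\ref{theo:eval-hardness-relational-VA}.
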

It is important to recall that this result implies, by Proposition~\ref{prop:evaluation}, that the evaluation of sequential $\RGX$ formulas can be done with polynomial delay. An interesting question we would like to tackle in the future is if this algorithm can be further optimised to yield a \emph{constant delay algorithm}~\cite{johnson1988generating} like the one presented in~\cite{AMRV16} for the so-called navigational formulas -- a class strictly subsumed by sequential $\RGX$.


Now that we have captured an efficient fragment of \(\RGX\), we will analyse what happens with the complexity of the evaluation problem for extraction rules. First, we show that evaluating rules is in general a hard problem. In fact, non-emptiness is already $\NP$-hard, even when restricted to dag-like rules with functional \(\SRGX\). 
\begin{theorem}\label{theo:eval-daglike-npcomplete}
  \(\nemp\) of functional dag-like rules is \(\NP\)-complete.
\end{theorem}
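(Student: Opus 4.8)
\emph{Membership.} I would establish the upper bound exactly as in the general case. Given a rule \(\varphi = \varphi_0 \wedge x_1.\varphi_1 \wedge \cdots \wedge x_m.\varphi_m\) and a document \(d\), a nondeterministic machine guesses a tuple of mappings \(\overline{\mu} = (\mu_0, \ldots, \mu_m)\) (each \(\mu_i\) assigning spans of \(d\) to the variables of the corresponding formula) and then verifies \(\overline{\mu} \models_d \varphi\) in polynomial time. Since \(\varphi\) is dag-like, \(\ivar(\varphi, \overline{\mu})\) can be computed by a single topological pass over \(G_\varphi\); each condition \(\mu_i \in \semd{x_i.\varphi_i}\) is a membership test for a functional \(\SRGX\) against the substring \(d(\mu_i(x_i))\), which is polynomial by standard automata evaluation; and each compatibility test \(\mu_i \sim \mu_j\) is immediate. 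As \(\semd{\varphi}\) is nonempty iff some \(\overline{\mu} \models_d \varphi\), this shows \(\nemp\) of functional dag-like rules is in \(\NP\).

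\emph{Hardness.} For the lower bound I would reduce from \(\ONETHREESAT\): given a positive instance with Boolean variables \(v_1, \ldots, v_n\) and clauses \(C_1, \ldots, C_k\), I build a document and a functional dag-like rule that is satisfiable iff some assignment makes exactly one literal true in each clause. The document is a sequence of slots, one per Boolean variable, \(d = V_1 \,\texttt{\#}\, V_2 \,\texttt{\#}\, \cdots \,\texttt{\#}\, V_n\) with each \(V_i = \texttt{TF}\); a span-variable \(T_i\) capturing the first character of \(V_i\) encodes ``\(v_i\) true'' and capturing the second encodes ``\(v_i\) false''. Anchoring through the \(\texttt{\#}\) separators, a subexpression forces \(T_i\) to capture exactly one of the two characters, so that \(T_i\) records a truth value. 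For each clause \(C_j = (v_a \vee v_b \vee v_c)\) I would use a formula ranging over (essentially) the whole document that matches everything outside the three relevant slots with \(\Sigma^*\), captures \(T_a, T_b, T_c\), and enforces the \emph{exactly one} condition as the disjunction of the three patterns ``\(T_a\) true, \(T_b, T_c\) false'', ``\(T_b\) true, \(T_a, T_c\) false'', ``\(T_c\) true, \(T_a, T_b\) false''. Each such formula is functional (its three captured variables are pairwise disjoint and every disjunct captures the same variable set), and the rule is dag-like because all edges of \(G_\varphi\) point from the document node to the slot-variables \(T_i\).

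The source of hardness is global consistency of the assignment: a variable \(v_i\) occurring in several clauses must get the same truth value in all of them. This falls out of the semantics for free, since the slot-variable \(T_i\) is \emph{shared} among all clause-formulas mentioning \(v_i\), and the compatibility condition \(\mu_i \sim \mu_j\) forces \(T_i\) to denote the same span, hence the same truth value, everywhere. Therefore a satisfying \(\overline{\mu}\) exists iff the slots can be chosen so that every clause has exactly one true literal, i.e.\ iff the \(\ONETHREESAT\) instance is satisfiable, giving \(\NP\)-hardness and, with the membership argument, \(\NP\)-completeness.

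\emph{Main obstacle.} I expect the delicate point to be realising this sharing while staying inside the syntactic class. A single functional \(\SRGX\) can only capture pairwise \emph{non-nested} spans, so no one formula can simultaneously expose \(T_i\) to all clauses that use it \emph{together with} those clauses' other slots. My way around this is to let each clause-formula range over the whole document and capture only its own three slots, so that distinct clause-formulas may have overlapping scope (overlap is forbidden only \emph{within} a single \(\SRGX\), not across different formulas); formally this relies on allowing several document-level extraction formulas, the straightforward extension of the rule format noted right after its definition. The remaining work is bookkeeping: fixing the separator anchoring so each \(T_i\) is pinned to exactly one slot character, checking that every formula is functional, and confirming that \(G_\varphi\) stays acyclic — all routine once the layout above is in place.
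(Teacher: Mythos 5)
Your membership argument is fine and matches the paper's: guess the witnessing tuple of mappings and verify it in polynomial time using the tractability of (model checking for) functional/sequential expressions. Your hardness argument also starts from the same place as the paper's --- a reduction from $\ONETHREESAT$ in which each propositional variable is represented by a single shared span variable, so that the compatibility condition $\mu_i \sim \mu_j$ enforces global consistency of the truth assignment. That is indeed the engine of the reduction.

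The gap is in how you realise the clause gadgets. Your layout needs one extraction formula \emph{per clause}, each ranging over the whole document, and you lean on the footnote allowing several document-level formulas. But the theorem is about rules of the form $(\dag)$, which have a single root formula $\varphi_0$, and $\NP$-hardness of the multi-root extension does not transfer to the smaller class. Nor can your construction be folded back into a single root: since a functional $\SRGX$ forces its captured spans to be pairwise disjoint, you cannot place two clause gadgets (which must both capture the shared, non-empty, interior span $T_i$) inside one $\varphi_0$; and you cannot chain them through nested variables $D_1.\psi_1$, $D_2.\psi_2,\ldots$ either, because each $D_{j+1}$ would have to span the whole document while coexisting, disjointly, with clause $j$'s non-empty captures inside $\psi_j$. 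The paper escapes exactly this trap by collapsing the document to the single symbol $\#$ and encoding truth positionally with \emph{empty} spans ($(1,1)$ for true, $(2,2)$ for false): then the clause gadgets can be nested as a chain $c_1 \supseteq c_2 \supseteq \cdots \supseteq c_n$, each $c_i$ spanning the whole one-character document and carrying its clause's three-way disjunction together with the link to $c_{i+1}$, all under the single root $T \cdot c_1 \cdot F$. To repair your proof you would either have to prove that multiple document-level formulas can be eliminated within simple functional dag-like rules (which the disjointness obstruction above suggests is false for your encoding), or switch to an empty-span, position-based encoding along the paper's lines. Separately, your remark that $\Sigma^*$ matches ``everything outside the three relevant slots'' needs explicit anchoring prefixes that count separators, otherwise $T_a$ is not pinned to slot $a$; you flag this as bookkeeping, and it is, but as written the gadget is not yet sound.
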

The difficulty in this case arises from the fact that dag-like rules allow referencing the same variable from different extraction expressions. 
%
A natural way to circumvent this is to use tree-like rules. 
Indeed, the fact that, in a tree-like rule, different branches are independent, causes the evaluation problem to become tractable.
In fact, the functionality constraint is not really needed here, as the result holds even for sequential rules.
\begin{theorem}\label{theo:treelike-ptime}
  \(\EVAL\) of sequential tree-like rules is in \(\PTIME\).
\end{theorem}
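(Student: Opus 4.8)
The plan is to prove that $\EVAL[\seqRGX\text{-tree-like}]$ is in $\PTIME$ by showing that the evaluation of a sequential tree-like rule decomposes, along the tree structure rooted at $\texttt{doc}$, into independent subproblems that can each be solved efficiently using the sequential $\RGX$ machinery already established in Theorem~\ref{theo:ptime-sequential-sVA}. The core observation is that in a tree-like rule $\varphi = \varphi_0 \wedge x_1.\varphi_1 \wedge \cdots \wedge x_m.\varphi_m$, the associated graph $G_\varphi$ is a tree, so each variable $x_i$ is ``expanded'' by exactly one extraction formula $x_i.\varphi_i$, and distinct branches of the tree share no variables. This means that once we fix how the root formula $\varphi_0$ assigns spans to the variables occurring in it, the constraints imposed on disjoint subtrees become fully independent, and can be checked separately.

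First I would set up the algorithm recursively over the tree $G_\varphi$. Given the input extended mapping $\mu$, I would process the tree top-down starting at $\texttt{doc}$. At each node labelled $x$ (with $x = \texttt{doc}$ treated as spanning the whole document, i.e.\ the span $(1,|d|+1)$), we are handed a target span $s$ that $x$ must capture, together with the portion of $\mu$ constraining the variables below $x$. We must decide whether the sequential $\SRGX$ formula attached to $x$ can match $d(s)$ in a way that (i) is consistent with the fixed values $\mu$ prescribes for the child variables that appear in it, and (ii) assigns to each child variable $y$ a span that can in turn be recursively realised by $y$'s own subtree. The key subroutine here is a variant of sequential $\RGX$ evaluation: given the formula $\varphi_i$, the document restricted to $d(s)$, and the partial assignment of its variables, enumerate or decide feasibility of the span assignments to the immediate children. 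Because $\varphi_i$ is sequential, this can be done in polynomial time by the product-automaton / dynamic-programming technique underlying Theorem~\ref{theo:ptime-sequential-sVA}, tracking for each child variable the set of reachable open/close positions.

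The main technical step, and the place I expect the real work to lie, is combining the per-node feasibility checks into a single polynomial-time decision without incurring a combinatorial blow-up over the exponentially many possible span assignments to the children at each node. The naive approach of iterating over all tuples of spans for the children of a node is exponential in the node's out-degree. The fix is to exploit independence of branches: rather than guessing all children's spans jointly, I would compute, for each child $y$ and each candidate span $t \in \sub(d)$, a boolean flag $\text{feasible}(y,t)$ saying whether the subtree rooted at $y$ admits a satisfying assignment with $y \mapsto t$ that is compatible with $\mu$; this table has polynomial size and is filled bottom-up. Then at node $x$ with formula $\varphi_i$, the sequential evaluation of $\varphi_i$ over $d(s)$ only needs to check, transition by transition, that the span it would assign to each child $y$ lies in the set $\{t \mid \text{feasible}(y,t)\}$, which is a polynomial-time lookup. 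Because sequentiality guarantees each variable is opened and closed at most once along any path, this check integrates cleanly into the automaton traversal.

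Finally I would argue correctness and complexity. Correctness follows because tree-likeness ensures condition (3) of $\models_d$ (pairwise compatibility $\mu_i \sim \mu_j$) is automatically satisfiable by assembling the independently chosen branch assignments: since no variable appears in two different branches, the union $\bigcup_i \mu_i$ is always well-defined, and the $\ivar$ set coincides with the set of variables actually reachable from $\texttt{doc}$, which the top-down pass tracks exactly. For complexity, the bottom-up table has $O(m \cdot |d|^2)$ entries, each computed by one sequential $\RGX$ feasibility check of cost polynomial in $|\varphi|$ and $|d|$ via Theorem~\ref{theo:ptime-sequential-sVA}; summing over the $m$ nodes of the tree yields an overall polynomial bound. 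The crucial leverage throughout is that the tree structure removes all variable sharing across branches, so no global consistency constraint couples the subproblems, and sequentiality keeps each local check polynomial.
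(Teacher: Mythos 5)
Your proposal is correct and rests on the same structural insight as the paper's proof---tree-likeness makes the branches below each node independent, so the problem reduces to per-node feasibility checks carried out with the sequential evaluation machinery of Theorem~\ref{theo:ptime-sequential-sVA}---but the way you combine the local checks is genuinely different. The paper serialises the variable operations of $\mu$ into the document (producing a ``label'' $d'$ under a canonical order $\preceq_{\Op{}}$ derived from the tree) and then runs an alternating logarithmic-space traversal that, upon opening a child $y$, spawns two parallel checks: one for the subtree of $y$ on the inner interval and one for the continuation of the current automaton after $y$ closes. This forces the paper to deal explicitly with \emph{indistinguishable} sibling variables mapped to the same empty span, whose operation order cannot be read off from $\mu$, and to coalesce them with a separate recursive ``emptiness'' test. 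Your bottom-up table $\text{feasible}(y,t)$ over pairs of variables and candidate spans, combined with NFA reachability in which an open-$y$ transition nondeterministically ``jumps'' over any span in $y$'s feasible set, sidesteps the label construction entirely and with it the indistinguishability issue, since you never commit to a linear order of variable operations; the two techniques (alternation versus memoised dynamic programming) yield the same polynomial bound. The only place where you should be a bit more careful is the interaction with the extended mapping: when $\mu(y)=\bot$ you must ensure the parent formula matches without instantiating $y$ \emph{and} that no descendant of $y$ is forced to a non-$\bot$ value by $\mu$ (otherwise reject), and when $\mu(y)$ is a concrete span the feasible set for $y$ must be restricted to that single span. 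These are routine additions to your top-down pass, so I see no genuine gap.
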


This implies that we should focus on sequential tree-like rules if we wish to have efficient algorithms for rules. Luckily, these do not come at a high price in terms of expressiveness, since Propositions~\ref{prop:nonfunc} and~\ref{thm:dag_to_tree} imply that every satisfiable simple rule is equivalent to a union of sequential tree-like rules.

The previous results show how far we can go when syntactically restricting the class of $\RGX$ formulas, variable-set automata, or extraction rules in order to get tractability.
The next step is to parametrise the size of the query not only in terms of the length, but also in terms of meaningful parameters that are usually small in practice.
In this direction, a natural parameter is the number of variables of a formula or automata since one would expect that this number will not be huge. 
Indeed, if we restrict the number of variables of a $\RGX$ formula or VA automata we can show that the problem is fixed parameter tractable.
\begin{theorem}\label{theo:RGX-FPT}
  $\EVAL[\RGX]$ and $\EVAL[\VA]$ parametrised by the number of variables is $\FPT$.
\end{theorem}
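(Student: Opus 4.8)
The plan is to move the combinatorial bookkeeping that makes $\EVAL$ hard, namely remembering which variables have already been opened and/or closed, out of the (exponentially many) nondeterministic choices and into the finite control, paying only a factor exponential in the number of variables $k$. I would first reduce the $\RGX$ case to the $\VA$ case: the construction underlying Theorem~\ref{th-stack_rgx} turns a $\RGX$ formula $\gamma$ into an equivalent ($\VAStk$, hence $\VA$) automaton in polynomial time without changing the set of variables, so it suffices to prove the statement for $\VA$ and then compose with this translation.

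So fix a $\VA$ automaton $\cA=(Q,q_0,q_f,\delta)$ with $\Var(\cA)=\{x_1,\dots,x_k\}$, a document $d$, and an extended mapping $\mu$. Treating each $\mu'\in\semd{\cA}$ as an extended mapping that sends every $x\notin\dom(\mu')$ to $\bot$, $\EVAL$ asks whether some accepting run $\rho$ has $\mu^\rho$ extending $\mu$: for each $x$ with $\mu(x)=(i,j)$ the run must open $x$ at position $i$ and close it at position $j$, while for each $x$ with $\mu(x)=\bot$ the run must not both open and close $x$; variables of $\cA$ not mentioned by $\mu$ are unconstrained (and if $\mu$ assigns a genuine span to a variable outside $\Var(\cA)$ we reject immediately). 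The key observation is that it is enough to remember, for every variable, only a three-valued status in $\{\text{unopened},\text{open},\text{closed}\}$ rather than the actual position at which it was opened: for variables in $\dom(\mu)$ the open/close positions are dictated by $\mu$ and can be checked against the current position at the moment the corresponding $\Open{x}$/$\Close{x}$ transition is taken, whereas for the remaining (free) variables the actual span is irrelevant to the decision problem, since we only need the \emph{existence} of some valid assignment.

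Concretely I would build a configuration graph whose nodes are triples $(q,p,\bar{s})$ with $q\in Q$, a position $p\in[1,|d|+1]$, and a status vector $\bar{s}\in\{\text{unopened},\text{open},\text{closed}\}^{k}$, giving $O(|Q|\cdot(|d|+1)\cdot 3^{k})$ nodes. Edges mirror $\delta$: a letter transition advances $p$ and leaves $\bar{s}$ unchanged; a transition $\Open{x}$ is enabled only when the status of $x$ is $\text{unopened}$ (and, if $\mu(x)=(i,j)$, only when $p=i$), and sets it to $\text{open}$; a transition $\Close{x}$ is enabled only when the status of $x$ is $\text{open}$ (and, if $\mu(x)=(i,j)$, only when $p=j$), and sets it to $\text{closed}$. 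The answer to $\EVAL$ is ``yes'' iff some accepting node is reachable from $(q_0,1,\bar{s}_0)$, where $\bar{s}_0$ is all-$\text{unopened}$ and a node $(q_f,|d|+1,\bar{s})$ is accepting exactly when $s_x=\text{closed}$ for every $x$ with $\mu(x)$ a span and $s_x\neq\text{closed}$ for every $x$ with $\mu(x)=\bot$ (free variables may carry any status). This is a plain reachability query, solvable in time linear in the size of the graph, for an overall bound of the form $3^{k}\cdot\mathrm{poly}(|\cA|,|d|)$, which is $\FPT$ in $k$; composing with the polynomial $\RGX\!\to\!\VA$ translation yields the same bound for $\EVAL[\RGX]$.

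The place where I expect the real work to lie is the correctness of the three-valued encoding: one has to argue that collapsing the open positions of free variables to a single status loses no solution, i.e.\ that any path accepted by the configuration graph can be completed to an accepting run of $\cA$ whose induced mapping extends $\mu$ (by picking arbitrary legal spans for the free variables), and conversely that every such run of $\cA$ projects onto a path of the graph. The Kleene star and cycles of $\cA$ are exactly what forced the exponential branching exploited in Proposition~\ref{theo:eval-hardness-relational-VA}; here they are tamed automatically, since a cycle revisiting an $\Open{x}$ transition is disabled once the status of $x$ has left $\text{unopened}$, so no variable can be opened twice along any path of the graph.
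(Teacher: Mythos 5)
Your proof is correct, but it takes a genuinely different route from the paper's. The paper reuses the machinery developed for sequential automata (Theorem~\ref{theo:ptime-sequential-sVA}): it coalesces the consecutive variable operations dictated by $\mu$ into symbols $T_i$ and, because the automaton need not be sequential, brute-forces all $|T_i|!$ orderings of each block; for the variables outside $\dom(\mu)$ it additionally enumerates all valid sequences of their operations, yielding a bound of the form $k!\,p(n)+(2k)!\,q(n)+r(n)$. You instead build a single configuration graph $Q\times[1,|d|+1]\times\{\text{unopened},\text{open},\text{closed}\}^{k}$ and answer one reachability query, which is both more self-contained and quantitatively better ($3^{k}$ versus $(2k)!$ in the parameter). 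Your correctness concern is in fact lighter than you suggest: a path in the configuration graph \emph{is} a run of $\cA$ (the status vector enforces exactly the open/close discipline required of runs, and the position guards plus the acceptance condition enforce $\mu\subseteq\mu^{\rho}$), so no spans need to be ``picked'' for the free variables --- the path already determines them, and any values they take are consistent with the extension order $\subseteq$. Both arguments establish $\FPT$; yours does so with a cleaner construction and a better dependence on $k$, while the paper's has the virtue of recycling the coalescing technique it already set up for the sequential case.
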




\section{Static analysis and complexity}\label{sec:eval}
\label{sec:complexity}

In this section, we study the computational complexity of static analysis problems for document spanners like satisfiability and containment.
Determining the exact complexity of these problems is crucial for query optimisation~\cite{AHV95} and data integration~\cite{lenzerini2002data}, and it gives us
a better understanding of how difficult it is to manage RGX formulas and $\VA$ automata.
%
We start with the satisfiability problem for RGX formulas and~\VA. 
Formally, let $\LL$ be any formalism for defining document spanners. Then the satisfiability problem of $\LL$, denoted  $\sat[\LL]$, asks given an expression $\gamma \in \LL$ if there exist  a document $d$ such that $\semd{\gamma}$ is non-empty.


$\sat[\LL]$ is a natural generalisation of the satisfiability problem for ordinary regular languages: if $\gamma$ does not contain variables, then asking if $\semd{\gamma} \neq \emptyset$ for some document $d$ is the same as asking if the language of $\gamma$ is non-empty.
It is a folklore result that satisfiability of regular languages given by regular expressions or NFAs has low-complexity~\cite{HopcroftU79}. 
Unfortunately, in the information extraction context, this problem is intractable even for $\SRGX$.
\begin{theorem}\label{theo:sat-general}
$\sat[\VA]$ and $\sat$ of extraction rules are $\NP$-complete. Furthermore, $\sat[\SRGX]$ is already $\NP$-hard. 
\end{theorem}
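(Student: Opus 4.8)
The plan is to reduce from a known $\NP$-hard problem. Looking at the paper, Theorem~\ref{theo:eval-hardness-general} already establishes that $\nemp[\SRGX]$ is $\NP$-complete, and the paper notes this was proved in \cite{AMRV16}. The natural connection to exploit is the relationship between satisfiability and non-emptiness: non-emptiness asks whether $\semd{\gamma} \neq \emptyset$ for a \emph{fixed} document $d$, while satisfiability asks whether \emph{there exists} a document $d$ making $\semd{\gamma}$ nonempty. A direct route would be to examine the hardness reduction behind $\nemp[\SRGX]$ and adapt it: if that reduction encodes an instance of some problem (the paper mentions \ONETHREESAT\ in its macros, which strongly suggests \textsf{1-in-3-SAT} is the source) into a pair $(\gamma, d)$, I would try to fold the document $d$ into the expression itself, so that the constructed \SRGX\ is satisfiable precisely when the original instance is a yes-instance.

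\textbf{Approach.} First I would reduce from \ONETHREESAT\ (or another convenient $\NP$-complete variant of SAT). Given a formula with variables $z_1, \dots, z_n$ and clauses, I would design a \SRGX\ $\gamma$ together with an implicit ``canonical'' document structure so that the existence of a satisfying document corresponds to a satisfying assignment. The key encoding trick in \SRGX\ is that a variable $x$ ranges over arbitrary spans $x\{\Sigma^*\}$, so one can use variables and concatenation to force spans to overlap or align in ways that encode Boolean constraints. Concretely, I would use landmark symbols in $\gamma$ to carve the document into blocks (one per propositional variable $z_i$), use extraction variables to select a truth value per block, and use the concatenation/disjunction structure of \SRGX\ to enforce that each clause is satisfied in the required \textsf{1-in-3} fashion.

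\textbf{Key steps, in order.} (1)~Fix the source problem as \ONETHREESAT. (2)~For each propositional variable $z_i$, introduce a gadget subexpression in $\gamma$ whose satisfiability over \emph{some} document forces a choice of truth value, using disjunction $\vee$ to branch and variables to record the choice. (3)~For each clause, build a subexpression that is satisfiable exactly when exactly one of its three literals is true, again using the $\vee$/$\cdot$ structure and reusing the variables recording the truth values. (4)~Combine the gadgets by concatenation into a single \SRGX\ $\gamma$, ensuring the construction is polynomial in the formula size. (5)~Argue both directions: a satisfying assignment yields a document $d$ on which $\semd{\gamma} \neq \emptyset$, and conversely any such document decodes to a satisfying assignment. (6)~Membership in $\NP$ follows from the general argument already sketched in the paper (guess a document of polynomial size together with a witnessing mapping, then verify in $\PTIME$), though for the $\NP$-hardness claim alone this is optional.

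\textbf{Main obstacle.} The hard part will be step~(3): since \SRGX\ variables capture \emph{arbitrary} spans with no shape control (we cannot write $x\{R\}$ for nontrivial $R$), enforcing a precise \textsf{1-in-3} constraint requires careful use of the ambient document and landmark alphabet symbols rather than the variable bodies. The subtle point is that satisfiability quantifies existentially over the document, so the document itself becomes part of the witness and can be ``abused'' to satisfy constraints trivially unless the landmark structure is rigid enough to pin down the block decomposition. I would therefore devote most care to choosing landmarks (fresh alphabet symbols) so that any satisfying document is forced into the intended block-per-variable shape, ruling out degenerate documents that satisfy $\gamma$ without corresponding to a genuine assignment. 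I expect the cleanest resolution is to mirror the existing $\nemp[\SRGX]$ reduction as closely as possible, since it already navigates exactly these \SRGX\ limitations, and to verify that its fixed document can be regenerated by the expression over the existentially-quantified document.
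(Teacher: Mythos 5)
Your proposal has the right instinct but stops short of the observation that actually closes the argument, and it leaves the membership half of the theorem unproved.

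For the hardness of $\sat[\SRGX]$, your primary plan (a fresh reduction with landmark symbols carving the document into per-variable blocks) runs head-on into the obstacle you yourself identify: the document is existentially quantified, and with \SRGX\ you have no shape control over variable bodies, so pinning down a block decomposition is exactly the hard part --- and you do not resolve it. Your fallback (``mirror the existing $\nemp[\SRGX]$ reduction'') is the paper's actual route, but the step that makes it work is an observation you never state: the hard instances $\gamma_\alpha$ built in the proof of Theorem~\ref{theo:eval-hardness-general} are evaluated over the \emph{empty} document $d=\varepsilon$, and for those expressions the entire combinatorial obstruction lives in the variable-disjointness condition of concatenation, which is independent of the document. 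Hence $\gamma_\alpha$ is satisfiable (over some document) if and only if $\semg{\gamma_\alpha}\neq\emptyset$ for $d=\varepsilon$, i.e.\ iff the \ONETHREESAT\ instance is a yes-instance. No landmarks, no block gadgets, and no ``regeneration'' of a document are needed; there is nothing to regenerate. Without this observation your plan is a sketch of a construction whose central difficulty is acknowledged but not overcome.

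For membership in $\NP$ (needed for the completeness claims about $\sat[\VA]$ and extraction rules), you assert that one can ``guess a document of polynomial size,'' but the paper has to \emph{prove} that a polynomial-size witness document always exists: it shows a pumping-style lemma that any satisfiable $\VA$ with state set $Q$ is satisfied by a document of length at most $(2|\VV|+1)|Q|$, by excising state-repeating loops inside maximal variable-operation-free stretches of a run. That bound is not ``already sketched in the paper'' elsewhere --- it is the content of this proof --- so citing it as given is a gap. Finally, you do not address extraction rules at all; the hardness transfers because $\SRGX$ embeds into both $\VA$ and rules, which is easy but should be said.
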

These results show that satisfiability is generally $\NP$-complete for all information extraction languages we consider in this paper.
The next step is to consider syntactic restrictions of RGX or $\VA$, like e.g.\ sequentiality introduced in Section~\ref{sec:evaluation}.
Indeed, with sequentiality we can restore tractability.
\begin{theorem}\label{theo:sat-sequential}
$\sat[\seqVA]$ is in $\NLOGSPACE$. 
\end{theorem}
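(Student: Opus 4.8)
The plan is to reduce the satisfiability question for a sequential $\VA$ automaton to an ordinary reachability question in a finite graph, which can be answered in $\NLOGSPACE$. The key observation is that, by Proposition~\ref{prop:rgx-equiv-rgx-seq} and the definition of sequentiality, every path of a sequential $\VA$ opens each variable at most once and closes it at most once, and only after opening it. This means that the variable-handling behaviour along any run is essentially ``well-formed for free'', so satisfiability should not depend on the spans assigned to variables at all, only on whether the underlying letter-transitions admit an accepting run over \emph{some} document. First I would make this precise: I claim that a sequential $\VA$ automaton $\cA$ is satisfiable if, and only if, there is a path from $q_0$ to $q_f$ in the transition graph of $\cA$ (treating variable transitions $\Open{x}$ and $\Close{x}$ as $\varepsilon$-moves and letter transitions $a$ as edges consuming one symbol).

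The forward direction is immediate, since any accepting run over a document induces such a path. For the converse, I would take any $q_0$-to-$q_f$ path $\pi$ and argue that it yields an accepting run witnessing satisfiability. Reading off the letter transitions of $\pi$ gives a candidate document $d$, and the position counter advances exactly as required by the run semantics. The only thing to verify is that the variable openings and closings along $\pi$ respect the constraints imposed by the run definition: each variable opened at most once, closed at most once, and closed only after being opened. But this is precisely guaranteed by sequentiality of $\cA$, since $\pi$ is a path of $\cA$ and every path of a sequential automaton satisfies conditions (1)--(3) in the definition. Hence $\pi$ corresponds to a valid accepting run, the induced mapping $\mu^{\pi}$ lies in $\semd{\cA}$, and $\semd{\cA} \neq \emptyset$, so $\cA$ is satisfiable.

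Given the reduction, membership in $\NLOGSPACE$ follows from the fact that $st$-reachability in a directed graph is the canonical $\NLOGSPACE$-complete problem. The transition graph of $\cA$ has vertex set $Q$ and an edge for each transition, so it has size polynomial in $\cA$ and can be traversed on the fly: a nondeterministic machine guesses the path edge by edge, storing only the current state (logarithmic space) and checking at each step that the guessed transition is in $\delta$, accepting when it reaches $q_f$. No information about spans, positions, or the guessed document needs to be stored, since by the argument above their consistency is automatic for sequential automata.

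The main subtlety I anticipate is not the reachability reduction itself but justifying that sequentiality is exactly what decouples satisfiability from the actual span assignments. One must be careful that a path in $\cA$ realising reachability does not, for instance, close a variable that was never opened along that particular path, or open the same variable twice; in a non-sequential automaton such a path would not correspond to a legal run and the naive reachability reduction would be unsound. Here sequentiality is applied at the level of the automaton (every path is sequential), so this pathology is ruled out by hypothesis, and the argument goes through. It is worth noting that this is precisely why the analogous satisfiability problem is $\NP$-hard for unrestricted $\VA$ (Theorem~\ref{theo:sat-general}): without sequentiality, deciding whether \emph{some} well-formed path exists is no longer a plain reachability question.
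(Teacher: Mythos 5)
Your proof is correct and follows essentially the same route as the paper: both reduce satisfiability of a sequential $\VA$ to $q_0$-to-$q_f$ reachability in the transition graph, using the fact that sequentiality guarantees every such path already respects the variable-opening and -closing discipline and hence yields an accepting run over the document spelled by its letter transitions. The $\NLOGSPACE$ bound then follows from graph reachability exactly as you describe.
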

It is interesting to note that this result is very similar to satisfiability of finite state automata: given a sequential $\VA$ the $\NLOGSPACE$ algorithm simply checks reachability between initial and final states.
This again shows the similarity between finite state automata and $\VA$ if the sequential restriction is imposed. 

Next, we consider extraction rules combined with the sequential or functional $\SRGX$.
Similarly as before, $\sat$ of extraction rules remains intractable even for the class of functional dag-like rules.
However, if we consider sequential tree-like rules we can restore tractability since tree-like rules are always satisfiable.
\begin{theorem}\label{theo:sat-rules}
	$\sat$ of functional dag-like rules is $\NP$-hard.
	On the other hand, any sequential tree-like rule is always satisfiable.
\end{theorem}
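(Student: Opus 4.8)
The statement has two independent parts, so I would prove them separately.

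\textbf{Part 1: $\sat$ of functional dag-like rules is $\NP$-hard.}
The plan is to reduce from an $\NP$-complete problem, most naturally $\ONETHREESAT$ (the paper has defined the macro, suggesting it is the intended source). The idea is to reuse the machinery that already makes dag-like rules hard: by Theorem~\ref{theo:eval-daglike-npcomplete}, $\nemp$ of functional dag-like rules is $\NP$-complete, and the difficulty there comes from the fact that dag-like rules can reference the same variable from several extraction formulas, thereby encoding conjunctive constraints. To get $\sat$-hardness I would lift that $\nemp$ construction so that the \emph{existence of a satisfying document} encodes a solution to the SAT instance rather than fixing a particular document. Concretely, given a formula $\phi$ with clauses over variables $v_1,\dots,v_n$, I would build a functional dag-like rule whose top-level formula $\varphi_0$ lays out a document of a fixed shape with one designated variable-captured region per propositional variable $v_i$, and then use extraction subformulas $x_i.\varphi_i$ to enforce, via the content captured, that the region for $v_i$ spells either a ``true'' marker or a ``false'' marker. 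The clause constraints (exactly-one-in-three) are encoded by further extraction formulas that force the captured contents across the three literals of a clause to be consistent with exactly one being true. Since the rule must be functional and dag-like, I would arrange the graph $G_\varphi$ so that edges point from clause-nodes down to variable-nodes without creating cycles, and ensure each $\varphi_i$ is a functional $\SRGX$. Then $\phi$ is $1$-in-$3$ satisfiable if and only if some document makes the rule's semantics non-empty.

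\textbf{Part 2: every sequential tree-like rule is always satisfiable.}
Here the plan is a direct constructive argument by induction on the tree structure of $G_\varphi$. Since the rule is tree-like, $G_\varphi$ is a tree rooted at $\texttt{doc}$, so the extraction formulas impose no cyclic dependencies and different branches are genuinely independent (this independence is exactly what Theorem~\ref{theo:treelike-ptime} exploited for tractable evaluation). First I would observe that any single sequential $\SRGX$ formula, viewed as an ordinary regular expression with the variable markers treated as alphabet symbols, is nonempty as a language, because sequentiality guarantees each variable is opened at most once and closed exactly once after being opened, so no contradictory constraint on a variable arises within one formula. The core step is then to build a witnessing document and a satisfying tuple of mappings $\overline{\mu}$ bottom-up: process the tree from the leaves toward $\texttt{doc}$, and for each node $x_i$ choose a concrete string matching $\varphi_i$ for the span captured at that node, recursively substituting, for each child variable $y$ occurring in $\varphi_i$, the already-constructed string witnessing $y$'s own subtree. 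Because the tree has no cycles, this recursion terminates and produces a consistent assignment; because distinct branches share no variables (simplicity plus the tree shape), the compatibility condition $\mu_i \sim \mu_j$ from the definition of $\models_d$ is automatically met.

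\textbf{Main obstacle.}
The routine parts are the well-definedness of the bottom-up construction and the observation that single sequential formulas are individually satisfiable. The delicate point in Part~2 is verifying the three conditions of $\overline{\mu}\models_d\varphi$ simultaneously for the document I construct: in particular, that the instantiated-variable set $\ivar(\varphi,\overline{\mu})$ behaves as expected so that condition~(2) is met (every child reachable from $\texttt{doc}$ in the tree is instantiated and assigned, and the leftover variables are left undefined), and that the spans assigned by the recursion line up positionally when the substituted child strings are concatenated into the parent span. For Part~1, the main obstacle is encoding the $1$-in-$3$ constraint using only functional $\SRGX$ subformulas while keeping $G_\varphi$ acyclic and keeping the whole rule functional; the shared-variable mechanism is what provides the conjunctive power, but I must be careful that making one variable referenced from several clause-formulas does not introduce a directed cycle or violate the functionality conditions of Section~\ref{ss-comparison}.
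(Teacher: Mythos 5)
For the first part your plan stops exactly where the real work begins. You correctly identify that the conjunctive power of dag-like rules is what makes them hard, but you then propose to design a fresh reduction from $\ONETHREESAT$ in which the \emph{document} encodes the truth assignment, and you yourself flag the encoding of the exactly-one-in-three constraint by functional, acyclic $\SRGX$ as the ``main obstacle'' without resolving it. The paper avoids this entirely with a single observation: the rule $\varphi$ already constructed in the proof of Theorem~\ref{theo:eval-daglike-npcomplete} can only be matched by the one document $d=\#$ (the construction forces exactly one $\#$ symbol and nothing else), so $\varphi$ is satisfiable iff $\semd{\varphi}\neq\emptyset$ for $d=\#$ iff the $\ONETHREESAT$ instance has a solution. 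In other words, the $\nemp$ reduction \emph{is} the $\sat$ reduction, verbatim; no new construction and no fresh verification of functionality or acyclicity is needed. Without that observation, your Part~1 is a plan whose critical step --- the clause gadget --- is left unbuilt, so as written it is a genuine gap rather than a different but complete route.

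The second part is asserted without proof in the paper's appendix, and your bottom-up construction is a reasonable way to supply one: each sequential $\SRGX$, read as a regular expression over $\Sigma\cup\VV$, has nonempty language (the syntax has no $\emptyset$, and sequentiality rules out contradictory variable reuse within one formula), so you can pick a witness word per node and substitute children's witness strings recursively; since $G_\varphi$ is a tree, every variable occurs in exactly one parent formula, which makes the compatibility conditions essentially automatic. The two caveats you raise are the right ones, with one correction: a child variable that does not occur in the disjunct you happen to choose for its parent is \emph{not} instantiated and must be left undefined by condition~(2) of the satisfaction relation, so it is not true that every node reachable from $\texttt{doc}$ gets assigned --- only those along the chosen derivations do.
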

It is important to make the connection here between regular expressions, sequential RGX and sequential tree-like rules: all formalisms are trivially satisfiable.
In some sense, this gives more evidence that sequential RGX and sequential tree-like rules are the natural extensions of regular expressions, as they inherit all the good properties of its predecessor.

We continue by considering the classical problem of containment of expressions.
Formally, for a language $\LL$ we define the problem $\containment[\LL]$, which, given two expressions $\gamma_1$ and $\gamma_2$ in $\LL$, asks whether $\semd{\gamma_1} \subseteq \semd{\gamma_2}$ holds for every document $d$.
%
%
It is well known that containment for regular languages is $\PSPACE$-complete~\cite{stockmeyer1973word}, even for restricted classes of regular expressions~\cite{martens2009complexity}.
Since our expressions are extensions of regular expressions and automata, these results imply that a  $\PSPACE$ bound is the best we can aim for.
Given that the complexity of evaluation and satisfiability for $\VA$ increases compared to regular languages, one would expect the complexity of containment to do the same. 
Fortunately, this is not the case. In fact,
containment of all infromation extraction languages we consider is $\PSPACE$-complete.
\begin{theorem}\label{theo:contain-general}
	$\containment$ of extraction rules and $\containment$ of $\VA$ are both $\PSPACE$-complete.
\end{theorem}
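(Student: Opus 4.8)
The plan is to prove matching lower and upper bounds. For $\PSPACE$-hardness, note that every formalism we consider subsumes ordinary regular expressions: a variable-free $\gamma$ (in particular an $\SRGX$, hence an extraction rule, and trivially a $\VA$) satisfies $\semd{\gamma}=\{\emptyset\}$ if $d\in L(\gamma)$ and $\semd{\gamma}=\emptyset$ otherwise, where $L(\gamma)$ is the classical language of $\gamma$. Hence for variable-free $\gamma_1,\gamma_2$ we have $\semd{\gamma_1}\subseteq\semd{\gamma_2}$ for every $d$ iff $L(\gamma_1)\subseteq L(\gamma_2)$, so the classical $\PSPACE$-hardness of regular-expression containment~\cite{stockmeyer1973word} transfers verbatim to both $\containment$ of $\VA$ and $\containment$ of extraction rules. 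The remaining work is the upper bound.

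For the upper bound I would encode mappings as \emph{ref-words}: words over the extended alphabet $\Sigma_\Gamma=\Sigma\cup\{\Open{x},\Close{x}\mid x\in\Var(\gamma)\}$ whose projection to $\Sigma$ is the document and whose marker positions record the opening and closing points of each variable. A $\VA$ $A$, read as a plain NFA $\hat A$ over $\Sigma_\Gamma$, accepts exactly the ref-words realized by its accepting runs. Since several ref-words may encode the same pair $(d,\mu)$ (markers at the same text position can be permuted, and an opened-but-never-closed variable contributes nothing to $\mu$), I would fix a regular \emph{canonical form}: a small automaton forcing the block of markers between two consecutive letters to appear in a fixed order and discarding unused opens, so each $(d,\mu)$ has a unique canonical ref-word. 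Writing $\mathrm{Valid}$ and $\mathrm{Canon}$ for the regular sets of valid and canonical ref-words, the containment $\forall d:\semd{\gamma_1}\subseteq\semd{\gamma_2}$ becomes a containment of ref-word languages, which I would decide in $\NPSPACE=\PSPACE$ by guessing a counterexample ref-word $w$ symbol by symbol while maintaining: a guessed state of $\hat A_1$; the finite memories for $\mathrm{Valid}$ (one open/closed flag per variable, hence polynomially many bits) and $\mathrm{Canon}$; and, crucially, the subset of states of $\hat A_2$ reachable on $w$ via on-the-fly determinisation. The asymmetry is what keeps us in $\PSPACE$: validity and canonicity are enforced only on the guessed word, so $\hat A_2$ stays the \emph{polynomial} NFA and its subset fits in $|Q_2|$ bits; we accept when $\hat A_1$ is final and the $\hat A_2$-subset contains no final state, i.e.\ $w\in\semd{\gamma_1}$ but $w\notin\semd{\gamma_2}$.

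For extraction rules the same ref-word framework applies, but now the recogniser of $\semd{\varphi_2}$ guesses a set of instantiated variables and simulates all $m+1$ $\SRGX$-automata in parallel over the appropriate infixes, so naively it is an exponential-size NFA whose complement (a subset of exponentially many configurations) would blow up beyond $\PSPACE$. I would resolve this with two observations. First, the instantiated set need not be guessed at all: using that a conjunct head $x_i$ is instantiated iff $x_i\in\dom(\mu)$ (a head is forced into the domain exactly when its conjunct fires, and a non-fired conjunct leaves it undefined), the set of active conjuncts is read directly off the markers of $w$. Second, once the active conjuncts are fixed, membership in $\semd{\varphi_2}$ is a \emph{conjunction} of independent ``does this $\SRGX$-automaton accept its infix'' statements, one per active conjunct plus $\varphi_0$; hence complementation distributes, and $\mu\notin\semd{\varphi_2}$ iff \emph{some} factor fails. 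I would therefore run one polynomial-size on-the-fly determinisation per conjunct-automaton, starting and stopping each when the corresponding $\Open{x_i}$ and $\Close{x_i}$ are read, and accept the counterexample when at least one of these subset simulations misses an accepting state. Since there are at most $m+1$ factors and each subset is polynomial, the whole procedure stays in $\NPSPACE=\PSPACE$.

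The main obstacle is precisely this last point: a rule is a conjunction of constraints that may reference overlapping, non-hierarchical spans, so the obvious product recogniser is exponential and its complement doubly exponential. The crux is to show that the instantiated set is determined by the output mapping, eliminating the exponential guess, and that the conjunctive structure lets complementation be carried out factor-by-factor, each factor a polynomial NFA. Verifying these two facts carefully---in particular that reading the active-conjunct set off $w$ is sound and complete with respect to the $\ivar$ semantics---is where the real effort lies; the $\VA$ case and both hardness reductions are then comparatively routine.
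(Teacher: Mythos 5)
Your hardness argument and the overall skeleton of the $\VA$ upper bound (guess a counterexample word symbol by symbol, simulate $A_1$ existentially and an on-the-fly determinisation of $A_2$ universally, stay in $\NPSPACE=\PSPACE$) coincide with the paper's proof. The flaw is in the reduction to ref-word language containment via a \emph{canonical form}. The containment $\semd{A_1}\subseteq\semd{A_2}$ for all $d$ is \emph{not} equivalent to containment of the canonical ref-word languages of $\hat A_1$ and $\hat A_2$, because a $\VA$ is free to realise a given mapping $\mu$ only through a non-canonical run: it may emit the markers of a block in a different order, or open a variable it never closes. Concretely, if $A_1$ accepts via $\Open{x}\,\Open{y}$ at some position and $A_2$ only has the transitions $\Open{y}\,\Open{x}$ there, both define the same $\mu$, yet your algorithm sees the canonical word accepted by $\hat A_1$ and rejected by $\hat A_2$ and reports a spurious counterexample (unsoundness on the $A_2$ side); dually, if $A_1$ only realises a genuine counterexample via a non-canonical run, your guess never finds it (incompleteness on the $A_1$ side). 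The paper avoids this by never fixing an order: it guesses each block as a \emph{set} $P$ of variable operations and updates both state sets to $\bigcup_{w\in\Perm(P)}S(S_i,w)$, so that membership of $q_1^f$ in $S_1$ is existential over interleavings and non-membership of $q_2^f$ in $S_2$ is universal over interleavings. Your argument can be repaired by doing the same (or by first closing both automata under block-permutation and dangling-open equivalence), but as written the step ``the containment becomes a containment of ref-word languages'' is false.

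On the extraction-rule upper bound: the paper's appendix in fact only proves the $\VA$ case, so you are attempting more than it does. Your observation that a head $x_i$ lies in $\ivar(\varphi,\overline{\mu})$ iff $x_i\in\dom(\mu)$ is correct and does eliminate the exponential guess. However, ``complementation distributes over independent factors'' is not quite right: the semantics requires a witnessing tuple $(\mu_0,\ldots,\mu_m)$ with $\bigcup_i\mu_i=\mu$ and pairwise compatibility, and since a non-functional $\SRGX$ conjunct may assign only a subset of its variables, which conjunct is responsible for covering which variable of $\dom(\mu)$ is not determined factor-by-factor; the covering condition couples the conjuncts, so $\mu\notin\semd{\varphi_2}$ is not simply ``some conjunct's automaton rejects its infix.'' The same block-permutation issue from the $\VA$ case also recurs inside each conjunct. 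Both points need an explicit argument before this part can be considered a proof.
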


Given that all RGX subfragments contain regular expressions, it does not make sense to consider the functional or sequential restrictions of RGX to lower the complexity.
Instead, we have to look for subclasses of regular languages where containment can be decided efficiently like, for example, deterministic finite state automata~\cite{HopcroftU79}.
It is well-known that containment between deterministic finite state automata can be checked in $\PTIME$~\cite{stockmeyer1973word}.
Then a natural question is: what is the deterministic version of $\VA$? 
One possible approach is to consider a deterministic model that, given any document produces a mapping deterministically. 
Unfortunately, this idea is far too restrictive since it will force the model to output at most one mapping for each document.
A more reasonable approach is to consider an automata model that behaves deterministically \emph{both} in the document and the mapping.
This can be formalised as follows: a $\VA$ $(Q, q_0, q_f, \delta)$ is \emph{deterministic} if for every $p \in Q$ and $v \in \Sigma \cup \{\Open{x}, \Close{x} \mid x \in \VV \}$ there exists at most one $q\in Q$ such that $(p, v, q) \in \Delta$.
That is, the transition relation of a deterministic $\VA$ is a  function with respect to both $\Sigma$ and $\VV$. 
Although the deterministic version of $\VA$ seems straightforward, as far as we know, this is the first attempt to introduce this notion for infromation extraction languages. 

The first natural question to ask is whether deterministic $\VA$ can still define the same class of mappings as the non-deterministic version.
Indeed, one can easily show that every $\VA$ can be determinised by following the standard determinisation procedure~\cite{HopcroftU79}.
\begin{proposition}\label{prop:determinize}
	For every $\VA$ $\cA$, there exists a deterministic $\VA$ $\cA^{\text{det}}$ such that $\semd{\cA} = \semd{\cA^{\text{det}}}$ for every document $d$.
\end{proposition}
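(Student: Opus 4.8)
The plan is to reduce the statement to the classical subset construction for finite automata, treating the variable markers as ordinary alphabet symbols. Set $\Gamma = \Sigma \cup \{\Open{x}, \Close{x} \mid x \in \VV\}$ and regard the transition relation $\delta$ of $\cA$ as that of an NFA over $\Gamma$. The key observation is that an accepting run of a $\VA$ over a document $d$ is completely determined by the word $w \in \Gamma^{*}$ it reads (its \emph{ref-word}): the positions $i_0, \ldots, i_m$ are forced by $w$, since they increase exactly on its $\Sigma$-letters; the requirement that the $\Sigma$-projection of $w$ equal $d$ pins down the document; and the mapping $\mu^{\rho}$ is read off $w$ by pairing each $\Open{x}$ with the matching $\Close{x}$. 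Moreover, whether $w$ corresponds to a \emph{valid} run (each variable opened and closed at most once, and closed only after being opened) is a property of $w$ alone, independent of the automaton. Hence $\semd{\cA}$ depends only on the raw path language of $\cA$ over $\Gamma$, after keeping the valid words whose $\Sigma$-projection is $d$.

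Given this, I would define $\cA^{\text{det}}$ by the powerset construction over $\Gamma$: its states are subsets $S \subseteq Q$, the initial state is $\{q_0\}$, and for each $S$ and each $v \in \Gamma$ the unique transition is $S \xrightarrow{v} \{q' \mid \exists q \in S.\ (q,v,q') \in \delta\}$. By construction there is at most one target for each pair $(S,v)$, so $\cA^{\text{det}}$ satisfies the determinism condition of the statement. The correctness argument is the standard one: for every $w \in \Gamma^{*}$ there is exactly one run of $\cA^{\text{det}}$ on $w$, ending in the set of all states reachable in $\cA$ from $q_0$ by reading $w$; thus $w$ reaches an accepting configuration of $\cA^{\text{det}}$ iff $w$ reaches $q_f$ in $\cA$. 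Since validity and the induced mapping depend only on $w$, the set of valid accepting ref-words is identical for both automata, and therefore $\semd{\cA^{\text{det}}} = \semd{\cA}$ for every $d$.

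The one point that needs care — and the main obstacle — is the \emph{single final state} demanded by the $\VA$ format: the powerset construction naturally declares \emph{every} subset containing $q_f$ to be accepting, and in general these accepting macro-states cannot be collapsed into one while preserving determinism (the usual DFA obstruction to a unique accepting state). I would resolve this by fixing a canonical end for every run before determinising. Concretely, assume without loss of generality that $\cA$ signals acceptance by a dedicated final transition into a fresh sink state $q_f$ that carries no outgoing transitions and is reachable only by that transition (any $\VA$ can be put in this shape using an end-of-document marker $\dashv$, exactly as in the bordered-document convention, without altering the mappings it produces). In the powerset automaton the only reachable accepting macro-state is then $\{q_f\}$, which we take as the unique final state; determinism is preserved because $q_f$ has no successors. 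With this normalisation the construction yields a genuine deterministic $\VA$ equivalent to $\cA$, at the expected single-exponential blow-up inherited from the subset construction.
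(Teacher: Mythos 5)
Your proof is correct in its core and takes essentially the same route as the paper: the paper also determinises $\cA$ by the classical subset construction over the extended alphabet $\Sigma \cup \{\Open{x}, \Close{x} \mid x \in \VV\}$, with the accepting macro-states being exactly those subsets containing $q_f$, and its correctness argument is the standard bidirectional induction on runs (your observation that $\semd{\cA}$ depends only on the raw path language, because validity and the induced mapping are properties of the ref-word alone, is the same fact packaged more cleanly).

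The one place where you diverge is the single-final-state issue, and there your fix does not quite go through as stated. The paper simply declares that it will ``allow a set of final states instead of a single final state'' without loss of generality, i.e.\ it relaxes the model rather than normalising the input. Your normalisation instead funnels acceptance through a dedicated transition into a fresh sink $q_f$, but the only way to realise such a transition without breaking the powerset argument is to label it with a genuine end-of-document marker -- which changes the document and hence the spans, so $\semd{\cA}$ is no longer literally preserved for the original $d$ (an $\varepsilon$-labelled final transition would not help, since after taking closures every macro-state from which $q_f$ is $\varepsilon$-reachable becomes accepting again). So either adopt the paper's convention of a final-state set for deterministic $\VA$, or argue separately that the bordered-document normalisation is harmless for all the uses of Proposition~\ref{prop:determinize} (e.g.\ in Theorem~\ref{theo:contain-det}); as written, the ``without altering the mappings it produces'' claim is the one unproved step.
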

As mentioned previously, the motivation of having a deterministic model is to look for subclasses of $\VA$ where $\containment$ has lower complexity. 
We can indeed show that this is the case for deterministic \VA, although the drop in complexity is not as dramatic as with regular languages.
\begin{theorem}\label{theo:contain-det}
   $\containment$ of deterministic $\VA$ is in $\pitwop$. Moreover, $\containment$ of deterministic sequential $\VA$ is $\coNP$-complete.
\end{theorem}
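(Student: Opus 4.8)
The plan is to recast everything in terms of \emph{ref-words}: words over the extended alphabet $\Gamma = \Sigma \cup \{\Open{x}, \Close{x} \mid x \in \Var(\cA_1)\cup\Var(\cA_2)\}$ in which each $\Open{x}$ and each $\Close{x}$ occurs at most once and opens precede closes. Every valid ref-word $r$ determines a document $d_r$ (its $\Sigma$-projection) and a mapping $\mu_r$, and reading a $\VA$ $\cA$ as an automaton over $\Gamma$ yields the regular ref-word language $L(\cA)\subseteq\Gamma^*$. The key consequence of determinism is that for a deterministic $\cA$ this automaton is already a \emph{deterministic} finite automaton $D_\cA$ of polynomial size, so its complement is again a polynomial DFA; this is precisely the leverage absent in the general case (Theorem~\ref{theo:contain-general}), where complementing a $\VA$ over $\Gamma$ needs determinisation and forces the $\PSPACE$ bound. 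The single feature preventing $\semd{\cA_1}\subseteq\semd{\cA_2}$ from being ordinary language containment is that several variable operations may sit between the same two letters, and $\mu_r$ forgets their order. Writing $r\equiv r'$ when they induce the same document and mapping (they differ only by permuting $\Gamma\setminus\Sigma$ symbols inside a common gap), the containment characterisation is
\[
\semd{\cA_1}\subseteq\semd{\cA_2}\text{ for all }d \iff \forall r\in L(D_{\cA_1})\ \exists r'\equiv r:\ r'\in L(D_{\cA_2}).
\]

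\textbf{Upper bound $\pitwop$.} I would show that non-containment lies in $\Sigma_2^p=\NP^{\coNP}$. The outer existential guesses a counterexample $r\in L(D_{\cA_1})$; since there are at most $2k$ variable operations, the only quantity to control is the number of letters, which I would bound by pumping the product of $D_{\cA_1}$ with the subset construction of $D_{\cA_2}$. Given $r$, one must certify that \emph{no} reordering is accepted by $\cA_2$, i.e.\ $\forall r'\equiv r:\ r'\notin L(D_{\cA_2})$. Its negation, $\exists r'\equiv r:\ r'\in L(D_{\cA_2})$, is in $\NP$ (guess a permutation of the at most $2k$ simultaneous operations inside their gaps and simulate the polynomial DFA $D_{\cA_2}$), so the certification is a single $\coNP$ test. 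An existential guess followed by one $\coNP$ oracle call is exactly $\Sigma_2^p$, placing $\containment$ of deterministic $\VA$ in $\pitwop$.

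\textbf{The sequential case.} Here the reordering existential must be eliminated. Using sequentiality I would put each deterministic $\seqVA$ into a normal form where simultaneous operations are executed in a fixed canonical order (all closes before all opens, variables ordered by index) and argue that this normalisation preserves $\semd{\cdot}$; then every pair $(d,\mu)$ corresponds to a \emph{unique} ref-word, $\equiv$ collapses, and membership $\mu\in\semd{\cA_2}$ becomes one deterministic simulation runnable in $\PTIME$. Non-containment then reads: guess $(d,\mu)$, check $\mu\in\semd{\cA_1}$ and $\mu\notin\semd{\cA_2}$, both in $\PTIME$; the document is bounded polynomially by pumping the genuine product DFA $D_{\cA_1}\times D_{\cA_2}$ (only one $\cA_2$-run is now relevant). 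Hence non-containment is in $\NP$ and $\containment[\seqVA]$ is in $\coNP$. For the matching lower bound I would reduce the complement of $\ONETHREESAT$ to non-containment, encoding an instance by a document of assignment blocks and building deterministic sequential $\cA_1,\cA_2$ so that a mapping witnessing $\semd{\cA_1}\not\subseteq\semd{\cA_2}$ corresponds exactly to a satisfying assignment; determinism and sequentiality hold by construction, giving $\coNP$-hardness.

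\textbf{Main obstacle.} The delicate point is step two: taming the partial-commutation closure $\equiv$ induced by simultaneous variable operations \emph{without} determinising $\cA_2$. It is exactly this closure — harmless in the sequential normal form but genuinely present in the general deterministic case — that forces the jump from $\coNP$ to $\pitwop$. Getting the quantifier structure right (an existential witness whose verification needs only a single $\coNP$ call) and, in particular, arguing that the witness ref-word can be taken short enough through the product of $D_{\cA_1}$ with the subset construction of $D_{\cA_2}$, is where the argument must be most careful, since a careless treatment of the reordering would push the bound back up towards $\PSPACE$.
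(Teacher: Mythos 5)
Your overall strategy is the same as the paper's: decide non-containment by existentially guessing a length-bounded counterexample $(d,\mu)$ together with one ordering of the simultaneous variable operations that witnesses $\mu\in\semd{A_1}$, then universally quantify over orderings to certify $\mu\notin\semd{A_2}$, which gives $\Sigma_2^p$ for the complement; collapse the inner quantifier in the sequential case to get $\coNP$; and prove hardness by reduction. Two of your concrete choices, however, would not deliver the stated bounds as written. First, bounding the counterexample by pumping the product of $D_{A_1}$ with the \emph{subset construction} of $D_{A_2}$ gives a bound of order $|Q_1|\cdot 2^{|Q_2|}$ on the letter-only stretches, i.e.\ an exponentially long existential witness, which is useless for a $\Sigma_2^p$ machine; you flag this as the delicate point but do not resolve it. The paper instead pumps the \emph{direct} product of the two deterministic automata: inside any maximal substring without variable operations a repeated pair of states lets one cut the document, yielding the polynomial bound $(2|\VV|+1)|Q_1||Q_2|$ — determinism of $A_2$, not its powerset, is what is exploited. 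Second, your sequential upper bound routes through a ``canonical order'' normal form so that each $(d,\mu)$ has a unique ref-word. That step is both unjustified (re-ordering simultaneous operations while preserving determinism of the automaton is not obviously a polynomial-time transformation) and unnecessary: the paper simply guesses $(d,\mu)$ and invokes the $\PTIME$ model-checking procedure for sequential $\VA$ from Theorem~\ref{theo:ptime-sequential-sVA}, which already resolves the ordering of coalesced operations inside each gap by an $\NLOGSPACE$ path search, so $A_2$ need not be normalised and no uniqueness claim is needed.

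The lower bound is also only gestured at. The paper reduces from validity of DNF formulas (the complement of CNF satisfiability) rather than from the complement of $\ONETHREESAT$, and the reduction uses only the \emph{empty} document: all structure is encoded with variable operations, $A_1$ is a deterministic chain that picks one literal per propositional variable and then reads all clause variables, and $A_2$ is a union of per-clause branches that is deterministic precisely because each branch begins with a distinct clause variable. If you insist on documents made of ``assignment blocks'', the entire content of the reduction — that both automata can be made simultaneously deterministic and sequential — remains to be supplied.
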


Although containment of deterministic models is better than in the general case, the complexity is still high. 
By taking a closer look at the lower bound (see the appendix), this happens because of the following two reasons: (i) some mappings extract spans that \emph{intersects} at extreme points; and (ii) the automaton can open a variable, but never close it. Notice that the problem (ii) is solved by sequential $\VA$. To overcome (i) we need the following definition. We say that two spans $(i_1, j_1)$ and $(i_2, j_2)$ are \emph{point-disjoint} if $\{i_1, j_1\} \cap \{i_2, j_2\} = \emptyset$, and we say that a mapping $\mu$ is point-disjoint if the images of different variables are point-disjoint. A $\VA$ automaton is point-disjoint if all mappings in $\semd{\gamma}$ are point-disjoint for every document $d$.
Using these restrictions we  can show tractability of containment.
\begin{theorem}\label{theo:contain-point-disjoint}
 $\containment$ of deterministic sequential $\VA$ that produce point-disjoint mappings is in $\PTIME$. 
\end{theorem}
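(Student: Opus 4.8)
The plan is to reduce the problem to the classical containment test for deterministic automata over an extended alphabet, exploiting the two restrictions exactly as advertised by the problems (i) and (ii) isolated before the statement. Recall that the high complexity of containment for general deterministic $\VA$ comes from (i) distinct variables whose spans share an endpoint and (ii) variables that are opened but never closed. Point-disjointness rules out (i) and sequentiality rules out (ii), and I claim that together they force every accepting run of such a $\VA$ to follow a single \emph{canonical} interleaving of letters and variable operations. Once this is established, the mappings produced by a point-disjoint sequential $\VA$ are in bijection with a regular language of \emph{ref-words} (strings over $\Sigma \cup \{\Open{x},\Close{x}\}$ read along accepting runs), and containment of mappings collapses to containment of these regular languages, which is in $\PTIME$ in the deterministic case.

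The key lemma I would prove first is \emph{uniqueness of canonical ref-words}: for a document $d$ and a point-disjoint mapping $\mu$, there is exactly one ref-word $w$ whose projection to $\Sigma$ equals $d$ and whose opening/closing positions induce $\mu$. The argument goes position by position. Since $\mu$ is point-disjoint, at most one variable has an endpoint at any position $i \in [1, |d|+1]$, so the variable operations that must be inserted at position $i$ all belong to one variable; the only residual freedom, a variable with an empty span that opens and closes at $i$, is fixed by the convention $\Open{x}$ before $\Close{x}$. Sequentiality guarantees no variable is opened without being closed, so $\dom(\mu)$ coincides exactly with the variables occurring in $w$, which removes the ambiguity created by dangling opens. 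Consequently, distinct accepting runs of a point-disjoint sequential $\VA$ produce distinct mappings, and each accepting run is precisely the canonical ref-word of the mapping it defines.

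With the lemma in hand I would conclude as follows. For the two input automata $\cA_1,\cA_2$ let $L_i$ be the language of ref-words read along accepting runs of $\cA_i$, a regular language over the finite alphabet $\Sigma \cup \{\Open{x},\Close{x} \mid x \in \Var(\cA_1)\cup \Var(\cA_2)\}$. Because both automata are point-disjoint, the bijection of the lemma yields that $\semd{\cA_1} \subseteq \semd{\cA_2}$ for every document $d$ if and only if $L_1 \subseteq L_2$: any ref-word accepted by $\cA_1$ is valid, canonical and point-disjoint, and if $\cA_2$ also produced that mapping it would have to accept the very same canonical ref-word. Since $\cA_2$ is a deterministic $\VA$, it is a deterministic automaton over the extended alphabet, so I can complete and complement it in polynomial time; taking the product of $\cA_1$ with this complement and testing non-emptiness by reachability decides $L_1 \subseteq L_2$, and hence the containment, in $\PTIME$.

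The main obstacle is the uniqueness lemma, and within it the empty-span corner case: a variable with span $(i,i)$ occupies a single position with two operations, so I must verify both that point-disjointness forbids any other variable from touching that position and that the order $\Open{x}\,\Close{x}$ is the only interleaving consistent with the induced mapping. A second delicate point is the forward direction $\semd{\cA_1} \subseteq \semd{\cA_2} \Rightarrow L_1 \subseteq L_2$, which is exactly where point-disjointness of $\cA_2$ is used to force its witnessing run to be the same canonical ref-word. Finally I must confirm that determinism of the $\VA$ as a transition relation over $\Sigma \cup \{\Open{x},\Close{x}\}$ is precisely what makes the complement polynomial rather than exponential, so that the overall procedure stays in $\PTIME$.
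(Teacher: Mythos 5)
Your proposal is correct and follows essentially the same route as the paper: the heart of both arguments is that point-disjointness plus sequentiality force a unique canonical interleaving of letters and variable operations for each document--mapping pair, shared by any automaton with these properties, so that containment of mappings reduces to containment of the (deterministic) ref-word languages. The paper realizes this via an on-the-fly parallel simulation giving an $\NLOGSPACE$ bound for the complement, whereas you make the complete-complement-and-product construction explicit, which still lands in $\PTIME$ as claimed.
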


\section{Conclusions}\label{sec-conc}

In this paper we propose to extend the semantics of several previously proposed IE formalisms with mappings in order to support extraction of information that is potentially incomplete. This approach allows us to simplify and make fully declarative the semantics of regex formulas of \cite{FKRV15} and extraction rules of \cite{AMRV16}, while at the same time making it possible to compare their expressive power.
From our analysis it follows that several variants of expressions proposed by \cite{FKRV15} and \cite{AMRV16} are in fact equivalent, and that obtaining an efficient algorithm for enumerating all of their outputs is generally not possible. To overcome the latter, we isolate a class of sequential regex formulas, which extend the functionality constraint of \cite{FKRV15}, and show that these can be efficiently evaluated both in isolation, and when combined into tree-like rules of~\cite{AMRV16}. Finally, the good properties of sequential formulas and tree-like rules are also preserved when considering main static tasks, thus suggesting that they have the potential to serve as a theoretical base of information extraction languages.

\medskip
\noindent{\bf Acknowledgements.} The authors were supported by the Nucleus Millennium Center for Semantic Web Research grant N120004. Vrgo\v{c} was also supported by the FONDECYT project nr. 11160383.



\bibliographystyle{plain}
\bibliography{biblio}

\begin{thebibliography}{10}

\bibitem{AHV95}
Serge Abiteboul, Richard Hull, and Victor Vianu.
\newblock {\em Foundations of Databases}.
\newblock Addison-Wesley, 1995.

\bibitem{AMRV16}
Marcelo Arenas, Francisco Maturana, Cristian Riveros, and Domagoj Vrgo\v{c}.
\newblock A framework for annotating csv-like data.
\newblock {\em Proc. VLDB Endow.}, 9(11):876--887, July 2016.

\bibitem{CM99}
Mary~Elaine Califf and Raymond~J. Mooney.
\newblock Relational learning of pattern-match rules for information
  extraction.
\newblock In {\em Proceedings of the Sixteenth National Conference on
  Artificial Intelligence and Eleventh Conference on Innovative Applications of
  Artificial Intelligence, July 18-22, 1999, Orlando, Florida, {USA.}}, pages
  328--334, 1999.

\bibitem{ChiticariuLR13}
Laura Chiticariu, Yunyao Li, and Frederick~R. Reiss.
\newblock Rule-based information extraction is dead! long live rule-based
  information extraction systems!
\newblock In {\em Proceedings of the 2013 Conference on Empirical Methods in
  Natural Language Processing, {EMNLP} 2013, 18-21 October 2013, Grand Hyatt
  Seattle, Seattle, Washington, USA, {A} meeting of SIGDAT, a Special Interest
  Group of the {ACL}}, pages 827--832, 2013.

\bibitem{Cunningham02}
Hamish Cunningham.
\newblock Gate, a general architecture for text engineering.
\newblock {\em Computers and the Humanities}, 36(2):223--254, 2002.

\bibitem{Downey1999}
R.~G. Downey and M.~R. Fellows.
\newblock {\em Parameterized Complexity}.
\newblock Springer New York, 1999.

\bibitem{FKRV14}
Ronald Fagin, Benny Kimelfeld, Frederick Reiss, and Stijn Vansummeren.
\newblock Cleaning inconsistencies in information extraction via prioritized
  repairs.
\newblock In {\em PODS}, pages 164--175, 2014.

\bibitem{FKRV15}
Ronald Fagin, Benny Kimelfeld, Frederick Reiss, and Stijn Vansummeren.
\newblock Document spanners: A formal approach to information extraction.
\newblock {\em {Journal of the ACM}}, 62(2), 2015.

\bibitem{FlumG06}
J{\"{o}}rg Flum and Martin Grohe.
\newblock {\em Parameterized Complexity Theory}.
\newblock Texts in Theoretical Computer Science. An {EATCS} Series. Springer,
  2006.

\bibitem{F17}
Dominik~D. Freydenberger.
\newblock A logic for document spanners.
\newblock In {\em 20th International Conference on Database Theory, {ICDT}
  2017, March 21-24, 2017, Venice, Italy}, pages 13:1--13:18, 2017.

\bibitem{freydenberger2016document}
Dominik~D Freydenberger and Mario Holldack.
\newblock Document spanners: From expressive power to decision problems.
\newblock In {\em LIPIcs-Leibniz International Proceedings in Informatics},
  volume~48. Schloss Dagstuhl-Leibniz-Zentrum fuer Informatik, 2016.

\bibitem{FKP17}
Dominik~D. Freydenberger, Benny Kimelfeld, and Liat Peterfreund.
\newblock Joining extractions of regular expressions.
\newblock {\em CoRR}, abs/1703.10350, 2017.

\bibitem{GJ79}
M.~R. Garey and David~S. Johnson.
\newblock {\em Computers and Intractability: {A} Guide to the Theory of
  NP-Completeness}.
\newblock W. H. Freeman, 1979.

\bibitem{HopcroftU79}
John~E. Hopcroft and Jeffrey~D. Ullman.
\newblock {\em Introduction to Automata Theory, Languages and Computation}.
\newblock Addison-Wesley, 1979.

\bibitem{Immerman88}
Neil Immerman.
\newblock Nondeterministic space is closed under complementation.
\newblock {\em SIAM Journal on Computing}, 17(5):935--938, 1988.

\bibitem{johnson1988generating}
David~S Johnson, Mihalis Yannakakis, and Christos~H Papadimitriou.
\newblock On generating all maximal independent sets.
\newblock {\em Information Processing Letters}, 27(3):119--123, 1988.

\bibitem{Kimelfeld14}
Benny Kimelfeld.
\newblock Database principles in information extraction.
\newblock In {\em Proceedings of the 33rd {ACM} {SIGMOD-SIGACT-SIGART}
  Symposium on Principles of Database Systems, PODS'14, Snowbird, UT, USA, June
  22-27, 2014}, pages 156--163, 2014.

\bibitem{KLRRVZ08}
Rajasekar Krishnamurthy, Yunyao Li, Sriram Raghavan, Frederick Reiss,
  Shivakumar Vaithyanathan, and Huaiyu Zhu.
\newblock Systemt: a system for declarative information extraction.
\newblock {\em {SIGMOD} Record}, 37(4):7--13, 2008.

\bibitem{lenzerini2002data}
Maurizio Lenzerini.
\newblock Data integration: A theoretical perspective.
\newblock In {\em Proceedings of the twenty-first ACM SIGMOD-SIGACT-SIGART
  symposium on Principles of database systems}, pages 233--246. ACM, 2002.

\bibitem{martens2009complexity}
Wim Martens, Frank Neven, and Thomas Schwentick.
\newblock Complexity of decision problems for xml schemas and chain regular
  expressions.
\newblock {\em SIAM Journal on Computing}, 39(4):1486--1530, 2009.

\bibitem{papadimitriou1993computational}
Christos~H. Papadimitriou.
\newblock {\em Computational complexity}.
\newblock Addison-Wesley, 1993.

\bibitem{PAG09}
Jorge P{\'{e}}rez, Marcelo Arenas, and Claudio Gutierrez.
\newblock Semantics and complexity of {SPARQL}.
\newblock {\em {ACM} Trans. Database Syst.}, 34(3), 2009.

\bibitem{SDNR07}
Warren Shen, AnHai Doan, Jeffrey~F. Naughton, and Raghu Ramakrishnan.
\newblock Declarative information extraction using datalog with embedded
  extraction predicates.
\newblock In {\em VLDB}, pages 1033--1044, 2007.

\bibitem{Soderland99}
Stephen Soderland.
\newblock Learning information extraction rules for semi-structured and free
  text.
\newblock {\em Machine Learning}, 34(1-3):233--272, 1999.

\bibitem{stockmeyer1973word}
Larry~J Stockmeyer and Albert~R Meyer.
\newblock Word problems requiring exponential time (preliminary report).
\newblock In {\em Proceedings of the fifth annual ACM symposium on Theory of
  computing}, pages 1--9. ACM, 1973.

\bibitem{Tarjan72}
Robert Tarjan.
\newblock Depth-first search and linear graph algorithms.
\newblock {\em SIAM Journal on Computing}, 1(2):146--160, 1972.

\end{thebibliography}

\newpage

\onecolumn
\appendix


\label{sec:appendix}
\section{DEFINITIONS}
\medskip

\subsection*{Extended Definition for Variable Automata}

For the proofs in this appendix we will use an equivalent, though more precise,
definition for variable automata. This definition is an adaption of the original
definition given in \cite{FKRV15} that allows for mappings instead of total
functions.

\smallskip
\noindent{\bf Variable-stack automaton.} This class of automata operates in a way analogous to \RGX; that is, it behaves as a usual finite state automaton, except that it can also open and close variables. To mimic the way this happens in \RGX, variable-stack automata use a stack in order to track which variables are opened, and when to close them.

Formally, a \textit{variable-stack automaton} (\(\VAStk\)) is a tuple \((Q, q_0, q_f, \delta)\),
where: \(Q\) is a finite set of \textit{states}; \(q_0 \in Q\) is the
\textit{initial state}; \(q_f \in Q\) is the \textit{final state}; and \(\delta\)
is a \textit{transition relation} consisting of triples of the
forms \((q, w, q')\), \((q, \epsilon, q')\), \((q, \Open{x}, q')\) or \((q,
\Close{}\,,q')\), where \(q, q' \in Q\), \(w \in \Sigma\), \(x \in \VV\),
\(\vdash\) is a special \textit{open} symbol, and \(\dashv\) is a special
\textit{close} symbol. For a $\VAStk$ automaton $A$ we define the set $\Var(A)$ as the set of all
variables $x$ such that $\Open{x}$ appears in some transition of $A$.

A \textit{configuration} of a \(\VAStk\) automaton \(A\) is a tuple \((q, V, Y, i)\), where \(q
\in Q\) is the \textit{current state}; \(V \subseteq \Var(A)\) is the stack of
\textit{active variables}; \(Y \subseteq \Var(A)\) is the set of \textit{available
  variables}; and \(i \in [1, |d| + 1]\) is the \textit{current position}.
A \textit{run} \(\rho\) of \(A\) over document \(d = a_1a_2 \cdots a_n\) is a
sequence of configurations \(c_0, c_1, \ldots, c_m\) where \(c_0 = (q_0,
\emptyset, \Var(A), 1)\) and for every \(j \in [0, m - 1]\), one of the following
holds for \(c_j = (q_j, V_j, Y_j, i_j)\) and \(c_{j + 1} = (q_{j + 1}, V_{j +
  1}, Y_{j + 1}, i_{j + 1})\):
\begin{enumerate}\itemsep=0pt
\item \(V_{j + 1} = V_j\), \(Y_{j + 1} = Y_j\), and either
  \begin{enumerate}\itemsep=0pt
  \item \(i_{j + 1} = i_j + 1\) and \((q_j, a_{i_j}, q_{j + 1}) \in \delta\)
    (ordinary transition), or
  \item \(i_{j + 1} = i_j\) and \((q_j, \epsilon, q_{j + 1}) \in \delta\)
    ($\varepsilon$-transition).
  \end{enumerate}
\item \(i_{j + 1} = i_j\) and for some \(x \in \Var(A)\), either
  \begin{enumerate}\itemsep=0pt
  \item \(x \in Y_j\), \(V_{j + 1} = V_j \cdot x\), \(Y_{j + 1} = Y_j
    \setminus \{x\}\), and \((q_j, \Open{x}, q_{j + 1}) \in \delta\) (variable
    insert), or
  \item \(V_j=V_{j+1}\cdot x\), \(Y_{j + 1} = Y_j\)
    and \((q_j, \Close{}\,, q_{j + 1}) \in \delta\) (variable pop).
  \end{enumerate}
\end{enumerate}
The set of runs of \(A\) over a document \(d\) is denoted \(\runs(A, d)\).
A run \(\rho = c_0, \ldots, c_m\) is \textit{accepting} if \(c_m = (q_f, V_m,
Y_m, |d| + 1)\). The set of accepting runs of \(A\) over \(d\) is denoted
\(\aruns(A, d)\). Let \(\rho \in \aruns(A, d)\), then for each variable \(x \in
\Var(A) \setminus (Y_m \cup V_m)\) there are configurations \(c_b = (q_b, V_b, Y_b,
i_b)\) and \(c_e = (q_e, V_e, Y_e, i_e)\) such that \(V_b\) is the first one in
the run where \(x\) occurs and \(V_e\) (with \(e \neq m\)) is the last one in
the run where \(x\) occurs; the span \((i_b, i_e)\) is denoted by \(\rho(x)\).
The mapping \(\mu^{\rho}\) is such that \(\mu^{\rho}(x)\) is \(\rho(x)\)
if \(x \in \Var(A) \setminus (Y_m \cup V_m)\), and undefined otherwise. Finally, the semantics of $A$ over $D$, denoted by \(\semd{A}\)
is defined as the set \(\{\mu^{\rho} \mid \rho \in \aruns(A, d)\}\).

Note here that the only difference between our definition and \cite{FKRV15} is how we define accepting runs and the mappings $\mu^\rho$. In particular, we do not impose that all the variables in $\Var(A)$ should be used in the run, and we also allow some of them to remain on the stack. Furthermore, we leave our mappings undefined for any unused variable.

\smallskip
\noindent{\bf Variable-set automaton.} Following \cite{FKRV15} we introduce a more general class of automata which allow defining mappings that are not necessarily hierarchical as in the case of \VAStk\ automata and \RGX. We call these automata {\em variable-set automata} (\(\VA\)). The definition of variable-set automata is almost identical to the one of $\VAStk$ automata, but we now have transitions of the form \((q,\Close{x}, q')\) instead of \((q,
\Close{}\,, q')\), that allow us to explicitly state which variable is closed. Likewise, instead of a stack, they operate using a {\em set}, thus allowing us to add and remove variables in any order.
The only difference between \VA\ and \VAStk\ automata is in the condition 2.(b) of a run, where we directly stipulate which variable should be removed from the set $V_j$ (this used to be a stack in \VAStk). Acceptance is defined analogously as before. To avoid repeating the same definition we refer the reader to \cite{FKRV15} for details, taking note of the new semantics.

\section{PROOFS FROM SECTION~\secref{sec:comp}}
\medskip

\subsection*{Proof of Theorem~\ref{th-frgx}}


The definition presented in this paper and the one in \cite{FKRV15} are
syntactically identical. Therefore, it only remains to show that their semantics
are equivalent. The semantics of Regex formulas in \cite{FKRV15} are defined
using the notion of \emph{parse trees}. Given a formula $\gamma$ and a document
$d$, a \emph{$\gamma$-parse} is a tree where the internal nodes correspond to
operators and variables according to the structure of $\gamma$ while the leaves
correspond to alphabet symbols that compose the document $d$. It is
straightforward to prove that, in the functional case, both definitions are
equivalent since there is a direct correspondence between the subtree rooted at
an specific node, and the first component of the tuples in $\sempd{\gamma}$.
\hfill\qed\medskip

\subsection*{Proof of Theorem~\ref{theo:equiv-SRE}}

The definition of the semantics of span regular expressions in \cite{AMRV16} is
similar to the one presented in this paper, except for three aspects: (1) the
definition is based on \emph{total} functions (instead of mappings), (2)
variables which are not given an specific value can take any value, and (3)
expressions of the form $x\{\Sigma^*\} \cdot x\{\Sigma^*\}$ are satisfiable. By
considering only \textit{proper} expressions we address (3). By letting
$\semd{\gamma}' = M \Join \semd{\gamma}$ we address (1) because $M$ \emph{only}
contains total functions (and so does $\semd{\gamma}'$ as a consequence), and we
address (2) because $M$ contains \emph{all} total functions and, therefore,
unassigned variables from mappings in $\semd{\gamma}$ will be given all possible
values.
\hfill\qed\medskip

\subsection*{Proof of Theorem~\ref{th-stack_rgx}}


This proof is a generalisation of Theorem~4.4 presented in \cite{FKRV15} to the setting supporting mappings. Here we present a sketch of the original proof, along with the necessary modifications to adjust to our more general semantics.


First, we show that every \RGX\ has an equivalent \VAStk. This can be proved by
adapting the well-known \emph{Thompson's Construction Algorithm}
\cite{HopcroftU79}, that takes a regular expression as input and constructs an
equivalent automaton. The only difference is that we extend the algorithm to
handle expressions of the form $x\{\gamma\}$ by respectively adding an open and
close transition for variable $x$ connected to the initial and final states of
the automaton constructed for $\gamma$. It is straightforward to prove by
induction over the structure of regex formulas that the constructed automaton
will be equivalent to the input expression.

For the opposite direction, we can use the \emph{state elimination} technique
\cite{HopcroftU79}. This technique consists in allowing transitions to be
labeled with regular expressions and eliminating states by replacing them with
equivalent transitions (see Figure~\ref{fig:state-remove}). Let $A = (Q, q_0,
q_f, \delta)$ be the input vstk automaton.

First, we add to $A$ the necessary $\varepsilon$-transitions so that the
incoming transitions of each state either: are all variable transitions, or
contain no variable transitions. Then, using the aforementioned technique, we
remove all states except for the initial state, the final state, and all those
that have incoming variable transitions (we assume that the final state has no
incoming variable operations). Notice that after this, every transition will be
associated with exactly one variable operation (except for the transitions that
end in the final state). This is what \cite{FKRV15} denominates a
\emph{vstk-graph automaton}. Let this resulting vstk-graph automaton be $A'$.

Second, we will construct a new automaton by considering paths in $A'$ that go
from the initial state to the final state. This step has the main difference
with the original proof because we will consider all paths of length \emph{at
  most} $2 \cdot |\VV| + 1$, whereas the original considers only those which are
exactly of that length. This is because the original proof considered only
functional paths, i.e.\ those that open and close all variables, and that
therefore use exactly $2 \cdot |\VV|$ variable operations. For each path with
the aforementioned characteristics, we build a new automaton that consists
solely of that path, called \emph{vstk-path automaton} in \cite{FKRV15},
resulting in a set of such automata. At this step, we can easily remove in each
path the variable operations that open a variable but never close it again.
(Remember that valid runs may open a variable and never close it. The result,
however, is the same as if the variable was not opened.) The new automaton $A''$
is constructed by merging the initial states and the final states of all the
vstk-path automata, resulting in what \cite{FKRV15} calls a \emph{vstk-path
  union automaton}.

Finally, it is very easy to see how to obtain a \RGX\ that is equivalent to a
certain vstk-path automaton: if the vstk-path has a valid run, then we simply
concatenate the labels of the transitions, replacing $\Open{x}$ for $x\{$ and
$\Close{}$ for $\}$. Therefore, the final \RGX\ is that which corresponds to the
disjunction of the \RGX\ equivalent to each of the vstk-path automata in~$A''$.

It is not difficult to prove that the final \RGX\ will equivalent to $A$, since
it is clear from the semantics of \RGX\ and \VAStk\ that each of the steps will
preserve the equivalency of the expressions.
\hfill\qed\medskip

\begin{figure}
  \centering
  \begin{tikzpicture}[->, auto, font=\scriptsize, node distance=1cm]
    \node[state] (A)                  {$q$};
    \node[state] (B) [right=of A]     {$s$};
    \node[state] (C) [right=of B]     {$p$};
    \node        (D) [right=5mm of C] {\large$\Rightarrow$};
    \node[state] (E) [right=5mm of D] {$q$};
    \node[state] (F) [right=3cm of E] {$p$};

    \path
    (A) edge              node [below] {$Q$} (B)
        edge [bend left]  node         {$R$} (C)
    (B) edge [loop below] node         {$S$} (B)
        edge              node [below] {$P$} (C)
    (E) edge              node         {$R \vee (Q \cdot S^* \cdot P)$} (F)
    ;
  \end{tikzpicture}
  \caption{Example of the elimination of state $s$ in the \emph{state
      elimination} technique (as shown in \cite{FKRV15}).}
  \label{fig:state-remove}
\end{figure}
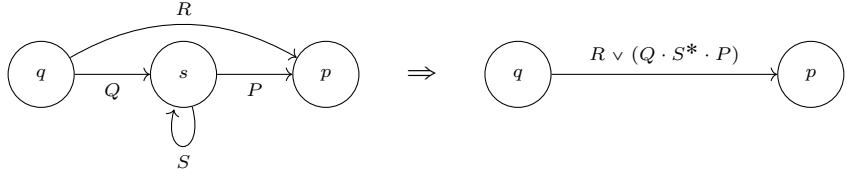

\subsection*{Proof of Theorem~\ref{th-rgx_equiv_hierarchical_va}}

This proof is a generalisation of Theorem~4.6 presented in \cite{FKRV15} to the setting supporting mappings. Here we present a sketch of the original proof, along with the necessary modifications to adjust to our more general semantics.


From the proof of Theorem~\ref{th-stack_rgx}, it is very clear that each \RGX\
has an equivalent \VASet\ automaton, since the construction procedure described
can be trivially adapted to \VASet\ automata.

To show that every hierarchical \VASet\ automaton has an equivalent \RGX, we
will use the same vstk-path union construction from the previous proof (which
will result in a \emph{vset-path union automaton} in this case). Let $A$ be the
input vset automaton and let $A'$ be the resulting vset-path union automaton. It
can be proven, without much difficulty, that if $A$ is hierarchical, then $A'$
will be hierarchical. It is proved in \cite{FKRV15} that if $A'$ is
hierarchical, then its variable operations can be reordered so that they are
``correctly nested''. After this reordering, a \RGX\ can be obtained from $A'$
in the same manner than the previous proof.
\hfill\qed\medskip

\subsection*{Proof of Theorem~\ref{th-va_algebra_equiv}}


This proof is a generalisation of Theorem~4.14 presented in \cite{FKRV15} to the setting supporting mappings. Here we present a sketch of the original proof, along with the necessary modifications to adjust to our more general semantics.


We expand the theorem to separate the proof into two containments and an equivalence:
\begin{equation*}
  \VAStk^{\{\cup,\pi,\Join\}} \subseteq \VA^{\{\cup,\pi,\Join\}} \equiv \VA \subseteq \VAStk^{\{\cup,\pi,\Join\}}
\end{equation*}
The first containment follows from Theorem~\ref{th-rgx_equiv_hierarchical_va}.

The equivalence can be proved as follows. Unions can be simulated in \(\VA\) by
simply using \(\epsilon\)-transitions at the start of different automata.
Projections can be simulated by using the path-union automata construction and
replacing the variable transitions of the projected variables with
\(\epsilon\)-transitions. Joins are simulated in a similar way to NFA
intersections: by constructing an automaton that runs both automata ``in
parallel'', taking care of opening and closing shared variable at the same time.
For this to work, however, the automata need to first be transformed into
\emph{lexicographic} \(\VA\). These work the same way as \(\VA\), but guarantee
that for every document \(d\) and mapping \(\mu\) accepted by an automaton
\(A\), there is an accepting run in \(A\) that performs the variable operations
needed to produce \(\mu\) in a specific order.

We finish by explaining why the second containment holds. As usual, the
path-union automata construction can be used to simplify the proof, meaning that
we only need to consider path automata. A path \(\VA\) can be simulated by a
\(\VAStk\) by introducing auxiliary variables that split spans which may have
been non-hierarchical, using joins to ensure that these auxiliary variables
correspond to the original variables, and projecting away the auxiliary
variables. For the full construction, refer to the proof of Lemma~4.13 in
\cite{FKRV15}.
\hfill\qed\medskip

\medskip

\subsection*{Proof of Theorem~\ref{prop-rgx-rules}}

First we will show that there is an extraction rule that has no equivalent
\(\RGX\). As shown in \cite{FKRV15}, functional \(\RGX\) are hierarchical. It is
clear that this also extends to non-functional \(\RGX\). With this mind, it is
easy to realize that the extraction rule \(x \wedge x.\Sigma^{*} \cdot y \cdot
\Sigma^{*} \wedge x.\Sigma^{*} \cdot z \cdot \Sigma^{*}\) is not hierarchical,
since \(y\) and \(z\) might be assigned spans that overlap in a non-hierarchical
way. This rule, therefore, cannot be expressed by a \(\RGX\).

Now we prove that there is a variable regex that has no equivalent extraction
rule. Consider the following variable regex: \(\gamma = (a \cdot x\{b\}) \vee (b
\cdot x\{a\})\). There are only two ways in which a document and mapping can
satisfy it: (1) \(d_1 = ab\) and \(\mu_1(x) = (2, 3)\); or (2) \(d_2 = ba\) and
\(\mu_2(x) = (2, 3)\). Suppose that there is an extraction rule \(\varphi\) that
is satisfied only by these two document-mapping pairs. By the structure of
extraction rules, we know that there is an extraction expression \(x.\varphi_x\)
such that \(\varphi_x\) is equivalent to the expression \(a \vee b\); if not, we
can construct a document \(d_3\) that satisfies \(\varphi\) and is different
from \(d_1\) and \(d_2\). By the same reason, we know that \(\varphi_0\), the
root extraction expression of \(\varphi\), must be equivalent to \(ax \vee bx\).
Notice, however, that the document \(d_3 = aa\) and the mapping \(\mu_3\) such
that \(\mu_3(x) = (2, 3)\) satisfy \(\varphi\). We have reached a contradiction,
and therefore conclude that such \(\varphi\) does not exist. \hfill\qed\medskip

\subsection*{Proof of Theorem~\ref{thm:simple_to_dag}}


Consider an arbitrary simple rule that is {\em functional}. We start by
analyzing the sort of values that a mapping can assign to the variables which
form a cycle. For this, take any rule $\varphi$ and assume that there is a
simple cycle $x_1,\ldots ,x_n$ appearing in $G_\varphi$ and a mapping $\mu$
satisfying $\varphi$. Then the following must hold:
\begin{enumerate}
\item {\em All variables in the cycle must be assigned the same value.} This
  follows from the fact that in a simple rule each edge $(x,y)$ in $G_\varphi$
  implies that $\mu(x)$ contains $\mu(y)$ (see Figure~\ref{fig:cycle-simple}).
\item {\em Every variable reachable from a cycle, but not inside it, must be
    assigned the empty content.} This follows from the observation above, plus
  the fact that edges $(x,y)$ and $(x,z)$ in $G_\varphi$ imply that $x$ and $y$
  appear in the same \SRGX. By the structure of \SRGX, if $x \neq y$ then
  $\mu(y)$ and $\mu(z)$ must be disjoint (see Figure~\ref{fig:cycle-reachable}).
\item {\em If the cycle has a chord, then all the variables inside it must be
    assigned the empty content.} Here a chord means that we have a path from
  some $x_i$ to some $x_j$ inside $G_\varphi$ which consists of nodes not
  belonging to our cycle, or there is a direct edge between them which is not
  part of the cycle. In the case there is an intermediate node, we know that it
  must be assigned $\varepsilon$, therefore $x_j$ and all other nodes in the
  cycle must be $\varepsilon$ as well. If the edge is direct, then by the
  definition of a chord, $x_j$ is not a successor of $x_i$ in the cycle, so just
  as in the previous case, the content of the successor of $x_i$ and the content
  of $x_j$ must be disjoint and equal, which is only possible if they are
  $\varepsilon$ (see Figure~\ref{fig:cycle-chord}).
\end{enumerate}

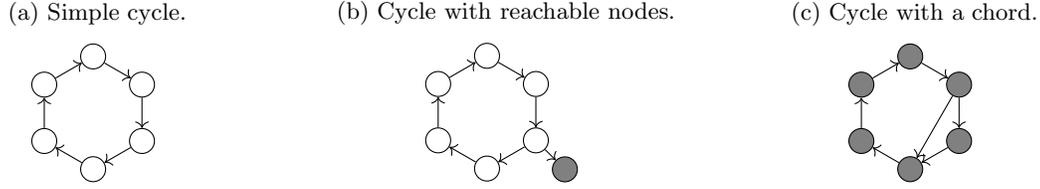
\begin{figure}
\caption{Different cycle arrangements.}
\centering
\begin{subfigure}[b]{0.3\textwidth}
  \caption{Simple cycle.}
  \label{fig:cycle-simple}
  \centering
  \begin{tikzpicture}[
    node distance=.2cm,
    nodes={draw, circle},
  ]
  \graph [clockwise, radius=.75cm, ->, empty nodes] {
    subgraph C_n [n=6];
  };
\end{tikzpicture}
\end{subfigure}
\begin{subfigure}[b]{0.3\textwidth}
  \caption{Cycle with reachable nodes.}
  \label{fig:cycle-reachable}
  \centering
  \begin{tikzpicture}[
    node distance=.2cm,
    nodes={draw, circle},
    mark/.style={fill=black!50},
    mark/.default=,
  ]
  \graph [clockwise, radius=.75cm, ->, empty nodes] {
    subgraph C_n [n=6];
  };
  \node[mark, below right=of 3] (a) {};
  \path[->] (3) edge (a);
\end{tikzpicture}
\end{subfigure}
\begin{subfigure}[b]{0.3\textwidth}
  \caption{Cycle with a chord.}
  \label{fig:cycle-chord}
  \centering
  \begin{tikzpicture}[
    node distance=.2cm,
    nodes={draw, circle},
  ]
  \graph [clockwise, radius=.75cm, ->, empty nodes, nodes={fill=black!50}] {
    subgraph C_n [n=6];
    2 -> 4;
  };
\end{tikzpicture}
\end{subfigure}
\end{figure}

The procedure for eliminating cycles from simple rules is based on the following
colouring scheme for a graph $G_{\varphi}$ associated with the rule $\varphi$.
Let $\varphi$ be an extraction rule with variables $x_1,\ldots x_n$. We will
colour a node $x_i$ {\em black} if:
\begin{itemize}
\item $x_i.\varphi_i$ appears in $\varphi$ and $\varphi_i$ is such that, when
  treating it as a regular expression, every word that can be derived from it
  must contain a symbol from \(\Sigma\).
\end{itemize}
We then paint the graph by assigning the colour {\em red} to all black nodes,
and all nodes which can reach a black node. All other nodes are coloured {\em
  green}. It is clear that this procedure can be carried out in polynomial time,
since reachability takes only polynomial time. Note that in a black node coming
from a conjunct of the form $x_i.\varphi_i$, the content of each variable
appearing in $\varphi_i$ must be {\em strictly} contained in the content of the
variable $x_i$. This is because $\varphi_i$ is functional and, since we painted
its node black, it must have symbols from $\Sigma$ which are not part of the
content of the variables used in $\varphi$. Also note that each cycle has to be
coloured using the same colour.

If we now have a simple cycle $x_1,\ldots ,x_n$ we can eliminate it by
considering its colour:
\begin{itemize}
\item {\em If the cycle is coloured red, then the rule is not satisfiable, so we
    can replace it by an arbitrary unsatisfiable dag-like rule.} We have two
  cases here. First, if a cycle contains a black node, then the content of its
  successor must be strictly contained inside its own content, which cannot
  happen by the analysis above. Second, if a node $x$ in the cycle can reach
  some black node not inside the cycle, then its content must be different from
  $\varepsilon$, which contradicts point (2) of the above analysis.
\item {\em If the cycle is green we can simplify it using an auxiliary
    variable.} Let $u_1, \ldots, u_m$ be the variables that are not part of the
  cycle and for which there is an edge $(u_i, x_i)$. Let $y_1, \ldots, y_l$ be
  the variables that are not part of the cycle and are reachable from some $x_i$
  (they must have empty content, as proved before). We then add an auxiliary
  variable $w$ and an edge from it to $x_1$. Each expression associated with
  some $u_i$ is changed so that it uses $w$ instead of $x_i$, and all expression
  associated with some $y_i$ are changed to $y_i.\varepsilon$. Next, for $i<n$,
  an expression $x_i.\varphi_i$ is changed to $x_i.\varphi_i'$, where
  $\varphi_i'$ maintains the possible orderings of variables according to
  $\varphi_i$. This is done by removing all other letters or starred
  subexpressions, and is explained in detail later in this proof. For $x_n$, we
  replace the occurrences of $x_1$ by $\Sigma^*$. This yields an equivalent
  \emph{simple} rule without the mentioned cycle.
\end{itemize}

As an example of how the rewriting above works, consider the rule $x.y \wedge
y.z \wedge z.ux.$ This rule can be rewritten to $w.x \wedge x.y \wedge y.z
\wedge z.u \cdot \Sigma^* \wedge u.\varepsilon$ by introducing the auxiliary
variable $w$, forcing the variable $u$ to have empty content, and breaking the
cycle at $z$.

Of course, here we explained only how a single cycle can be removed, but how do
we transform a rule with multiple cycles in its graph? For this we start by
identifying the strongly connected components of our graph $G_\varphi$. Each
component can then be either: (a) a single node, (b) a simple cycle, or (c) a
simple cycle with additional edges. In the latter two cases, if any component is
coloured red, we know that the rule is unsatisfiable, so we can replace it by an
arbitrary unsatisfiable dag-like rule. In the case they are coloured green, we
can deploy the procedure above to remove the cycles, taking care that in the
case (b) our variables can take an arbitrary, but always equal value, while in
the case (c) they must be equal to the empty content. In both cases, all the
variables reachable from the component are made $\varepsilon$.

Now we will precisely describe the procedure for eliminating cycles in rules.
Let \(\varphi = \varphi_{x_0} \wedge x_1.\varphi_{x_1} \wedge \cdots \wedge
x_m.\varphi_{x_m}\) be a simple rule such that each \(\varphi_{x_i}\) (\(0 \leq
i \leq m\)) is functional, and let \(G_{\varphi}\) be its graph. We will assume
that for every variable \(x \in \Var(\varphi)\) there is an extraction
expression \(x.\varphi_x\) in \(\varphi\); if not, we can simply add the
extraction expression \(x.\Sigma^{*}\). We will now describe in detail the
procedure that produces an equivalent dag-like rule \(\alpha\).

First, we will colour the nodes in \(G_{\varphi}\). For this, we define a
function \(\nu: \SRGX \to \SRGX\) that will indicate when a variable cannot have
empty content. Here, \(\emptyset\) has the usual definition in the regular
expression context, with the following properties: \(\emptyset \cdot \alpha =
\emptyset\), \(\emptyset \vee \alpha = \alpha\), \(\emptyset^* = \varepsilon\),
where \(\alpha\) is any expression.
\begin{itemize}
\item \(\nu(a) = \emptyset\), where \(a \in \Sigma\).
\item \(\nu(x) = x\), where \(x \in \VV\).
\item \(\nu(\varphi_1 \cdot \varphi_2) = \nu(\varphi_1) \cdot \nu(\varphi_2)\).
\item \(\nu(\varphi_1 \vee \varphi_2) = \nu(\varphi_1) \vee \nu (\varphi_2)\).
\item \(\nu(\varphi^{*}) = \epsilon\).
\end{itemize}
Thus, we paint a node \(x_i\) \textit{black} if \(\nu(\varphi_i) = \emptyset\).
After this, we paint a node \textit{red} if it is black or if it can reach a
black node. We do this by painting black nodes red and then ``flooding'' the
graph by doing depth-first search from black nodes using the edges in reverse.

Second, we run \textit{Tarjan's Strongly Connected Components Algorithm}
(\cite{Tarjan72}). This algorithm will compute the strongly connected components
(SCCs) in the graph and output them in topological order with respect to the dag
formed by the SCCs. We denote the ordered SCCs as \(S_1, \ldots, S_l\), where
each \(S_i\) (\(1 \leq i \leq l\)) is a set of nodes.

Finally, we process the SCCs in order. Each SCC \(S_i\) will be of one of the
following types: (1) \(S_i\) is a single node; (2) \(S_i\) is a simple cycle; or
(3) \(S_i\) contains a cycle and has additional edges (this includes everything
that does not fall under types (1) and (2)). Notice that the type of \(S_i\) can
easily be computed in polynomial time. Now, according to the type, do the
following:
\begin{itemize}
\item Type (1): let \(S_i = \{y\}\). We copy the extraction expression
  \(y.\varphi_y\) to \(\alpha\).
\item Type (2): let \(S_i = \{y_1, \ldots, y_k\}\), such that \((y_k, y_1)\) and
  \((y_j, y_{j + 1})\) are edges in \(G_{\varphi}\), for \(j \in [1, k - 1]\).
  If \(S_i\) has a red node, then the rule is unsatisfiable and we may stop and
  replace \(\alpha\) with any unsatisfiable dag-like rule. Otherwise, we add a
  new auxiliary variable \(u_i\) and replace every appearance of variables of
  \(S_i\) in \(\alpha\) with \(u_i\). Add the following extraction expressions
  to \(\alpha\):
  \begin{itemize}
  \item \(u_i.y_1\);
  \item \(y_j.\nu(\varphi_{y_j})\), for \(j \in [1, k - 1]\);
  \item and \(y_k.\psi\), where \(\psi\) is \(\nu(\varphi_{y_k})\) with all
    appearances of \(y_1\) replaced with \((\Sigma^{*})\).
  \end{itemize}
  After this, mark every SCC reachable from \(S_i\) as a type (3) SCC.
\item Type (3): let \(S_i = \{y_1, \ldots, y_k\}\). If \(S_i\) has a red node,
  then the rule is unsatisfiable and we may stop and replace \(\alpha\) with any
  unsatisfiable dag-like rule. Add an auxiliary variable \(u_i\) and add the
  following rules to \(\alpha\):
  \begin{itemize}
  \item \(u_i.y_1 \cdots y_k\);
  \item \(y_j.\psi\), for \(j \in [1, k]\) where \(\psi\) is
    \(\nu(\varphi_{y_j})\) with all appearances of variables \(y_1, \ldots,
    y_k\) replaced with \(\epsilon\).
  \end{itemize}
  After this, mark every SCC reachable from \(S_i\) as a type (3) SCC.
\end{itemize}

The resulting rule \(\alpha\) will be dag-like and equivalent to \(\varphi\). If
we take into account the observations presented at the beginning of this proof,
then it is straightforward to verify that the transformations outlined above
will remove the cycles in \(G_{\varphi}\) while preserving equivalence.
\hfill\qed\medskip

\subsection*{Proof of Proposition~\ref{prop:nonfunc}}


Let \(\varphi = \varphi_0 \wedge x_1.\varphi_1 \wedge \cdots \wedge
x_m.\varphi_m\) be a rule such that each \(\varphi_i\) is a \(\SRGX\), where \(i
\in [0, m]\). We can transform each \(\varphi_i\) into an equivalent disjunction
\(\psi_{i, 1} \vee \cdots \vee \psi_{i, l_i}\) where each \(\psi_{i, j}\) is a
functional \(\SRGX\). This is done by using the \(\PUStk\) construction from
Theorem~\ref{th-stack_rgx}, originally presented in \cite{FKRV15}. Specifically,
we transform \(\varphi_i\) into a \(\VAStk\) \(A\) and then into a \(\PUStk\)
\(A'\). It is clear that each path in \(A'\) can be directly transformed into a
functional \(\SRGX\) (since paths do not have disjunctions of variables).
Therefore, each \(\psi_{i, j}\) will be a functional \(\SRGX\). Notice, however,
that this transformation might produce exponentially many \(\psi_{i, j}\) with
respect to the size of \(\varphi_{i}\).

As an example of this step, the \(\SRGX\) \((x \vee y)\cdot(z \vee w)\) is
equivalent to the disjunction \((\epsilon \vee x \cdot z \vee x \cdot w \vee y
\cdot z \vee y \cdot w)\). Note that each of the disjuncts is independently
functional.

Rule \(\varphi\) will be equivalent to the set of rules that consist of all
possible conjunctions that can be made by taking one disjunct \(\psi_{i, j}\)
from every extraction expression (\(i \in [0, m]\)). Formally, \(\varphi\) will
be equivalent to \(\{\psi_{0, k_0} \wedge x_1.\psi_{1, k_1} \cdots \wedge
x_m.\psi_{m, k_m} \mid (k_0, \ldots, k_m) \in [1, l_0] \times \cdots \times [1,
l_m]\}\). Note that this will produce another exponential blow-up in size. The
resulting set will therefore be double-exponential in size with respect to
\(\varphi\).

For example, consider the rule \(\varphi = (x \vee y) \wedge x.(a \vee b) \wedge
y.(c)\). Then, \(\varphi\) is equivalent to the following set of rules:
\begin{equation*}
\{x \wedge x.a \wedge y.c,\;
x \wedge x.b \wedge y.c,\;
y \wedge x.a \wedge y.c,\;
y \wedge x.b \wedge y.c \}
\end{equation*}

Now we prove that the transformation is correct. The correctness of the
transformation from \(\SRGX\) to \(\PUStk\) carries from the original proof
without much modification. Given the definition of the semantics for rules, it
is fairly easy to observe that taking every possible combination of the
disjuncts in each extraction expression will produce an equivalent set of rules.

Finally, by applying Theorem~\ref{thm:simple_to_dag}, we can transform this
union of functional rules into a union of functional dag-like rules.
\hfill\qed\medskip

\subsection*{Proof of Proposition~\ref{thm:dag_to_tree}}


In order to prove this proposition, we will first state and prove two auxiliary
lemmas which will be necessary for this proof.

\begin{lemma}\label{lemma:tl-rgx}
  Every tree-like expression can be transformed into an equivalent \RGX.
\end{lemma}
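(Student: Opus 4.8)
The plan is to translate a tree-like rule $\varphi = \varphi_0 \wedge x_1.\varphi_1 \wedge \cdots \wedge x_m.\varphi_m$ into an equivalent $\RGX$ by a bottom-up recursive substitution that nests the bodies of the extraction formulas according to the tree $G_\varphi$. Since a tree-like rule is simple and $G_\varphi$ is a tree rooted at $\texttt{doc}$, every variable $y$ occurs in the body of exactly one extraction formula — that of its unique parent in $G_\varphi$, where the parent of a top-level variable is $\texttt{doc}$ — so the nesting is well defined. Concretely, for each node $x$ I would define a $\RGX$ $\hat\varphi_x$ by taking $\varphi_x$, which is an $\SRGX$ and hence has the shape of a classical regular expression whose variable atoms are occurrences $y\{\Sigma^*\}$, and replacing each child occurrence $y\{\Sigma^*\}$ by $y\{\hat\varphi_y\}$, processing nodes from the leaves upward. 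A leaf of $G_\varphi$ has a variable-free body, so there $\hat\varphi_x = \varphi_x$ is a plain regular expression, giving the base of the recursion. The output $\RGX$ is $\Gamma = \hat\varphi_{\texttt{doc}}$, obtained from $\varphi_0$ by the same replacement (with no outer variable wrapper, since $\texttt{doc}$ stands for the whole document).

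The correctness argument I would carry out is an induction on the height of the subtree rooted at a node $x$, proving that $\sempd{x\{\hat\varphi_x\}}$ consists of exactly those pairs $(s,\mu)$ such that $\mu(x) = s$ and $\mu$ satisfies all the constraints attached to the variables in the subtree of $x$. The base case for leaves is immediate from the clause for $x\{R\}$ in Table~\ref{tab-semantics}, as $\hat\varphi_x$ is then variable-free. For the inductive step I would unfold $\sempd{x\{\hat\varphi_x\}}$ through the clauses for $x\{R\}$, concatenation and disjunction, and match each child occurrence $y\{\hat\varphi_y\}$ against the induction hypothesis. The key identity to establish is that the span which $\hat\varphi_x$ assigns to a child $y$ is precisely the span that the rule's compatibility condition $\mu_i \sim \mu_j$ forces on $y$; hence the binding $y \to s_y$ realized \emph{inside} $x\{\hat\varphi_x\}$ coincides with the value $y$ receives in the satisfying tuple $\overline\mu$. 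Evaluating $\semd{\Gamma}$ over the full span $(1,|d|+1)$ and applying the claim at $\texttt{doc}$ then yields $\semd{\Gamma} = \semd{\varphi}$.

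The point requiring care — and what I expect to be the main obstacle — is reconciling the rule's instantiated-variable semantics with $\RGX$ disjunction. In the rule, a non-instantiated variable is forced to be undefined by condition (2) of $\overline\mu \models_d \varphi$, and $\ivar(\varphi,\overline\mu)$ is exactly the downward closure under $G_\varphi$ of $\dom(\mu_0)$ together with the variables actually assigned. The corresponding phenomenon in $\Gamma$ is that a disjunct of $\varphi_x$ not taken during a parse simply never emits the atoms $y\{\hat\varphi_y\}$ it would have introduced, so those $y$ stay outside the domain of the produced mapping and, crucially, their constraints $\hat\varphi_y$ are never triggered. I would therefore prove that the set of variables bound along any accepting parse of $\Gamma$ is precisely a set of the form $\ivar(\varphi,\overline\mu)$, and conversely, matching the two nondeterministic choice mechanisms. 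The remaining bookkeeping — that a variable is bound at most once, enforced in $\RGX$ by the domain-disjointness side condition on concatenation and in rules by the analogous $\dom$ and compatibility conditions, so that a variable under a Kleene star contributes only on an iteration that does not rebind it — then transfers verbatim between the two semantics, which completes the proof. This construction makes no use of functionality, so it applies to arbitrary tree-like rules and subsumes the functional case discussed in the main text.
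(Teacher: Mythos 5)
Your proposal is correct and is essentially the paper's own proof: the paper performs exactly the same recursive nesting, defining $\gamma_{x_i}$ from $\varphi_{x_i}$ by replacing each mention of a variable $y$ with $y\{\gamma_y\}$ and taking the root expression as the result, with correctness "by induction." Your additional discussion of how the instantiated-variable semantics lines up with $\RGX$ disjunction fills in the induction the paper leaves as straightforward, and your observation that functionality is not needed matches the paper's use of the lemma for arbitrary tree-like rules.
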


\begin{proof}
We will transform tree-like extraction rules into \(\RGX\) by recursively
nesting extraction expressions into their associated variables. The procedure is
as follows. Let \(\theta = \varphi_{x_0} \wedge x_1.\varphi_{x_1} \wedge \cdots
\wedge x_m.\varphi_{x_m}\) be a tree-like extraction rule, and let
\(G_{\theta}\) be its graph. Without loss of generality, we assume that every
variable \(x \in \Var(\theta)\) appears on the left side of an extraction
expression (if not, we can add \(x.\Sigma^*\)). For all \(i \in [0, m]\) we
define a \(\RGX\) \(\gamma_{x_i}\) as \(\varphi_{x_i}\) where each mention of
variable \(y\) is replaced with \(y\{\gamma_y\}\). The expression
\(\gamma_{x_0}\) will be a well-formed \(\RGX\) and equivalent to \(\theta\). It
is straightforward to prove this last statement by induction.

As an example, consider the tree-like rule \(\varphi = (a \cdot x \cdot b \cdot
y) \wedge x.(abc \cdot z) \wedge y.(\Sigma^*) \wedge z.(d)\). The resulting
\(\RGX\) in this case would be \(\gamma = a \cdot x\{abc \cdot z\{d\}\} \cdot b
\cdot y\{\Sigma^*\}\).

It is clear that this procedure terminates since \(G_{\theta}\) is a forest.
Note, however, that the resulting \(\RGX\) might be of exponential size with
respect to the input extraction rule, since multiple appearances of the same
variable can cause the expression to grow rapidly when the replacements are
made.
\end{proof}

\begin{lemma}\label{lemma:utlstar-rgx}
  Unions of tree-like rules and \RGX\ formulas are equivalent.
\end{lemma}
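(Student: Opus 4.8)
The plan is to prove the two directions of the equivalence separately, relying on the nesting construction of Lemma~\ref{lemma:tl-rgx} for one direction and on the decomposition of an arbitrary \RGX\ into functional pieces (the corollary stated right after Theorem~\ref{th-stack_rgx}) for the other.

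The easy direction shows that every union of tree-like rules is captured by a single \RGX. Given a union $\{\theta_1,\ldots,\theta_n\}$ of tree-like rules, I would first invoke Lemma~\ref{lemma:tl-rgx} to replace each $\theta_i$ by an equivalent \RGX\ $\gamma_i$, and then take the disjunction $\gamma = \gamma_1 \vee \cdots \vee \gamma_n$. Since the semantics of $\vee$ in Table~\ref{tab-semantics} is literally set union at the level of $\sempd{\gamma}$, and therefore also at the level of $\semd{\gamma}$, we obtain $\semd{\gamma} = \bigcup_i \semd{\gamma_i} = \bigcup_i \semd{\theta_i}$ for every document $d$, which is exactly the semantics of the union. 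This direction requires no work beyond quoting Lemma~\ref{lemma:tl-rgx}.

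For the converse, I would first reduce to the functional case: by the corollary following Theorem~\ref{th-stack_rgx}, every \RGX\ $\gamma$ is equivalent to a finite (possibly exponential) union of functional \RGX\ formulas $\gamma_1,\ldots,\gamma_n$, so it suffices to turn a single functional \RGX\ into a tree-like rule. The natural construction is to \emph{invert} the nesting procedure of Lemma~\ref{lemma:tl-rgx}: since $\gamma_i$ is functional, each variable $x \in \Var(\gamma_i)$ occurs exactly once, inside a unique subexpression $x\{\delta_x\}$. I would set $\theta_i = \varphi_0 \wedge \bigwedge_{x} x.\varphi_x$, where $\varphi_0$ is $\gamma_i$ with each outermost capture $y\{\delta_y\}$ (one not nested inside any other capture) replaced by the bare $y$, that is $y\{\Sigma^*\}$, and $\varphi_x$ is $\delta_x$ with each of its outermost nested captures $z\{\delta_z\}$ likewise replaced by $z$. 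The containment relation induced by nesting makes $G_{\theta_i}$ a tree rooted at $\texttt{doc}$, so $\theta_i$ is tree-like; a routine induction on nesting depth, using the semantic clauses for $x\{R\}$ and $x.R$, then shows that re-applying the Lemma~\ref{lemma:tl-rgx} nesting to $\theta_i$ reproduces $\gamma_i$, hence $\semd{\theta_i} = \semd{\gamma_i}$. Taking the union of the $\theta_i$ yields a union of tree-like rules equivalent to $\gamma$.

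The main obstacle is the converse direction, and within it the verification that the unnesting genuinely produces a tree-like rule with the correct semantics. The key point, and the reason the reduction to functional \RGX\ is needed, is functionality: it guarantees that every variable has a single occurrence and hence a single defining formula $x.\varphi_x$, which is exactly what forces $G_{\theta_i}$ to be a tree rather than a dag and makes the inductive correspondence with the nesting of Lemma~\ref{lemma:tl-rgx} a clean bijection between the captures $x\{\delta_x\}$ and the extraction formulas. The only mild subtlety to check is that replacing a nested capture $y\{\delta_y\}$ by $y\{\Sigma^*\}$ inside $\varphi_x$ loses no information: the span of $y$ is left unconstrained at that level but is pinned down by the child formula $y.\varphi_y$, so the recursive reconstruction recovers exactly $\delta_y$.
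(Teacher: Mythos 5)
Your first direction (union of tree-like rules $\Rightarrow$ \RGX) is exactly the paper's argument: apply Lemma~\ref{lemma:tl-rgx} to each rule and take the disjunction. The problem is in the converse, and specifically in the sentence ``since $\gamma_i$ is functional, each variable $x \in \Var(\gamma_i)$ occurs exactly once, inside a unique subexpression $x\{\delta_x\}$.'' That is false. Functionality only guarantees that each variable is assigned exactly once in every \emph{parse} of the expression; syntactically a variable may occur in several branches of a disjunction, with different bodies. For instance $\gamma = (x\{a\}\cdot y\{b\}) \vee (x\{b\}\cdot y\{a\})$ is functional, but there is no unique $\delta_x$: unnesting it your way would produce something like $(x\cdot y) \wedge x.(a\vee b) \wedge y.(b \vee a)$, which is satisfied on the document $aa$ with $\mu(x)=(1,2)$, $\mu(y)=(2,3)$, a mapping not in $\semd{\gamma}$. (This is essentially the same phenomenon the paper exploits in Theorem~\ref{prop-rgx-rules} to show that a single rule cannot express disjunction over variables.) So reducing to functional \RGX\ is not enough, and your unnesting step is ill-defined or unsound on exactly the formulas where the reduction has not removed variable-level disjunction.

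The paper closes this gap by decomposing further: it uses the vstk-path union construction to write every \RGX\ as a union of \emph{path} \RGX, a fragment generated by the grammar $E ::= x\{E\} \mid E\cdot E \mid R$ with $R$ a variable-free regular expression, so that disjunction simply cannot occur above a capture. On path \RGX\ each variable really does have a unique occurrence and a unique body, and your unnesting (replace each outermost $x\{\delta_x\}$ by $x$ and add $x.\delta_x'$) is exactly what the paper does there. Your plan is repairable: you must push the decomposition past ``functional'' to ``no disjunction at the variable level,'' i.e.\ distribute the disjunctions outward into the (exponential) union before unnesting; the path-union automaton construction behind Theorem~\ref{th-stack_rgx} is the tool that does this. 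As written, though, the key step of the hard direction does not go through.
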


\begin{proof}
We begin by presenting vstk-graph, vstk-path, and vstk-path union, originally
defined in \cite{FKRV15} (the vset variants are defined analogously). A
\emph{vstk-graph} is a tuple \(G = (Q, q_0, q_f, \delta)\) defined as a
vstk-automaton, except that each transition in \(\delta\) is of one of the
following forms: \((q, \gamma, \Open{x}, q')\), \((q, \gamma, \Close{}, q')\),
and \((q, \gamma, q_f)\), where \(q, q' \in Q \setminus \{q_f\}\), \(x \in
\VV\), and \(\gamma\) is a regular expression over \(\Sigma\). Configurations
are defined in the same way as in the case of vstk-automata. A run \(\rho\) of
\(G\) on a document \(d\) is a sequence of configurations \(c_0, \ldots, c_m\)
where for all \(j \in [1, m - 1]\) the configurations \(c_j = (q_j, V_j,
Y_j, i_j)\) and \(c_{j +  1} = (q_{j +  1}, V_{j + 1}, Y_{j + 1}, i_{j + 1})\)
are such that \(i_j \leq i_{j + 1}\) and, depending on the transition used, one
of the following holds:
\begin{enumerate}
  \item \((q_j, \gamma, \Open{x}, q_{j + 1}) \in \delta\), the substring
    \(d(i_j, i_{j + 1})\) is in \(\LL(\gamma)\), \(x \in Y_j\), \(V_{j + 1} =
    V_j \cdot x\), and \(Y_{j + 1} = Y_j \setminus \{x\}\);
  \item \((q_j, \gamma, \Close{}, q_{j + 1}) \in \delta\), the substring
    \(d(i_j, i_{j + 1})\) is in \(\LL(\gamma)\), \(Y_j = Y_{j + 1}\), and \(V_j
    = V_{j + 1} \cdot x\); or
  \item \((q_j, \gamma, q_{j + 1}) \in \delta\) (this means \(q_{j + 1} =
    q_f\)), \(d(i_j, i_{j + 1})\) is in \(\LL(\gamma)\), \(Y_j = Y_{j + 1}\),
    and \(V_j = V_{j + 1}\).
\end{enumerate}
Accepting runs, \(\Var(G)\), and the semantics of vstk-graph, are defined the same
way as in the case of vstk-automata.

A \emph{vstk-path} \(P\) is a vstk-graph that consists of a single path. That
is, \(P\) has exactly \(m\) states \(q_1, \ldots, q_m = q_f\) and exactly \(m\)
transitions such that there is a transition from \(q_1\) to \(q_2\), from
\(q_2\) to \(q_3\), and so on. A \emph{vstk-path union} is a vstk-graph that
consists of a set of vstk-path such that: (1) each vstk-path is sequential, and
(2) every pair of vstk-paths have the same initial state, the same final state,
and share no other states.

We define \emph{path} \(\RGX\), a subset of \(\RGX\) that is simpler to analyze.
Formally, a path \(\RGX\) is an expression that can be derived using the
following grammar with \(E\) as the \emph{start symbol}.
\begin{align*}
 E &\Coloneqq x\{E\},\ x \in \VV \mid (E \cdot E) \mid R\\
 R &\Coloneqq w,\ w \in (\Sigma \cup \{\epsilon\}) \mid (R \cdot R) \mid
     (R \vee R) \mid {(R)}^{*}
\end{align*}
It is easy to see that path \(\RGX\) are equivalent to \emph{vstk-path}
automata. This is because path \(\RGX\), as vstk-path automata, do not have
disjunctions at a variable level.

With this in mind, we will show that every \(\RGX\) can be transformed into an
equivalent set of tree-like rules. It was proven in \cite[Lemma~4.3 and
Theorem~4.4]{FKRV15} that functional variable regexes are equivalent to
\textit{path union stack variable automata}, that is, stack variable automata
that consist solely of a union of disjoint paths. This result will also hold for
general \(\RGX\), with little modification to the proof. It is apparent that
each path in one of these automata will be equivalent to a path \(\RGX\), which
implies that every \(\RGX\) can be transformed into an equivalent union of path
\(\RGX\) (notice, however, that this union might be exponential in size with
respect to the starting expression).

Given this, it only suffices to show that each path \(\RGX\) is equivalent to a
tree-like rule. Let \(\gamma\) be a path \(\RGX\). Given a variable regex
\(\alpha\), we denote as \(\alpha'\) the \(\SRGX\) that results when replacing
every top-level subexpression of the form \(x\{\beta\}\) with \(x\). It is easy
to notice from the structure of \(\gamma\) that each variable can appear at most
once in the expression. Therefore, we can easily ``decompose'' \(\gamma\) into
an extraction rule by using the following procedure: add the extraction
expression \(\gamma'\) to the result and, for every subexpression of the form
\(x\{\gamma_x\}\) in \(\gamma\), add the extraction expression \(x.\gamma'_x\)
to the result. It is apparent that the resulting rule is tree-like, and it is
straightforward to prove that it is equivalent to \(\gamma\).

The proof that every set of tree-like rules can be transformed into an
equivalent \(\RGX\) follows from Lemma~\ref{lemma:tl-rgx} and the fact that
\(\RGX\)s are closed under union (by usage of the disjunction operator).
\end{proof}

With these results in mind, we now proceed to prove the proposition.

Let \(\varphi = \varphi_{x_0} \wedge x_1.\varphi_{x_1} \wedge \cdots \wedge
x_m.\varphi_{x_n}\) be a satisfiable dag-like rule such that each \(\varphi_i\)
is a functional \(\SRGX\) (\(i \in [0, n]\)), and let \(G_{\varphi}\) be its
graph. Without loss of generality, we assume that for every variable \(x \in
\Var(\varphi)\) there is an expression \(x.\varphi_x\).

Consider any pair of nodes \(x\) and \(y\) such that there are at least two
distinct paths \(u_1, \ldots, u_{l_1}\) and \(v_1, \ldots, v_{l_2}\) where \(u_1
= v_1 = x\) and \(u_{l_1} = v_{l_2} = y\) (see Figure~\ref{fig:dag-cycle}). Let
\(\mu\) be a satisfying mapping. Since all expressions are functional, we know
the following: \(\mu(x)\) contains \(\mu(u_2)\) and \(\mu(v_2)\); \(\mu(u_2)\)
and \(\mu(v_2)\) contain \(\mu(y)\); \(\mu(u_2)\) and \(\mu(v_2)\) are disjoint.
From these facts we can deduce that \(\mu(y)\) must be \(\epsilon\). Therefore,
if the rule is satisfiable, \(y\) must be painted green. Furthermore, every
variable reachable from \(y\) must be assigned \(\epsilon\) as content, which
means that \(\varphi\) may be rewritten as in the proof of
Theorem~\ref{thm:simple_to_dag} to simplify \(G_{\theta}\) for all the nodes
reachable from \(y\). This means we need only concentrate on those undirected
cycles that are ``near to the root'', since the rest can be removed in this way.

Given \(\varphi\), we first paint all nodes following the procedure from
Theorem~\ref{thm:simple_to_dag}. After this, we transform every \(\SRGX\)
\(\varphi_{x_i}\) into a disjunction of \(\SRGX\) \(\varphi_{x_i, 1}, \ldots,
\varphi_{x_i, m_i}\) by the same procedure from the proof of
Theorem~\ref{lemma:utlstar-rgx}.

After this, we generate a new set of rules, where each of this rules consist of
a possible combination of extraction expressions made by taking exactly one
disjunct \(\varphi_{x_i, j_i}\) for each variable \(x_i\). More formally, we
generate the set of rules \(R = \{\varphi_{0, k_0} \wedge x_1.\varphi_{1, k_1}
\wedge \cdots \wedge x_n.\varphi_{n, k_n} \mid (k_0, \ldots, k_n) \in [1, m_0]
\times \cdots \times [1, m_n]\}\).

Given a rule \(\alpha = \alpha_{x_0} \wedge x_1.\alpha_{x_1} \wedge \cdots
\wedge x_n.\alpha_{x_n}\) in \(R\), we can now easily transform it into a
tree-like rule. Consider, as before, any pair of nodes \(x\) and \(y\) such that
there are at least two distinct paths \(u_1, \ldots, u_{l_1}\) and \(v_1,
\ldots, v_{l_2}\) where \(u_1 = v_1 = x\) and \(u_{l_1} = v_{l_2} = y\) (the
proof can be generalized to more paths easily). Consider, without loss of
generality, that \(u_2\) appears to the left of \(v_2\) in \(\varphi_x\). Then,
for \(\alpha\) to be satisfiable, everything between \(u_2\) and \(v_2\) in
\(\varphi\) must be forced to be \(\epsilon\). Likewise, everything to the right
of \(u_3\) in \(\varphi_{u_2}\) and everything to the left of \(v_3\) in
\(\varphi_{v_2}\) must be forced to be \(\epsilon\), and so on. This can be done
in polynomial-time because it is equivalent to checking if a regular expression
accepts the word \(\epsilon\) and checking if certain variables are painted
green. As we do this, we rewrite the \(\SRGX\), removing everything but the
variables from the parts that can be \(\epsilon\). If at any point we find
an expression that cannot be empty, we remove \(\alpha\) from \(R\). Finally, we
remove every occurrence of variable \(y\) in \(\varphi_{v_{l_2 - 1}}\), thus
removing the edge from \(v_{l_2 -1}\) to \(y\) in \(G_{\alpha}\) and dissolving
the undirected cycle.

\begin{figure}
  \centering
  \caption{Undirected cycle in dag-like rule.}
  \label{fig:dag-cycle}
  \begin{tikzpicture}[
    node distance=6cm,
    nodes={draw, circle},
    font=\tiny,
    inner sep=1pt,
    minimum size=.8cm,
    mark/.style={fill=black!10},
    mark/.default=,
  ]
  \graph [math nodes] {
    x -> {
      [fresh nodes]
      u_2 -> u_3 -> "\cdots"[draw=none] -> u_{l_1 - 1},
      v_2 -> v_3 -> "\cdots"[draw=none] -> v_{l_2 - 1}
    } -> y [mark]
  };
\end{tikzpicture}
\end{figure}
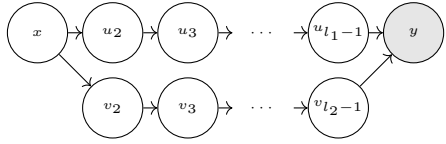

For example, consider the following dag-like rule:
\begin{equation*}
  (x \cdot \Sigma^* \cdot y) \wedge x.(a \cdot z \cdot b^*) \wedge y.(b^* \cdot z \cdot a) \wedge z.(\Sigma^*)
\end{equation*}
This rule is satisfiable only by the document \(d = aa\) and the mapping \(\mu\)
such that \(\mu(x) = (1, 2)\), \(\mu(y) = (2, 3)\), and \(\mu(z) = (2, 2)\). By
applying the procedure we described, we obtain the following rule:
\begin{equation*}
  (x \cdot y) \wedge x.(a \cdot z) \wedge y.(d) \wedge z.(\epsilon)
\end{equation*}
It is simple to observe that this rule is equivalent and tree-like.

Given the definitions of the semantics of extraction rules and \(\SRGX\), it can
be proved without difficulty that the final set of tree-like rules will be
equivalent to the initial dag-like rule. Furthermore, it is simple to see that
the final expression will be of double-exponential size with respect to the
initial dag-like rule: it will experience one exponential blow-up when the
\(\SRGX\) are transformed into disjunctions of path \(\SRGX\), and another
exponential blow-up when we generate a rule for each possible combination of
disjuncts.
\hfill\qed\medskip

\subsection*{Proof of Theorem~\ref{thm:rgx:simple}}


By Proposition~\ref{prop:nonfunc} we know that simple rules are equivalent to
unions of functional dag-like rules, by Proposition~\ref{thm:dag_to_tree} we
know that satisfiable dag-like rules are equivalent to unions of functional
tree-like rules. Finally, by Lemma~\ref{lemma:utlstar-rgx} we know that unions
of functional tree-like rules are equivalent to \(\RGX\).
\hfill\qed\medskip


\section{PROOFS FROM SECTION~\secref{sec:evaluation}}
\medskip

\subsection*{Proof of Theorem~\ref{prop:evaluation}}


\begin{algorithm}[t]
  \caption{Enumerate all spans in $\semd{\gamma}$. Here, $\gamma$ is the
    expression being evaluated, $d$ is the document, $\mu$ is the current
    mapping, and $V$ is the set of available variables.}
  \label{alg:enum}
  \begin{algorithmic}[1]
    \Procedure{Enumerate}{$\gamma, d, \mu, V$}
      \If{\(\EVAL[\LL](\gamma, d, \mu)\) is false} \label{alg:enum:eval-check}
        \Return
      \EndIf
      \If{\(V = \emptyset\)} \label{alg:enum:finish-check}
        \textbf{output} \(\mu\) and \Return
      \EndIf
      \State{\textbf{let} \(x\) be some element from \(V\)} \label{alg:enum:select-var}
      \For{\(s \in \sub(d) \cup \{\bot\}\)} \label{alg:enum:loop-span}
        \State{\Call{Enumerate}{$\gamma, d, \mu[x \to s], V\setminus\{x\}$}}
      \EndFor
    \EndProcedure
  \end{algorithmic}
\end{algorithm}

The algorithm for enumerating all mappings for an expression \(\gamma\) on a
document \(d\) is described in Algorithm~\ref{alg:enum}. For enumerating all
mappings, one would have to call \textsc{Enumerate}\((\gamma, d, \emptyset,
\Vars(\gamma))\). We denote as ``\textbf{output}'' the operation of outputting a
mapping and then continuing computation from that point. When \(\Vars(\gamma)\)
is empty, then we simply return the empty mapping \(\emptyset\) if
\(\EVAL[\LL](\gamma, d, \emptyset)\), and nothing otherwise.

It is easy to observe that \(\mu \in \semd{\gamma}\) if and only if
\(\EVAL[\LL](\gamma, d, \mu_{\bot})\), where \(\mu_{\bot}(x)\) is \(\mu(x)\) if
\(x \in \dom(\mu)\), and \(\bot\) otherwise. It is also easy to observe that for
every mapping \(\mu\), it holds that \(\emptyset \subseteq \mu\). From these two
observations, and given a particular \(\mu \in \semd{\gamma}\), it is
straightforward to prove by induction that the algorithm will eventually output
\(\mu\).

Finally, we prove that this is a \textit{polynomial delay}
algorithm. Notice that the algorithm will only recurse if there exists a mapping
\(\mu'\) such that \(\mu' \in \semd{\gamma}\) and \(\mu[x \to s] \subseteq
\mu'\). Since \(|\sub(d) \cup \{\bot\}| \leq |d|^{2} + 1\) and the algorithm can
only recurse up to a depth of \(|\VV|\), the function \(\EVAL[\LL]\) will be
called at most \(|\VV|(|d|^{2} + 1)\) times before an output is reached (or the
algorithm terminates). Given that \(\EVAL[\LL]\) can be decided in polynomial
time, the time to produce the next output will be polynomial.
\hfill\qed\medskip

\subsection*{Proof of Theorem~\ref{theo:eval-hardness-general}}


\newcommand{\mcap}{\operatorname{cap}}
\newcommand{\mcomp}{\operatorname{comp}}
\newcommand{\mtemp}{\operatorname{eq}}
\newcommand{\mfalse}{\operatorname{conflict}}

To prove that $\nemp[\SRGX]$ is $\NP$-hard, we provide a reduction from
1-IN-3-SAT. The input of 1-IN-3-SAT is a propositional formula $\alpha = C_1
\wedge \cdots \wedge C_n$, where each $C_i$ ($1 \leq i \leq n$) is a disjunction
of exactly three propositional variables (negative literals are not allowed).
Then the problem is to verify whether there exists a satisfying assignment for
$\alpha$ that makes exactly one variable per clause true. 1-IN-3-SAT is known to
be $\NP$-complete (see \cite{GJ79}).

For the reduction, we construct a $\SRGX$ $\gamma_\alpha$ such that
$\semg{\gamma_\alpha}$ is not empty if and only if there exists a satisfying
assignment for $\alpha$ that makes exactly one variable per clause true, with $d
= \epsilon$. In this reduction, we assume that for every clause $C_i$ in
$\alpha$ ($1 \leq i \leq n$), it holds that $C_i = (p_{i,1} \vee p_{i,2} \vee
p_{i,3})$, where each $p_{i,j}$ is a propositional variable. Notice that
distinct clauses can have propositional variables in common, which means that
$p_{i,j}$ can be equal to $p_{k, \ell}$ for $i \neq k$.

To define $\gamma_\alpha$ we consider two sets of variables: $\{ x_{i,j} \mid 1
\leq i \leq n$ and $1 \leq j \leq 3\}$ and $\{ y_{i,j,k,\ell} \mid 1 \leq i < k
\leq n$ and $1 \leq j,\ell \leq 3\}$. With these variables we encode the truth
values assigned to the propositional variables in $\alpha$; in particular, a
span is assigned to the variable $x_{i,j}$ if and only if the propositional
variable $p_{i,j}$ is assigned value true. Moreover, $\gamma_\alpha$ is used to
indicate that exactly one of $p_{i,1}$, $p_{i,2}$ and $p_{i,3}$ is assigned
value true, which is essentially represented by a $\SRGX$ of the form $(x_{i,1}
\vee x_{i,2} \vee x_{i,3})$, indicating that exactly one of $x_{i,1}$, $x_{i,2}$
and $x_{i,3}$ has to be assigned a span. Besides, $\gamma_\alpha$ is used to
indicate that if $p_{i,j}$ is assigned value true, then we are forced to assign
value false not only to $p_{i,k}$ with $k \neq j$ but also to some propositional
variables in other clauses. This idea is formalized by means of the notion of
conflict between propositional variables. More precisely, we say that $p_{i,j}$
is in conflict with $p_{k,\ell}$ if $i < k$ and one of the following conditions
holds:
\begin{itemize}
\item there exists $m \in \{1,2,3\}$ such that $p_{i,j} = p_{k,m}$ and $m \neq \ell$;
\item there exists $m \in \{1,2,3\}$ such that $p_{i,m} = p_{k,\ell}$ and $m \neq j$.
\end{itemize}
Thus, if $p_{i,j}$ is assigned value true and $p_{i,j}$ is in conflict with
$p_{k,\ell}$, then we know that $p_{k,\ell}$ has to be assigned value false. In
$\gamma_\alpha$, the variable $y_{i,j,k,\ell}$ is used to indicate the presence
of such a conflict; in particular, a span is assigned to $y_{i,j,k,\ell}$ if and
only if the propositional variable $p_{i,j}$ is in conflict with the
propositional variable $p_{k,\ell}$. We collect all the conflicts of $p_{i,j}$
in the set $\mfalse(p_{i,j})$:
\begin{align*}
\{ y_{i,j,k,\ell} \mid p_{i,j} \text{ is in conflict with } p_{k,\ell}\} \ \cup
\ \{ y_{k,\ell,i,j} \mid p_{k,\ell} \text{ is in conflict with } p_{i,j}\}
\end{align*}
The variable $y_{i,j,k,\ell}$ is used as follows in $\gamma_\alpha$. If some
spans have been assigned to $x_{i,j}$ and $y_{i,j,k,\ell}$, then no span is
assigned to $x_{k,\ell}$, as the propositional variable $p_{i,j}$ has been
assigned value true and $p_{i,j}$ is in conflict with the propositional variable
$p_{k,\ell}$. To encode this restriction, define the $\SRGX$ $\gamma_{i,j}$ as
the concatenation of the variables in $\mfalse(p_{i,j})$ in no particular order.
For example, if
\begin{eqnarray*}
\mfalse(p_{3,1}) &=& \{y_{1,2,3,1}, y_{1,3,3,1}, y_{3,1,4,1}, y_{3,1,5,2}\},
\end{eqnarray*}
then
\begin{eqnarray*}
\gamma_{3,1} & = & y_{1,2,3,1} \cdot y_{1,3,3,1} \cdot y_{3,1,4,1} \cdot y_{3,1,5,2}
\end{eqnarray*}
Finally, for every clause $C_i$ ($1\leq i \leq n$) define $\SRGX$ $\gamma_i$
as:
\begin{align*}
&(x_{i,1} \cdot \gamma_{i,1} \ \vee \ x_{i,2} \cdot \gamma_{i,2} \ \vee \ x_{i,3} \cdot \gamma_{i,3})
\end{align*}
With this notation, we define \(\SRGX\) $\gamma_\alpha$ as follows:
\begin{eqnarray*}
\gamma_\alpha & =  & \gamma_1 \cdots \gamma_n
\end{eqnarray*}
At this point it is important to understand how the variables $y_{i,j,k,\ell}$
are used in the $\SRGX$ $\gamma_\alpha$. Assume that $p_{1,1} = p_{2,1}$, so
that $p_{1,1}$ is in conflict with $p_{2,2}$. Then if we assigned value true to
$p_{1,1}$, we have that $p_{2,1}$ is also assigned value true, so $p_{2,2}$ has
to be assigned value false. This restriction is encoded by using the variable
$y_{1,1,2,2}$. More precisely, $\gamma_\alpha = \gamma_1 \cdot \gamma_2 \cdots
\gamma_n$, where $\gamma_1$ is of the form:
\begin{align*}
  (x_{1,1} \cdots y_{1,1,2,2} \cdots \ \vee\ x_{1,2} \cdot \gamma_{1,2} \ \vee \  x_{1,3} \cdot \gamma_{1,3}),
\end{align*}
given that $y_{1,1,2,2} \in \mfalse(p_{1,1})$, and $\gamma_2$ is of the form:
\begin{align*}
  (x_{2,1} \cdot \gamma_{2,1} \ \vee\ x_{2,2} \cdots y_{1,1,2,2} \cdots \ \vee \  x_{2,3} \cdot \gamma_{2,3}),
\end{align*}
given that $y_{1,1,2,2} \in \mfalse(p_{2,2})$. Thus, if $x_{1,1}$ is assigned a
span, representing the assignment of value true to the propositional variable
$p_{1,1}$, then also $y_{1,1,2,2}$ is assigned a span (both spans will have
empty content by the definition of $\gamma_\alpha$ and $d$). If we now try to
assign a span to $x_{2,2}$, then we are forced to assign a span to
$y_{1,1,2,2}$ again. This, however, violates the definition of the semantics of
$\RGX$, because the mappings for concatenated expressions must have disjoint
domains (in other words, they cannot both assign the same variable).

Based on the previous intuition, it is straightforward to prove that
$\semg{\gamma_\alpha}$ is not empty if and only if there exists a satisfying
assignment for $\alpha$ that makes exactly one variable per clause true, which
was to be shown. As before, we take $d$ to be $\epsilon$.
\hfill\qed\medskip

\subsection*{Proof of Proposition~\ref{theo:ptime-functional-RGX}}


Since \(\fRGX\) is a subset of sequential \(\RGX\), this is implied by
Theorem~\ref{theo:ptime-sequential-sVA}.
\hfill\qed\medskip

\subsection*{Proof of Proposition~\ref{theo:eval-hardness-relational-VA}}


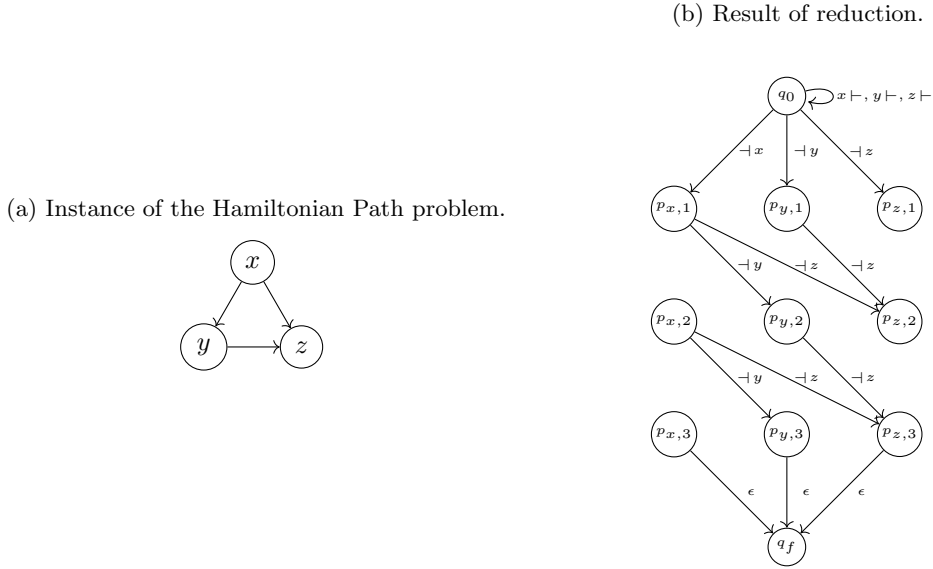
\begin{figure}
  \centering
  \caption{Reduction of Proposition~\ref{theo:eval-hardness-relational-VA}}
  \label{fig:hamiltonian-path}
  \begin{subfigure}[c]{0.4\textwidth}
    \caption{Instance of the Hamiltonian Path problem.}
    \centering
    \begin{tikzpicture}[
    node distance=.2cm,
    nodes={draw, circle},
  ]
  \graph [math nodes, clockwise, n=3, radius=.75cm, ->] {
    x, z, y
  };
  \path[->]
  (x) edge (y)
      edge (z)
  (y) edge (z)
  ;
\end{tikzpicture}
  \end{subfigure}
  \begin{subfigure}[c]{0.4\textwidth}
    \caption{Result of reduction.}
    \centering
    \makeatletter
\tikzset{%
  show node name/.code={%
    \expandafter\def\expandafter\tikz@mode\expandafter{\tikz@mode\show\tikz@fig@name}%
    }
}
\makeatother
\begin{tikzpicture}[
    node distance=6cm,
    nodes={draw, circle},
    font=\tiny,
    inner sep=1pt,
    minimum size=.5cm,
    mark/.style={fill=black!10},
    mark/.default=,
  ]
  \foreach \var/\i in {x/1,y/2,z/3} {
    \foreach \j in {1,2,3} {
      \node (\var\j) at (1.5*\i, -1.5*\j) {$p_{\var,\j}$};
    }
  }
  \node (q0) at (3, 0) {$q_0$};
  \node (qf) at (3, -6) {$q_f$};
  \path[->, nodes={draw=none, right}]
  (q0) edge [loop right] node {$\Open{x},\Open{y},\Open{z}$} ()
  \foreach \var in {x,y,z} {
    (q0) edge node {$\Close{\var}$} (\var1)
    (\var 3) edge node {$\epsilon$} (qf)
  }
  \foreach \from/\to in {x/y,x/z,y/z} {
    (\from 1) edge node {$\Close{\to}$} (\to 2)
    (\from 2) edge node {$\Close{\to}$} (\to 3)
  }
  ;
\end{tikzpicture}
  \end{subfigure}
\end{figure}

We will prove that \(\nemp\) of relational \(\VASet\) automata is
\(\NP\)-complete and we will also prove that the \(\mdcheck\) problem is
\(\NP\)-complete. The \(\mdcheck\) problem receives as input an expression
\(\gamma\), a document \(d\), and a mapping \(\mu\), and asks whether \(\mu \in
\semd{\gamma}\).

The membership of both problems to \(\NP\) is very easy to prove. In both cases
we only need to guess a run for the variable automaton (that conforms to the
input document and mapping) and verify that it is accepting. The size of the
runs that we need to consider is bounded by a polynomial because the document
and the available variables are part of the input. Furthermore, we only need to
consider sequences of consecutive \(\epsilon\)-transitions that are shorter than
the number of states in the variable automaton. This is because longer sequences
will inevitably have a cycle, which can be removed without altering the
acceptance of the run.

To prove \(\NP\)-hardness of the \(\mdcheck\) problem we will describe a
reduction from the \textit{Hamiltonian path problem}. This problem consists in
deciding whether or not a directed graph has a path that visits every vertex
exactly once, and it is known to be \(\NP\)-hard (\cite{GJ79}). Let \(G = (V,
E)\) be a graph and let \(A = (Q, q_0, q_f, \delta)\) be the variable automaton
that results from reducing \(G\). We will construct \(A\) in such a way that
\(G\) has a Hamiltonian path if and only if \(\mu_{\epsilon} \in \semd{A}\),
where \(d = \epsilon\) and \(\mu_{\epsilon}\) is such that \(\mu_{\epsilon}(x) =
(1, 1)\) for all \(x \in \Var(A)\).

The automaton \(A\) is built as follows: (1) for every vertex \(v \in V\), add
states \(p_{v,1}, p_{v,2}, \ldots, p_{v,|V|}\) to \(Q\); (2) for every edge
\((u, v)\) and every \(i \in [1, |V| - 1]\) add the transitions \((p_{u,i},
\Close{x_v}, p_{v,i+1}), (q_0, \Close{x_v}, p_{v,1})\) to \(\delta\); (3) add
two fresh states for \(q_0, q_f\) and, for every \(v \in V\) add transitions
\((p_{v,|V|}, \epsilon, q_f), (q_0, \Open{x_v}, q_0)\) to \(\delta\).
Figure~\ref{fig:hamiltonian-path} shows an example of this reduction. Notice
that every accepting run of \(A\) assigns every variable to the span \((1, 1)\),
since to go from \(q_0\) to \(q_f\) it must go through \(|V|\) closing
transitions (which must be different if the run is valid). Thus, \(A\) is
relational. Because the states and transitions in \(A\) correspond to the
vertices and edges in \(G\) there will be a one-to-one correspondence between
runs in \(A\) and Hamiltonian paths in \(G\). That is, if there is an accepting
run that goes through states \(p_{v_1,1}, \ldots, p_{v_{|V|},|V|}\), then there
is a Hamiltonian path through the vertices of \(G\). Proving this last statement
is straightforward given the way \(A\) was built.

To see why the \(\nemp\) is also \(\NP\)-hard, notice that in the aforementioned
construction, when graph \(G\) does not have a Hamiltonian path there will be no
accepting runs. Therefore, it holds that \(\semd{A}\) is not empty if and only
if \(G\) has a Hamiltonian path.
\hfill\qed\medskip

\subsection*{Proof of Proposition~\ref{theo:ptime-sequential-check}}


We describe an algorithm for checking if a variable automaton is sequential that
is in \(\coNLOGSPACE\), which is known to be equal to \(\NLOGSPACE\)
(\cite{Immerman88}). 

The algorithm will non-deterministically traverse the automaton searching for a
non-sequential path. To do so, it remembers the current variable's status, which
can be either \textit{available}, \textit{open} or \textit{closed}. If it finds
a transition which is incompatible with the current status (e.g.\ opening an
already open variable), it accepts, indicating that the variable automaton is
not sequential. More formally, let \(A = (Q, q_0, q_f, \delta)\) be a variable
automaton, let \(q_{curr}\) denote the current state and let \(s_{curr}\) denote
the current variable's status. We also use the value $o$ to count the number of opened variables.
For every variable \(x \in \VV\) the algorithm
proceeds as follows:
\begin{itemize}
\item Set \(q_{curr}\) to \(q_0\) and \(s_{curr}\) to \textit{available}; set $o$ to 0;
\item while \(q_{curr} \neq q_f\):
  \begin{itemize}
  \item non-deterministically pick a transition \((q_{curr}, a, q_{next}) \in
    \delta\);
  \item if \(a\) is incompatible with the status \(s_{curr}\), then accept;
    otherwise, update \(q_{curr}\) to \(q_{next}\) and \(s_{curr}\) according to
    \(a\); also increase or decrease the value of $o$ as stipulated by $a$; 
  \end{itemize}
\item if $o\neq 0$ accept;
\item otherwise reject.
\end{itemize}

It is simple to realize that the algorithm is correct, since if there is a
non-sequential path, then there is a sequence of non-deterministic decisions
that will lead the algorithm to accept. On the other hand, it is also apparent
that the algorithm uses only logarithmic space, because it only has to store the
current variable, current state, next state, variable status and the number of variables in $o$; in other words,
a constant amount of information that is at most logarithmic in size with
respect to the input. \hfill\qed\medskip

\subsection*{Proof of Proposition~\ref{prop:rgx-equiv-rgx-seq}}


To prove this we use the \textit{path union stack variable automata}
construction detailed in \cite[Subsection~4.1.2]{FKRV15}, that can be easily
adapted to the definition of \(\RGX\) proposed in this work.

Specifically, let \(\gamma\) be a \(\RGX\). By Theorem~\ref{th-stack_rgx}, we
know that \(\gamma\) has an equivalent \(\VAStk\) automaton \(A\). By the result
in \cite[Lemma~4.3]{FKRV15}, we know that \(A\) has an equivalent path union
\(\VAStk\) (denoted \(\PUStk\)) \(A'\). Given the construction of \(A'\), it is
easy to observe that every path in it will be sequential, which implies that
\(A'\) as a whole is sequential. We may build a \(\RGX\) \(\gamma'\) equivalent
to \(A'\) by transforming each path in \(A'\) into a sequential \(\RGX\), and
then joining the resulting expressions with disjunctions. It is clear that
disjunctions of sequential \(\RGX\) are sequential. Therefore, \(\gamma'\) is
sequential and equivalent to \(\gamma\). \hfill\qed\medskip

\subsection*{Proof of Theorem~\ref{theo:ptime-sequential-sVA}}


We will reduce the \(\EVAL\) problem in sequential \(\RGX\) to the same problem
on sequential variable automata, and show that the latter can be decided in
\(\PTIME\). Let \(\gamma\) be a sequential \(\RGX\). We can adapt the Thompson
construction algorithm \cite{HopcroftU79} to transform \(\gamma\) into a
variable automaton in polynomial time. We now prove by induction that \(A\) will
be sequential, given the fact that \(\gamma\) is sequential. We need to consider
the following cases:
\begin{itemize}
\item \(\gamma = a\), where \(a \in \Sigma\): this is the base case and the
  automaton is trivially sequential.
\item \(\gamma = \psi_1 \cdot \psi_2\): it is very easy to observe that the
  concatenation of two sequential paths that use disjoint sets of variables, is
  sequential.
\item \(\gamma = \psi_1 \vee \psi_2\): every path will be in either the
  automaton for \(\psi_1\) or the automaton for \(\psi_2\), which are sequential
  by the inductive hypothesis.
\item \(\gamma = (\psi)^{*}\): the set of variables in \(\psi\) is empty, its
  automaton is, thus, trivially sequential.
\item \(\gamma = x\{\psi\}\), where \(x \in \VV\): since \(\psi\) does not use
  \(x\), it is trivial to prove that every path  will be sequential.
\end{itemize}
Therefore, automata constructed from sequential expressions will be sequential.

Next, we prove that the \(\EVAL\) problem for sequential variable automata is in
\(\PTIME\). The main idea behind this proof, and many of the following, will be
to embed in document \(d\) the variable operations corresponding to mapping
\(\mu\). This will allow us to then to treat variable operation transitions as
normal transitions. This is an advantage because then we can use classical
algorithms for finite automata to decide problems.

Let \(\Op(A) = \{\Open{x}, \Close{x} \mid x \in \Vars(A)\}\). Let \(\rho\) be a
run for document \(d\) and mapping \(\mu\) on a variable automaton \(A\). We
refer to the \textit{label} of \(\rho\), denoted \(L(\rho)\), as the string
\(\lambda \in (\Sigma \cup \Op(A))\) that is the concatenation of the labels of
the transitions in \(\rho\), in the order they are used.

Given a label \(\lambda\), we may easily generate the document-mapping pair
\((d, \mu)\) from the run of \(\lambda\) in logarithmic-space. We simply scan
\(\lambda\) from left to right, outputting symbols of \(\Sigma\) to \(d\), then
we do a second scan, counting symbols to determine the spans of \(\mu\). It is
simple to see that if we change the order of consecutive variable operation in
\(\lambda\), then the generated \((d, \mu)\) will be the same.

As an example, consider the document \(d = abc\) and the mapping \(\mu\) such
that \(\mu(x) = (1,3)\) and \(\mu(y) = (3,3)\). Some labels that correspond to
these are \(\lambda_1 = \Open{x},a,b,\Open{y},\Close{x},\Close{y},c\) or
\(\lambda_2 = \Open{x},a,b,\Close{x},\Open{y},\Close{y},c\).

Similarly, for every pair \((d, \mu)\), and a fixed set of variables, there is a
finite set of possible labels of runs that correspond to \(d\) and \(\mu\). By
the previous paragraph, it is an easy observation that the labels in this set
will differ only on the ordering of consecutive variable operations, and
variables that are opened but never closed.

Since the ordering of variable operations will be problem in most proofs, we
will frequently use the technique of \textit{coalescing} consecutive variable
operations. What this means, is that we will consider a set of consecutive
variable operations as a single symbol. We will usually accompany this by
introducing new transitions to the automata that recognize these coalesced
symbols.

Let \(A = (Q, q_0, q_f, \delta)\) be the sequential automaton, \(d\) the
document and \(\mu\) the mapping. First, let \(\lambda\) be some label for \((d,
\mu)\). Let \(\tau = T_1, \ldots, T_\ell\) be a partition of \(\dom(\mu)\) such
that two variable operations \(o_1\) and \(o_2\) belong to the same \(T_i\) if
and only if \(o_1 \cdot w \cdot o_2\) is a substring of \(\lambda\) and \(w\) is
\(\epsilon\) or consists solely of variable operations. We treat the sets in
\(\tau\) as new symbols of the alphabet. We will coalesce all sequences of
consecutive variable operations in \(\lambda\) replacing them with their
respective \(T_i\), and call the result \(d'\).

Let \(A' = (Q, q_0, q_f, \delta')\) be as follows. For each transition \((p, a,
q) \in \delta\): (1) if \(a \in \Sigma \cup \{\epsilon\}\), then \((p, a, q) \in
\delta'\); (2) if \(a\) is a variable operation for \(x\) and \(x \not\in
\dom(\mu)\), then \((p, \epsilon, q) \in \delta'\); otherwise, ignore the
transition. Finally, for every set \(T_i\) (\(i \in [1, \ell]\)), transition
\((p, T_i, q) \in \delta'\) if there exists a path from \(p\) to \(q\) in \(A\)
satisfying the following conditions: (1) every transition in the path is either
an \(\epsilon\)-transition or corresponds to a variable operation in \(T_i\);
and (2) for every variable operation in \(T_i\), there is exactly one transition
in the path that corresponds to it. Notice that \(A'\) has no variable
operations, and therefore, behaves exactly like a non-deterministic finite
automaton. Therefore, the problem has been reduced to that of deciding whether
the non-deterministic finite automaton \(A'\) accepts the word \(d'\), which is
known to be in \(\PTIME\) (\cite{HopcroftU79}).

Except for the last step, it is clear that this reduction runs in polynomial
time. Therefore, in order to complete this part of the proof, we only need to
provide an algorithm that given states \(p, q\) and \(i \in [1, n + 1]\) decides
whether \((p, T_i, q) \in \delta'\). We will describe an algorithm that finds a
path in \(A\) that in \(\NLOGSPACE\), which is contained in \(\PTIME\)
(\cite{papadimitriou1993computational}). Taking into account that \(A\) is
sequential, we know that the paths will not repeat operations nor execute them
in a wrong order, therefore, we only need to count the number of variable
operations.

The algorithm starts from state \(p\) and sets a counter \(c\) to 0. Then, at
each step it guesses the next transition, and checks that it is either an
\(\epsilon\)-transition or corresponds to a variable transition in \(T_i\). If
it is the latter, then it increments \(c\) by one. If the algorithm reaches
\(q\), it accepts only if \(c = |T_i|\). From the description of the algorithm
it is straightforward to prove that it is correct and uses logarithmic-space.

Now we prove the correctness of the algorithm. Namely, we will prove that there
exists an extension \(\mu'\) of \(\mu\) if and only if \(A'\) accepts \(d'\). We
will consider the three cases that can happen to a variable \(x\) with respect
to \(\mu\): (1) \(x \not\in \dom(\mu)\), (2) \(\mu(x) = \bot\), and (3) \(\mu(x)
= (i, j)\) for \(i, j \in [1, n + 1]\). In case (1), we have that \(x\) may or
may not be in \(\dom(\mu')\). This agrees with the fact that variable operations
for \(x\) are replaced with \(\epsilon\) in \(A'\). Furthermore, because \(A\)
is sequential, we know that there are no valid runs in \(A'\) that would be
invalid in \(A\). In case (2), \(\mu'\) cannot assign \(x\), which agrees with
\(A'\) because variable operations for \(x\) were removed. Finally, in case (3),
we know that \(mu'\) will be compatible with \(\mu\) on \(x\) because each of
the \(T_i\) symbols we introduced can be matched by \(A'\) if and only if there
exists a path in \(A\) that performs the variable operations in \(T_i\) in some
order. Given these observations it is very apparent that there is a one-to-one
correspondence between accepting runs in \(A\) and \(A'\), which finishes the
proof of correctness.
\hfill\qed\medskip

\subsection*{Proof of Theorem~\ref{theo:eval-daglike-npcomplete}}


First, we show that the problem is in \(\NP\). Consider a rule \(\varphi\) that
uses functional \(\SRGX\), and a document \(d\). To decide the problem we can
guess a mapping \(\mu\), which is of polynomial size, and we check that \(\mu
\in \semd{\varphi}\). This can be done in polynomial time for the following
reason. From Theorem~\ref{theo:ptime-sequential-sVA}, we know that \(\EVAL\)
of sequential (and thus functional) \(\RGX\) is in \(\PTIME\). Therefore, we can
easily check that \(\mu\) respects the semantics of rules (with regards to
instantiated variables, for example) and for each relevant extraction expression
\(x.\varphi_x\), we can check that \(\mu\) restricted to \(\Var(\varphi_x)\)
satisfies \(\varphi_x\) when \(d\) is restricted to \(\mu(x)\).

To show that the problem is \(\NP\)-hard, we will describe a polynomial time
reduction from the \(\ONETHREESAT\) problem. The input for \(\ONETHREESAT\)
consists of a propositional formula \(\alpha = C_1 \wedge \cdots \wedge C_n\)
where each clause \(C_i\) (\(1 \leq i \leq n\)) is a disjunction of three
positive literals: \(p_{i,1}\), \(p_{i,2}\), and \(p_{i,3}\). The problem is to
determine if there is a truth assignment that makes \emph{exactly} one literal
true in each clause. This problem is known to be \(\NP\)-complete (\cite{GJ79}).

Given the propositional formula \(\alpha\), the reduction will output a rule
\(\varphi\) using functional \(\SRGX\) and a document \(d\) such that
\(\semd{\varphi}\) is non-empty if and only if \(\alpha\) is satisfiable. Let
\(V\) be the set of variables in \(\alpha\). In \(\varphi\) we use the variables
in \(V\) plus fresh variables \(c_i\) for \(i \in [1, n]\) and two extra
variables: \(T\) and \(F\). The rule \(\varphi\) consists of the following
extraction expressions:
\begin{itemize}
  \item \(T \cdot c_1 \cdot F\);
  \item \(c_i.(p_{i,1} \cdot c_{i + 1} \cdot p_{i,2} \cdot p_{i,3}) \vee
    (p_{i,2} \cdot c_{i + 1} \cdot p_{i,1} \cdot p_{i,3}) \vee (p_{i,3} \cdot
    c_{i + 1} \cdot p_{i,1} \cdot p_{i,2})\) for \(i \in [1, n - 1]\); and
  \item \(c_n.(p_{i,1} \cdot T \cdot \# \cdot F \cdot p_{i,2} \cdot p_{i,3})
    \vee (p_{i,2} \cdot T \cdot \# \cdot F \cdot p_{i,1} \cdot p_{i,3}) \vee
    (p_{i,3} \cdot T \cdot \# \cdot F \cdot p_{i,1} \cdot p_{i,2})\), where
    \(\#\) is a symbol in the alphabet.
\end{itemize}
Note that every \(\SRGX\) is functional.

The intuition behind the reduction is that every variable placed to the left of
the \(\#\) symbol would be assigned a true value, and every variable placed to
the right of the symbol would be assigned a false value. Notice that \(\varphi\)
can only be satisfied by the document \(d = \#\) and a mapping \(\mu\) such that
\(\mu(T) = (1, 1)\) and \(\mu(F) = (2,2)\). If \(\mu\) satisfies \(\varphi\), we
can make the following observations: (1) for every \(x \in V\), either \(\mu(x)
= (1,1)\) or \(\mu(x) = (2,2)\); and (2) for every \(i \in [1, n]\), there is
exactly one \(j \in \{1, 2, 3\}\) such that \(\mu(p_{i,j}) = (1,1)\). With these
observations in mind, it is easy to see that every satisfying mapping of
\(\varphi\) will correspond to a satisfying truth assignment of \(\alpha\) and
vice versa, thus proving the reduction correct.
\hfill\qed\medskip

\subsection*{Proof of Theorem~\ref{theo:treelike-ptime}}


In order to prove that \(\EVAL\) of sequential tree-like rules is in \(\PTIME\),
we will describe an algorithm that first does some polynomial-time preprocessing
of the input, and then runs in \textit{alternating logarithmic space}
(\(\ALOGSPACE\)), which is known to be equivalent to \(\PTIME\)
(\cite{papadimitriou1993computational}).

Let \(\varphi = \varphi_{x_0} \wedge x_1.\varphi_{x_1} \wedge \cdots \wedge
x_m.\varphi_{x_m}\) be a sequential tree-like rule with graph \(G_{\varphi}\),
let \(d\) be a document, and let \(\mu\) be a mapping. We assume, without loss
of generality, that for every variable \(x\) in \(\varphi\) there is an
extraction expression \(x.\varphi_x\) in \(\varphi\).

We may immediately reject in two cases: (1) \(\mu\) is not hierarchical; and (2)
there are variables \(x\) and \(y\) such that \(\mu(x) = \mu(y)\), the content
of \(\mu(x)\) is not empty, and there is no directed path in \(G_{\varphi}\)
that connects \(x\) and \(y\). These two cases can easily be checked in
polynomial-time, and will help us simplify the proceeding analysis.

For the purpose of this proof, we say that two variables \(x\) and \(y\) are
\textit{indistinguishable} if \(\mu(x) = \mu(y) = (i, i)\) for some \(i \in [1,
n + 1]\) and they are siblings in \(G_{\varphi}\); that is, there exists a
variable \(z\) such that \((z, x)\) and \((z, y)\) are edges in \(G_{\varphi}\).
The problem with these variables is that we cannot deduce from \(\mu\) and
\(\varphi\) the order in which they must be encountered when processing the
document. Therefore, we will coalesce each set of indistinguishable variables
into a single variable. This means removing these variables from the global set
of variables and replacing them with a single new variable that represents the
set. We refer to these new variables as \textit{coalesced variables}, and we
refer to mapping \(\mu\) updated to reflect this change as \(\mu'\).

By coalescing indistinguishable variable, however, we will be destroying the
subtrees rooted at them. Therefore, we must check that \(\mu\) agrees with this
subtrees. Let \(U\) be a maximal set of pairwise indistinguishable variables.
For each \(x \in U\) we perform the following ``emptiness'' check. Transform
\(\varphi_{x}\) into a variable automaton \(A_x\) and check that: (1) there is a
path from the initial state to the final state of \(A_x\) that uses only
\(\epsilon\)-transitions and variable operations; (2) this path opens and closes
every variable \(y\) such that \((x, y)\) is in \(G_{\varphi}\); (3) for every
variable \(y\) used in this path, either \(\mu(x) = \mu(y)\) or \(y \not\in
\dom(\mu)\); and (4) recursively perform the ``emptiness'' check on \(y\) and
\(\varphi_{y}\). This may be done in polynomial time by using similar techniques
to those shown on the proof of Theorem~\ref{theo:ptime-sequential-sVA}.

For this proof, we will use again the idea of \textit{labels} (defined in the
proof of Theorem~\ref{theo:ptime-sequential-sVA}). Notice that if we fix an
order \(\preceq_{\Op{}}\) of the variable operations and limit to those
variables in \(\dom(\mu)\), then there is a unique label for \((d, \mu)\) in
which consecutive variable operations are ordered according to
\(\preceq_{\Op{}}\). We denote this label \(L(d, \mu, \preceq_{\Op{}})\), and we
may compute it easily in polynomial-time.

In addition to the above, we say that a label \(\lambda\) is \textit{balanced}
if all of its opening and closing variable operations are correctly balanced
(like parentheses). It is clear that given a valid \((d, \mu)\), \(\mu\) is
hierarchical if and only if \((d, \mu)\) have at least one balanced label.

Now, notice that if we take into account \(\mu'\), \(G_{\varphi}\) and
indistinguishable variables, then there is a unique order in which variable
operations could be seen by the rule if the document is processed sequentially.
We will use this order as the order \(\preceq_{\Op{}}\), which we can compute as
follows. Let \(V = \{x \in \dom(\mu') \mid \mu(x) \neq \bot\}\) and consider the
induced subgraph \(T = G_{\varphi}[V]\). A node \(x\) in \(T\) precedes its
sibling \(y\) if \(\mu'(x) = (i, j)\), \(\mu'(y) = (k, l)\), and \(\min(i, j) <
\max(k, l)\). Since we coalesced indistinguishable variables, we know that there
is a unique way to put siblings in this order. Finally, the order can be
obtained by doing an ordered depth-first search on \(T\): when we enter a node
\(x\) we add \(\Open{x}\) to the output, when we finish processing the subtree
rooted at \(x\) we add \(\Close{x}\) to the output. With this in mind, we define
the document \(d' = L(d, \mu', \preceq_{\Op{}})\).

Next, we transform each sequential \(\SRGX\) \(\varphi_{x_i}\) into a
non-deterministic finite automaton \(A_{x_i} = (Q, q_0, q_f, \delta)\). For each
coalesced variable \(X\) that represents the set of indistinguishable variables
\(U\), we add a new state \(q_X\) and transitions \((p, \Open{X}, q_X)\) and
\((q_X, \Close{X}, q)\) if there is a path from \(p\) to \(q\) that uses only
\(\epsilon\)-transitions and variable transitions such that every variable in
set \(U\) is opened and closed in this path. This can be done in polynomial-time
because all expressions are sequential (the same way it was done on the proof of
Theorem~\ref{theo:ptime-sequential-sVA}).

Now, we run the alternating logarithmic space algorithm. We will have two
pointers: \(i_{curr}\) and \(i_{end}\). They will denote the part of the
document that we are considering at any given time, and will start as 1 and
\(|d'| + 1\) respectively. The algorithm works by traversing the automata
guessing transitions. Every time we choose a transition in \(A_x\) that opens
variable \(y\), we find the position \(i_{close}\) in \(d'\) where \(y\) is
closed (or guess it if \(y \not\in \dom(\mu')\)) and check two conditions in
parallel (by use of alternation): (1) \(A_y\) recursively accepts \((d', \mu')\)
on the interval \((i_{curr}, i_{close})\); and (2) \(A_x\) accepts \((d',
\mu')\) on the interval \((i_{close}, i_{end})\), continuing from the current
state. More specifically, the algorithm is the following:
\begin{enumerate}
  \item Set \(i_{curr}\) to 1, \(i_{end}\) to \(|d'| + 1\), and \(x_{curr}\) to \(x_0\).
  \item Let \(A_{x_{curr}}\) be \((Q, q_0, q_f, \delta)\).
  \item Set \(q_{curr}\) to \(q_0\).
  \item While \(q_{curr} \neq q_f\) and \(i_{curr} \leq i_{end}\):
    \begin{enumerate}
    \item Non-deterministically pick a transition \((q_{curr}, a, q_{next}) \in \delta\).
    \item If \(a = \epsilon\), set \(q_{curr}\) to \(q_{next}\) and continue.
    \item Else if \(a = \Open{x}\) for some variable \(x\) (that is not
      coalesced), do as follows. If \(x \in \dom(\mu')\), then check that \(a =
      a_{i_{curr}}\), then find the position \(i_{close}\) such that
      \(a_{i_{close}} = \Close{x}\). Else if \(x \not\in \dom(\mu')\), guess
      \(i_{close} \geq i_{curr}\) and set \(q_{next}\) to the state reached by
      following the \(\Close{x}\)-transition from the current \(q_{next}\). Do
      the following two things in parallel:
      \begin{itemize}
      \item Set \(i_{curr}\) to \(i_{close}\), \(q_{curr}\) to \(q_{next}\), and continue.
      \item Set \(i_{end}\) to \(i_{close}\), \(x_{curr}\) to \(x\), increment
        \(i_{curr}\), and go to step 2.
      \end{itemize}
    \item Else if \(a\) is \(a = a_{i_{curr}}\), then set \(q_{curr}\) to
      \(q_{next}\) and increment \(i_{curr}\).
    \item Otherwise, reject.
    \end{enumerate}
  \item If \(i_{curr} = i_{end}\), accept.
\end{enumerate}

Now we will sketch a proof of correctness. By the definition of the semantics of
rules, it is clear that there is a correspondence between mappings and a set of
runs for the automata that compose the rule. It is easy to see that the
algorithm described above will find accepting runs for each of the automatons
that correspond to variables instanced by the rule. These runs will correspond
to a mapping \(\nu\) which is an extension of \(\mu'\) and that can be easily be
transformed into an extension of \(\mu\) by separating the coalesced variables.
To see why the algorithm will accept whenever such a \(\nu\) exists, consider
the following. It can be proven without much difficulty that, given the nested
structure of tree-like rules and the plainness of sequential \(\SRGX\), the way
in which we ordered the variable operations in \(d'\) is the only way in which
they might be actually seen. The only case in which this is not true, is in the
case of indistinguishable variables, which we handled as a separate case.
Therefore, the algorithm will accept whenever there exists an extension to
\(\mu\) that satisfies \(\varphi\).
\hfill\qed\medskip

\subsection*{Proof of Theorem~\ref{theo:RGX-FPT}}


We know that \(\RGX\) can be transformed into equivalent variable automata in
polynomial-time. Therefore, we will only consider that case.

Let \(A\) be a variable automaton, \(d\) a document, \(\mu\) a mapping and \(k\)
the number of variables in \(A\), that is, \(k = |Var(A)|\). We can decide this
instance of the \(\EVAL[\VA]\) problem using the same reduction from the proof
of Theorem~\ref{theo:ptime-sequential-sVA}, but with two modifications.

First, we change the algorithm that decides if \((p, T_i, q) \in \delta'\), for
some given states \(p, q \in Q\) and \(i \in [1, n + 1]\). The original
algorithm will not work in this case because \(A\) might not be sequential.
Thus, now we iterate over all possible total orders over the set \(T_i\) (there
are \(|T_i|!\) such orders) and let \((t_1, \ldots, t_{|T_i|})\) be a sequence
with the elements of \(T_i\) according to that order. We give \((t_1, \ldots,
t_{|T_i|})\) as an additional input to the algorithm and proceed in a similar
way than before, but we keep an additional counter \(e\) with the current
position in the new sequence (we set \(e\) to 1 at the start). Whenever the
algorithm chooses a transition with a variable operation, it compares it with
\(t_e\): if it is the same, it increments \(e\); otherwise, it rejects. At the
target state \(q\) we accept if and only if \(e = |T_i| + 1\), which means we
saw all the variable operations of \(|T_i|\) exactly once. Notice that this
gives an algorithm that runs in time at most \(k! p(n)\), where \(p\) is a
polynomial.

Second, we slightly change the way we handle a variable \(x\) when \(x \not\in
\dom(\mu)\). Instead of replacing the variable operation transitions of \(x\)
with \(\epsilon\)-transitions, we preserve them as they are. In this part of the
algorithm, we will iterate over all valid sequences of variable operations in
\(\{\Open{x}, \Close{x} \mid x \in (\Var(A) \;\setminus\; \dom(\mu))\}\). We say
that a sequence of variable operations is \textit{valid} if, for every variable
\(x\): (1) the operations \(\Open{x}\) and \(\Close{x}\) appear at most once;
(2) if \(\Close{x}\) is in the sequence, then \(\Open{x}\) is in the sequence at
an earlier position. For example, \(\Open{x}, \Open{y}, \Close{x}, \Close{y}\)
and \(\Open{x}, \Open{z}, \Close{x}, \Open{y}\) would be two valid sequences of
operations for variables \(x, y, z\). Given a sequence of operations, the
modified automaton, and the modified document, the problem then reduces to
checking if the final state of the variable automaton is reachable from its
initial state, subject to the constraint that the chosen transitions must match
the sequence of operations and the document.

Formally, the algorithm would be the following. Let \(A' = (Q, q_0, q_f,
\delta')\) be the modified variable automaton, and let \(d' = a_1a_2 \cdots a_n\)
be the modified input document (the label). Throughout the algorithm we will
keep: the current position in the document, \(i_{doc}\); the current position in
the sequence of operations, \(i_{seq}\); and the current state \(q_{curr}\). For
every valid sequence of operations \(s_1, s_2, \ldots, s_m\) we proceed as
follows:
\begin{itemize}
\item Set \(q_{curr}\) to \(q_0\), \(i_{doc}\) to 1, and \(i_{seq}\) to 1.
\item While \(q_{curr} \neq q_f\):
  \begin{itemize}
  \item Non-deterministically pick a transition \((q_{curr}, a, q_{next}) \in
    \delta\) such that \(a = a_{i_{doc}}\) or \(a = s_{i_{seq}}\). If no such
    transition exists, then reject.
  \item Set \(q_{curr}\) to \(q_{next}\), and if \(a = a_{i_{doc}}\), increment
    \(i_{doc}\) by one; otherwise, increment \(i_{seq}\) by one.
  \end{itemize}
\item if \(i_{doc} = n + 1\), then accept; otherwise, reject.
\end{itemize}
If at any point the counters go ``out of bounds'', then we also reject. This
part of the algorithm will run in time at most \((2k)! q(n)\), for some
polynomial \(q\).

It is straightforward to prove that these modification will not alter the
correctness of the algorithm. Also, by combining the different parts of the
algorithm, we will get a total running time of \(k! p(n) + (2k)! q(n) + r(n)\)
where \(p, q, r\) are polynomials. This is in \(O(f(k) n^{c})\) for some
constant \(c\) and some function \(f\). Therefore, the problem is in \(\FPT\)
(\cite{Downey1999}).
\hfill\qed\medskip


\section{PROOFS FROM SECTION~\secref{sec:complexity}}

\subsection*{Proof of Theorem~\ref{theo:sat-general}}


First, we will prove that \(\sat[\VA]\) is in \(\NP\). In order to do this, we
prove a lemma that will limit the size of the documents we must consider.
\begin{lemma}
  Let \(A = (Q, q_0, q_f, \delta)\) be a \(\VA\). If \(A\) is satisfiable, then
  there exists a document of size at most \((2|\VV| + 1)|Q|\) that satisfies it.
\end{lemma}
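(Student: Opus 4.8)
The plan is to prove the lemma by a pumping argument on accepting runs. Since $A$ is satisfiable, there is at least one document $d$ together with an accepting run $\rho$ of $A$ over $d$. Among all such pairs $(d,\rho)$ I would fix one for which the length of $\rho$ (its number of transitions) is minimal; in particular $|d|$ is then also minimal, since $|d|$ equals the number of letter transitions used in $\rho$. The goal is then to show $|d| \le (2|\Var(A)|+1)|Q|$.

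First I would bound the number of variable transitions in $\rho$. By the definition of a run, every $x \in \Var(A)$ is opened at most once and closed at most once, so $\rho$ contains at most $2|\Var(A)|$ transitions of the form $(q,\Open{x},q')$ or $(q,\Close{x},q')$. Removing these variable transitions from $\rho$ cuts it into at most $2|\Var(A)|+1$ maximal \emph{blocks}, each block consisting solely of letter transitions (and, if the model admits them, $\varepsilon$-transitions). Since only letter transitions advance the current position, $|d|$ is exactly the total number of letter transitions across all blocks.

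Next comes the core step: I claim each block passes through at most $|Q|$ states and hence reads fewer than $|Q|$ letters. Suppose some block visited two configurations with the same state $q$. Then the portion of $\rho$ between them is a loop that I can excise: because this portion lies entirely inside a single block it contains no variable transitions, so deleting it — and correspondingly deleting the letters it consumed from $d$, yielding a shorter document $d'$ — leaves the sequence of opens and closes, together with their relative order, completely unchanged. The reconnected sequence is still an accepting run (it starts in $q_0$, ends in $q_f$, respects $\delta$ at the splice point since both endpoints carry state $q$, and its positions still form the required increasing sequence over $d'$) and it is strictly shorter, contradicting minimality. Hence no state repeats inside a block, so each block has at most $|Q|$ distinct states and thus at most $|Q|$ letter transitions. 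Summing over the at most $2|\Var(A)|+1$ blocks gives $|d| \le (2|\Var(A)|+1)|Q|$.

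The main obstacle, and the part deserving the most care, is verifying that the excision really produces a \emph{valid} accepting run in the sense of the $\VA$ semantics, not merely a path in the transition graph. The two conditions that could break are the requirement that variables be opened and closed correctly and the increasing-position requirement. Both are preserved precisely because the removed loop lives inside a single letter-transition block: it opens and closes no variable, so the multiset and ordering of variable events are untouched, and it corresponds to deleting a contiguous infix of $d$, so the positions of the surviving transitions still increase by exactly one on letter steps and cover $1$ through $|d'|+1$. If the underlying $\VA$ model admits $\varepsilon$-transitions, I would additionally minimise the number of transitions of $\rho$ (not only $|d|$), which rules out purely-$\varepsilon$ loops as well and keeps the ``no repeated state per block'' claim intact without affecting the document-length bound.
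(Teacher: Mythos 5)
Your proposal is correct and follows essentially the same route as the paper: a pumping argument observing that an accepting run uses at most \(2|\VV|\) variable transitions, so some letter-only segment longer than \(|Q|\) must repeat a state, and the loop can be excised without disturbing the variable operations or the validity of the run. Your write-up is somewhat more careful than the paper's about verifying that the excised run remains a valid accepting run, but the underlying idea and the resulting bound are identical.
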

The proof of this lemma follows a similar idea to the idea behind the pumping
lemma for regular languages (\cite{HopcroftU79}). Suppose the smallest document
\(d = a_1 \cdots a_n\) that satisfies \(A\) is of size greater than \((2|\VV| +
1)|Q|\), and let \(\mu\) be its corresponding mapping. Then, there must exist a
substring \(a_k \cdots a_l\) in \(d\) of size at least \(|Q| + 1\) inside which
\(A\) does not use any variable operations (since \(A\) can use at most
\(2|\VV|\) variable operations). Denote the state of \(A\) after processing
\(a_i\) as \(q_i\). Since \(A\) has \(|Q|\) states, there must exist \(i, j \in
[k, l]\) such that \(i < j\), \(q_i = q_j\), and \(|a_k \cdots a_ia_{j + 1}
\cdots a_l| \leq |Q|\). Because \(A\) does not use any variable operations in
this substring, it is clear that if \(A\) accepts \(d\) and \(\mu\), then it
will accept \(d' = a_1 \cdots a_i a_{j+1} \cdots a_n\) and \(\mu'\), where
\(\mu'\) is \(\mu\) with all the positions greater than \(j\) adjusted by \((j -
i)\). If we repeat this for all substrings of size greater than \(|Q|\) with no
variable operations, then the final document will have size at most \((2|\VV| +
1)|Q|\), contradicting our initial supposition. This proves the lemma.

A direct consequence of the previous lemma is that every satisfiable \(\VA\)
\(A\) has an accepting run that is at most polynomial in size with respect to
\(A\). Therefore, a \(\NP\) algorithm for \(\sat[\VA]\) is to simply guess a run
and check that it is an accepting run (which can easily be done in
polynomial-time).

Now, we prove that \(\sat[\SRGX]\) is \(\NP\)-hard. Notice that this implies
that \(\sat[\VA]\) and \(\sat\) of extractions rules are also \(\NP\)-hard.
Consider the proof of Theorem~\ref{theo:eval-hardness-general}. Notice that the
expression \(\gamma_\alpha\) is satisfiable if and only if it is satisfied by
document \(d = \varepsilon\), since \(\gamma_\alpha\) matches only empty
documents. Therefore, \(\ONETHREESAT\) can be reduced to \(\sat[\SRGX]\). Since
the former is \(\NP\)-hard, the latter is also \(\NP\)-hard. \hfill\qed\medskip

\subsection*{Proof of Theorem~\ref{theo:sat-sequential}}


Let \(A = (Q, q_0, q_f, \delta)\) be a sequential variable automata. Notice that
any sequential path from \(q_0\) to \(q_f\) corresponds to an accepting run,
because sequential paths respect the correct use of variables. Since \(A\) is
sequential, finding an accepting run for \(A\) is as easy as finding a path from
\(q_0\) to \(q_f\). This problem is equivalent to the problem of reachability on
graphs, which is in \(\NLOGSPACE\).
\hfill\qed\medskip

\subsection*{Proof of Theorem~\ref{theo:sat-rules}}


Consider the proof of Theorem~\ref{theo:eval-daglike-npcomplete}. Notice that the
rule \(\varphi\) in this proof is satisfiable if and only if it is satisfied by
the document \(d = \#\), since \(\varphi\) matches only one \(\#\) symbol.
Therefore, \(\ONETHREESAT\) can be reduced to \(\sat\) of functional dag-like
rules in polynomial time. Since the former problem is \(\NP\)-hard, the latter
must also be \(\NP\)-hard. \hfill\qed\medskip

\subsection*{Proof of Theorem~\ref{theo:contain-general}}


As previously stated, it is easy to see that regular expressions are a subset of
\(\RGX\), and it is known that the containment problem for regular expressions
is \(\PSPACE\)-hard. Therefore, we will only prove that \(\containment[\VA]\) is
in \(\PSPACE\).

Let \(A_1 = (Q_1, q_1^0, q_1^f, \delta_1)\) and \(A_2 = (Q_2, q_2^0, q_2^f,
\delta_2)\) be two variable automata. We will prove that deciding if
\(\semd{A_1} \subseteq \semd{A_2}\) for every document \(d\) is in \(\PSPACE\)
by describing a non-deterministic algorithm that decides its complement. The
algorithm will attempt to prove that there exists a counterexample, that is, a
document \(d\) and a mapping \(\mu\) such that \(\mu \in \semd{A_1}\) and \(\mu
\not\in \semd{A_2}\). At every moment, we will have sets \(S_1 \subseteq Q_1\)
and \(S_2 \subseteq Q_2\) that will hold the possible states in which \(A_1\)
and \(A_2\) might be. We will also have sets \(V\) and \(Y\) which will hold the
available and open variables respectively.

Assume, without loss of generality, that \(\VV = \Var(A_1) = \Var(A_2)\) and
\(\OO = \Op(A_1) = \Op(A_2)\). We define the \(\epsilon\)-closure of a state
\(q\), denoted \(E(q)\), as the set of states reachable from \(q\) by using only
\(\epsilon\)-transitions (including \(q\)). Similarly, we define \(S(q, a) =
\{q' \mid (q, a, p) \in \delta \text{ and } q' \in E(p)\}\), where \(a \in
(\Sigma \cup \OO)\) and \(\delta\) is the relevant transition relation. Given a
set of states \(R\), we define \(E(R) = \bigcup_{q \in R} E(q)\) (and \(S(R)\)
analogously). Lastly, we define \(S(R, aw) = S(S(R, a), w)\), where \(w \in
(\Sigma \cup \OO)^*\).

The algorithm proceeds as follows:
\begin{enumerate}
\item Set \(S_1\) to \(E(q_1^0)\), set \(S_2\) to \(E(q_2^0)\), set \(V\) to
  \(\VV\), and set \(Y\) to \(\emptyset\).
\item If \(q_1^f \in S_1\) and \(q_2^f \not\in S_2\), then accept. Otherwise,
  guess either an element \(a\) from \(\Sigma\) or a set of variable operations
  \(P \subseteq \OO\).
\item If the algorithm guessed \(a \in \Sigma\) then:
  \begin{enumerate}
  \item Set \(S_1\) to \(S(S_1, a)\) and \(S_2\) to \(S(S_2, a)\).
  \item Go to step 2.
  \end{enumerate}
\item If the algorithm guessed a set \(P\) of variable operations, then:
  \begin{enumerate}
  \item Check that \(P\) is compatible with \(V\) and \(Y\). If they are, the
    update \(V\) and \(Y\) accordingly; if not, reject.
  \item Let \(\Perm(P)\) be the set of all strings that are permutations of \(P\).
  \item Set \(S_i\) to \(\bigcup_{w \in \Perm(P)} S(S_i, w)\) for \(i \in \{1, 2\}\).
  
  \item Go to step 2.
  \end{enumerate}
\end{enumerate}
It is clear that this algorithm uses only polynomial-space, since we are only
guessing strings of polynomial size, and storing information about variables and
states.

Now we prove that the algorithm is correct. Notice that if the algorithm
accepts, then there exists strings \(w_1\) and \(w_2\) differing only on the
ordering of consecutive variable operations, such that \(q_1^f \in S(E(q_1^0),
w_1)\) and \(q_2^f \not\in S(E(q_2^0), w_2)\). Moreover, \(q_1^f \in S(E(q_1^0),
w_1)\) if and only if there exists a document \(d\) and mapping \(\mu\) such
that \(\mu \in \semd{A_1}\). Since \(w_1\) and \(w_2\) generate the same
document-mapping pairs, and the algorithm tries all the possible permutations of
consecutive variable operations,, it is clear that there is no accepting run in
\(A_2\) with label \(w_2\). Therefore \(\mu \not\in \semd{A_2}\).
\hfill\qed\medskip

\subsection*{Proof of Proposition~\ref{prop:determinize}}


Let \(A = (Q, q_0, q_f, \delta)\) be a variable automaton. We will determinize
\(A\) by using the classical method of \textit{subset construction}. Without
loss of generality, we will allow a set of final states instead of a single
final state. We reuse the definitions of \(E(q)\) and \(S(q)\) from the proof of
Theorem~\ref{theo:contain-general}.

We define the deterministic variable automaton \(A^{det} = (Q', q_0', F',
\delta')\) as follows. Let \(Q' = 2^Q\), \(q_0' = E(q_0)\), \(F' = \{ P \in Q'
\mid q_f \in P \}\). The transition \((P, a, P') \in \delta'\) if and only if
\(P' = \bigcup_{q \in P} S(q, a)\).

Now we will prove that for every document \(d\) and mapping \(\mu\), \(\mu \in
\semd{A}\) if and only if \(\mu \in \semd{A^{det}}\). Let \(\rho\) be an
accepting run for \(d\) and \(\mu\) on \(A\). Then it is easy to prove by
induction that \(\rho\) can be mapped to an accepting run \(\rho'\) in
\(A^{det}\). For the base case, we have that \(q_0 \in q_0'\). For the inductive
case, consider that \(\rho\) uses transition \((p, a, p')\), and that the last
state we appended to \(\rho'\) is \(P\): if \(a = \epsilon\) then \(p' \in P\)
and we do nothing to \(\rho'\); if \(a \in (\Sigma \cup Op)\) then there exist
\((P, a, P') \in \delta'\) such that \(p' \in P'\), so we add \(P'\) to
\(\rho'\). Since \(\rho'\) uses the same transitions as \(\rho\) (except for
\(\epsilon\)-transitions), \(A^{det}\) will also accept \(d\) and \(\mu\).

Now consider the opposite direction: if there is an accepting run \(\rho'\) in
\(A^{det}\), then there is an accepting run \(\rho\) in \(A\). This is also
easily proved with induction. In this case the inductive hypothesis is that if
there exists a path from \(P\) to \(P'\) using a certain sequence of symbols and
variable operations, then for all \(p' \in P'\) there exists \(p \in P\) such
that there is a path from \(p\) to \(p'\) using the same sequence of symbols and
operations. For the base case we have that \(E(q_0) = q_0'\), so it is trivial.
For the inductive case, consider that \(\rho'\) uses transition \((P, a, P')\).
Consider some state \(p' \in P'\). By definition, there is some state \(q \in
P'\) such that \(p' \in E(q)\) and there exists a state \(p \in P\) such that
\((p, a, q) \in \delta\). By composing the different paths between states, we
get the path that proves our hypothesis. By considering the last state in
\(\rho'\) then, we can build an accepting run \(\rho\).
\hfill\qed\medskip

\subsection*{Proof of Theorem~\ref{theo:contain-det}}


Let \(A_1 = (Q_1, q_1^0, q_1^f, \delta_1)\) and \(A_2 = (Q_2, q_2^0, q_2^f,
\delta_2)\) be deterministic variable automata. Assume, without loss of
generality, that \(\OO =\Op(A_1) = \Op(A_2)\) and \(\VV = \Var(A_1) =
\Var(A_2)\). We will prove the theorem by showing that the complement of this
problem is in \(\Sigma_2^P\). We describe an algorithm that will accept if
there exists a document \(d\) and mapping \(\mu\) such that \(\mu \in
\semd{A_1}\) and \(\mu \not\in \semd{A_2}\). We will use the fact that when we
fix some linear order \(\preceq_{\Op{}}\) over the variable operations, then
there is a unique label \(\lambda\) to each document-mapping pair \((d, \mu)\),
denoted \(L(d, \mu, \preceq_{\Op{}})\).

First, we guess a document \(d\), a mapping \(\mu\), and a linear order
\(\preceq_{\Op{}}^1\) over \(\OO\). Then, for all linear orders
\(\preceq_{\Op{}}^2\) over \(\OO\), we compute the label \(\lambda_1 = L(d,
\mu, \preceq_{\Op{}}^1)\) and the label \(\lambda_2 = L(d, \mu,
\preceq_{\Op}^2)\), and finally, we check if there is a run in \(A_i\) that has
\(\lambda_i\) as a label, for \(i \in \{1, 2\}\). This is equivalent to
checking if a deterministic finite automaton accepts a word, and therefore it
can be done in polynomial time. If \(A_1\) accepts \(\lambda_1\) and \(A_2\)
rejects \(\lambda_2\), then we accept; otherwise, we reject.

It is straightforward to prove that the algorithm is correct. Therefore, it
only remains to show that the guessed document \(d\) is of polynomial size
(since that will determine the size and running time of the rest). This can be
done by using the same ``pumping lemma'' argument from the proof of
Theorem~\ref{theo:sat-general}. In this case, the substrings without variable
operations will be of size at most \(|Q_1| \cdot |Q_2|\); if its longer, then
there are indices \(i\) and \(j\) such that the pair of states of \(A_1\) and
\(A_2\) will be the same at position \(i\) and \(j\), and therefore we can
shorten the substring by removing the characters in between. Therefore, we only
need to consider documents of size at most \((2|\VV| + 1)|Q_1||Q_2|\).

Now we prove that for deterministic sequential variable automata \(A_1, A_2\)
the problem is in \(\coNP\). As in the previous case, we show that the
complement of the problem is in \(\NP\). To do this, we guess a document \(d\)
and a mapping \(\mu\) and then check that \(\mu \in \semd{A_1}\) and \(\mu
\not\in \semd{A_2}\). This is the \(\mdcheck\) problem, which is a special case
of the \(\EVAL\) problem, and since \(A_1\) and \(A_2\) are sequential,
Theorem~\ref{theo:ptime-sequential-sVA} guarantees that we can check this in
polynomial time. The same argument made in the previous case for the size of
the document \(d\) applies here.

It only remains to prove that \(\containment\) of deterministic sequential
variable automata is \(\coNP\)-hard. For this we will describe a polynomial-time
reduction from the \textit{disjunctive normal form validity} problem. The
problem consists in determining whether a propositional formula \(\varphi\) in
disjunctive normal form is valid, that is, all valuations make \(\varphi\) true.
We may assume, without loss of generality, that every clause in \(\varphi\) has
exactly three literals. This problem is known to be \(\coNP\)-complete, since it
can be easily shown to be the complement of the \textit{conjunctive normal form
  satisfiability} problem.

Let \(\varphi = C_1 \vee \cdots \vee C_m\) be a propositional formula in
disjunctive normal form with propositional variables \(\{p_1, \ldots, p_n\}\),
and let \(C_i = l_{i, 1} \wedge l_{i, 2} \wedge l_{i, 3}\) (\(i \in [1, m]\)),
where each \(l_{i, j}\) is a literal. We will describe the procedure for
constructing automata \(A_1 = (Q_1, q_1^0, q_1^f, \delta_1)\) and \(A_2 = (Q_2,
q_2^0, q_2^f, \delta_2)\). The construction will only use variable operation
transitions so, in order to simplify the construction, we use transitions of the
form \((p, x, q)\) to represent a ``gadget'' that opens and closes variable
\(x\) in succession, that is, a new state \(r\) and the transitions \((p,
\Open{x}, r)\) and \((r, \Close{x}, q)\). For the automata, we are going to use
variables \(p_1, \ldots, p_n\) to represent positive literals; \(\overline{p_1},
\ldots, \overline{p_n}\) to represent negative literals; and \(c_1, \ldots,
c_m\) to represent clauses. Thus, we have a total of \(2n + m\) variables.

The automaton \(A_1\) is will consist of a long chain with two parts. In the
first part, states are joined with two parallel transitions \(p_i\) and
\(\overline{p_i}\), for every propositional variable \(p_i\). This forces the
automaton to choose a valuation for the propositional variables. The second part
consists of a path with all the clause variables \(c_i\). This will make the
automaton compatible with \(A_2\). Formally, \(A_1\) is defined as follows:
\begin{gather*}
  Q_1 = \{r_1, \ldots, r_{n + m + 1}\} \qquad
  q_1^0 = r_1 \qquad
  q_1^f = r_{n + m + 1} \\
  \delta_1 = \{(r_i, p_i, r_{i + 1}), (r_i, \overline{p_i}, r_{i + 1}) \mid i
  \in [1, n]\} \;\cup\; \{(r_{n + i}, c_i, r_{n + 1 + i}) \mid i \in [1, m]\}
\end{gather*}

The automaton \(A_2\) will consist of \(m\) independent branches, each one
representing a clause. Each branch has three parts. The first part starts with
the clause variable \(c_i\), and then follows with the variables corresponding
with the literals in \(C_1\). In the second part, states are joined with two
parallel transitions \(p_j\) and \(\overline{p_j}\), for every propositional
variable \(p_j\) not used in \(C_i\). The third part consists of a path with all
the clause variables \(c_k\) such that \(i \neq k\). Formally, for \(i \in [1,
m]\) the branch corresponding to clause \(C_i\) in \(A_2\) is defined as
follows:
\begin{gather*}
  Q_{2, i} = \{s_{i, 1}, \ldots, s_{i, n + m + 1}\} \qquad
  q_{2, i}^0 = s_{i, 1} \qquad
  q_{2, i}^f = s_{i, n + m + 1} \\
  \begin{split}
    \delta_{2, i} &= \{(s_{i, 1}, c_i, s_{i, 2}), (s_{i, 2}, l_{i, 1}, s_{i,
      3}),
    (s_{i, 3}, l_{i, 2}, s_{i, 4}), (s_{i, 4}, l_{i, 3}, s_{i, 5})\} \\
    &\,\cup\; \{(s_{i, 4 + j}, p'_j, s_{i, 5 + j}), (s_{i, 4 + j},
    \overline{p'_j}, s_{i, 5 + j}) \mid\\
    &\qquad j \in [1, n - 3] \text{ and } p'_1,
    \ldots, p'_{n - 3} \text{ are the variables not in } C_i\} \\
    &\,\cup\; \{(s_{i, n + 1 + j}, c_j, s_{i, n + 1 + j}) \mid j \in [1, m]\}
  \end{split}
\end{gather*}
Finally, we define \(Q_2 = \bigcup_{i \in [1, m]} Q_{2, i}\) and \(\delta_2 =
\bigcup_{i \in [1, m]} \delta_{2, i}\). We fuse the initial states of each
branch into a single state \(q_2^0\) and fuse the final states of each branch
into a single state \(q_2^f\).

Now we prove that \(\semd{A_1} \subseteq \semd{A_2}\) for every document \(d\)
if and only if \(\varphi\) is valid. First, notice that we need only consider
\(d = \epsilon\), since this is the only document that may satisfy \(A_1\) and
\(A_2\). First, it is easy to see that each mapping \(\mu\) corresponds to a
valuation \(\nu\), namely, by considering \(\nu(p) = 1\) if \(p \in \dom(\mu)\),
and \(\nu(p) = 0\) if \(\overline{p} \in \dom(\mu)\). The automaton \(A_1\) will
accept the set of mappings that correspond to all possible valuations over
\(p_1, \ldots, p_n\). It is also easy to see that the branch \(i\) in \(A_2\)
will accept mapping \(\mu\) if and only if \(\mu\) corresponds to a valuation
that satisfies clause \(C_i\). Therefore, if \(\semd{A_1} \subseteq
\semd{A_2}\), then \(A_2\) accepts the mappings corresponding to all possible
valuations. This means that for each valuation \(\nu\) there is a clause in
\(\varphi\) satisfied by \(\nu\), which means that \(\varphi\) is valid.
\hfill\qed\medskip

 \subsection*{Proof of Theorem~\ref{theo:contain-point-disjoint}}



 Consider $A$, a deterministic sequential $\VA$ that produces point-disjoint
 mappings. Notice that given a document \(d\) and a mapping \(\mu\), such that
 \(\mu \in \semd{A}\), there is exactly one accepting run of \(A\) over \(d\)
 that produces \(\mu\). This follows from \(\mu\) being point-disjoint and \(A\)
 closing all the variables it opens, which means that variable operations can
 only occur in a specific order; and \(A\) being deterministic, which means that
 at every step there is only one choice that \(A\) can take. It is also easy to see
 that the sequence of symbols and variable operations of this run are the same of
 an accepting run of \(d, \mu\) on any other automaton with these properties.

 With this in mind, we describe an algorithm that decides the complement of this
 problem in \(\NLOGSPACE\). That is, given \(A_1\) and \(A_2\), two deterministic
 sequential \(\VA\) that produce point-disjoint mappings, the algorithm accepts
 if there exists a document \(d\) and mapping \(\mu\) such that \(\mu \in
 \semd{A_1}\) and \(\mu \not\in \semd{A_2}\).

 The algorithm simply consists of running \(A_1\) and \(A_2\) in parallel,
 guessing at every step the next transition. If at any moment \(A_1\) is at an
 accepting state and \(A_2\) is not, then we accept. We only need to remember the
 current and next state of \(A_1, A_2\), and the current transition we are
 guessing, all of which takes logarithmic space. Sequentiality guarantees us that
 the runs will always be valid. The argument from the first paragraph guarantees
 us that the algorithm is correct, since the sequence of operations that made
 \(A_1\) accept is the only one that could have made \(A_2\) accept the same
 document and mapping.
 \hfill\qed\medskip




\end{document}